% pdflatex main
% bibtex8 main
% pdflatex main
% pdflatex main

\documentclass[11pt]{article}

\usepackage{algorithm}
\usepackage{algpseudocode}
\usepackage[normalem]{ulem}
\algtext*{EndWhile}% Remove "end while" text
\algtext*{EndIf}% Remove "end if" text
\algtext*{EndFor}% Remove "end if" text

\newcommand{\rr}{\mathbb{R}}
\newcommand{\nn}{\mathbb{N}}

\newcommand{\ra}{\rightarrow}

\newcommand{\calT}{\mathcal{T}}
\newcommand{\calO}{\mathcal{O}}
\newcommand{\expect}[1]{\mathbb{E}\left[#1\right]}
\newcommand{\Gs}{G^*}

\usepackage[utf8]{inputenc}
\usepackage[T1]{fontenc}
\usepackage{lmodern}
\usepackage[DIV=11]{typearea} % large value = less margin (roughly speaking), recommended values: 10pt: 8, 11pt: 10, 12pt: 12

\usepackage{microtype}

\usepackage{amssymb}
\usepackage{amsmath}
\usepackage{amsthm}
\usepackage{thmtools}

\usepackage{tikz}

\usepackage{xcolor}
\usepackage{xspace}
\usepackage{paralist}
\usepackage{comment}

\usepackage[
	style=alphabetic,
	backref=true,
	doi=false,
	url=false,
	maxcitenames=3,
	mincitenames=3,
	maxbibnames=10,
	minbibnames=10,
	backend=bibtex8,
	sortlocale=en_US,
	sorting=nyt,
	bibencoding=ascii
]{biblatex}

% Load hyperref and cleveref as last packages, in this order
\usepackage[ocgcolorlinks]{hyperref} % Option ocgcolorlinks makes links only colored on screen and not on printouts
%\usepackage{cleveref}

% Custom (non-math) commands
% Multiple "thanks" that refer to the same text
% http://tex.stackexchange.com/a/4171
\newcommand*\samethanks[1][\value{footnote}]{\footnotemark[#1]}

% Bold math in bold text
% http://tex.stackexchange.com/a/124311
\makeatletter
\g@addto@macro\bfseries{\boldmath}
\makeatother

% Using the new macro above, we also have to "undo" boldmath for mdseries. Some macros don't use the clean \mdseries to "unbold" text (e.g., amsmath's plain style uses \upshape); thus, it is necessary to modify other macros as well.
\makeatletter
\g@addto@macro\mdseries{\unboldmath}
\g@addto@macro\normalfont{\unboldmath}
\g@addto@macro\rmfamily{\unboldmath}
\g@addto@macro\upshape{\unboldmath}
\makeatother

% A command for "manual" citations which can be useful for providing citations in the abstract of a paper.
% \citem[Text]{key} will insert the citation [Text] and refer to the specified key in the biblography
\DeclareCiteCommand{\citem}
    {}
    {\mkbibbrackets{\bibhyperref{\usebibmacro{postnote}}}}
    {\multicitedelim}
    {}

% Change "plus" symbol for multi-author citations in alphabetic style
% http://tex.stackexchange.com/a/130031

% Change separator for multiple citations to a comma (instead of semicolon)
\renewcommand*{\multicitedelim}{\addcomma\space}

% Remove field "note" from full citations
\AtEveryCitekey{\clearfield{note}}

% Make title in bibliography a clickable link using doi and url information
% http://tex.stackexchange.com/a/207028
%\newcommand{\doiorurl}{%
%  \iffieldundef{doi}
%    {\iffieldundef{url}
%       {}
%       {\strfield{url}}}
%    {http://dx.doi.org/\strfield{doi}}%
%}
%
%\newcommand{\myhref}[1]{%
% \ifboolexpr{%
%   test {\ifhyperref}
%   and
%   not test {\iftoggle{bbx:url}}
%   and
%   not test {\iftoggle{bbx:doi}}
%  }
%  {\href{\doiorurl}{#1}}
%  {#1}%
%}

\newcommand{\myhref}[1]{%
  \iffieldundef{doi}
    {\iffieldundef{url}
       {#1}
       {\href{\strfield{url}}{#1}}}
    {\href{http://dx.doi.org/\strfield{doi}}{#1}}%
}

\DeclareFieldFormat{title}{\myhref{\mkbibemph{#1}}}
\DeclareFieldFormat
  [article,inbook,incollection,inproceedings,patent,thesis,unpublished]
  {title}{\myhref{\mkbibquote{#1\isdot}}}

% Enable line breaks for long links using ocgcolorlinks
% http://tex.stackexchange.com/a/47309
\makeatletter
\AtBeginDocument{%
    \newlength{\temp@x}%
    \newlength{\temp@y}%
    \newlength{\temp@w}%
    \newlength{\temp@h}%
    \def\my@coords#1#2#3#4{%
      \setlength{\temp@x}{#1}%
      \setlength{\temp@y}{#2}%
      \setlength{\temp@w}{#3}%
      \setlength{\temp@h}{#4}%
      \adjustlengths{}%
      \my@pdfliteral{\strip@pt\temp@x\space\strip@pt\temp@y\space\strip@pt\temp@w\space\strip@pt\temp@h\space re}}%
    \ifpdf
      \typeout{In PDF mode}%
      \def\my@pdfliteral#1{\pdfliteral page{#1}}% I don't know why % this command...
      \def\adjustlengths{}%
    \fi
    \ifxetex
      \def\my@pdfliteral #1{}% isn't equivalent to this one
      \def\adjustlengths{\setlength{\temp@h}{-\temp@h}\addtolength{\temp@y}{1in}\addtolength{\temp@x}{-1in}}%
    \fi%
    \def\Hy@colorlink#1{%
      \begingroup
        \ifHy@ocgcolorlinks
          \def\Hy@ocgcolor{#1}%
          \my@pdfliteral{q}%
          \my@pdfliteral{7 Tr}% Set text mode to clipping-only
        \else
          \HyColor@UseColor#1%
        \fi
    }%
    \def\Hy@endcolorlink{%
      \ifHy@ocgcolorlinks%
        \my@pdfliteral{/OC/OCPrint BDC}%
        \my@coords{0pt}{0pt}{\pdfpagewidth}{\pdfpageheight}%
        \my@pdfliteral{F}% Fill clipping path (the url's text) with
                           % current color
        %
        \my@pdfliteral{EMC/OC/OCView BDC}%
        \begingroup%
          \expandafter\HyColor@UseColor\Hy@ocgcolor%
          \my@coords{0pt}{0pt}{\pdfpagewidth}{\pdfpageheight}%
          \my@pdfliteral{F}% Fill clipping path (the url's text)
                             % with \Hy@ocgcolor
        \endgroup%
        \my@pdfliteral{EMC}%
        \my@pdfliteral{0 Tr}% Reset text to normal mode
        \my@pdfliteral{Q}%
      \fi
      \endgroup
    }%
}
\makeatother

% Shortcut commands
% \input{shortcuts}

% Macro for full and short version
\newif\iffullversion
\newcommand{\infull}[1]{\iffullversion #1\fi}

\fullversiontrue

% Bibliography
\addbibresource{literature.bib}

% TikZ
\usetikzlibrary{arrows,shapes}

% Hyperref options
\colorlet{DarkRed}{red!50!black}
\colorlet{DarkGreen}{green!50!black}
\colorlet{DarkBlue}{blue!50!black}

\hypersetup{
	linkcolor = DarkRed,
	citecolor = DarkGreen,
	urlcolor = DarkBlue,
	bookmarks = true,
	bookmarksnumbered = true,
	linktocpage = true
}

% Theorem environments
\declaretheorem[numberwithin=section]{theorem}
\declaretheorem[numberlike=theorem]{lemma}
\declaretheorem[numberlike=theorem]{claim}

\declaretheorem[numberlike=theorem]{fact}
\declaretheorem[numberlike=theorem]{corollary}
\declaretheorem[numberlike=theorem]{definition}

\newenvironment{pfof}[1]{\begin{proof}[\textbf{Proof of #1: }]}{\end{proof}}

% Algorithm2e
%\DontPrintSemicolon
%\SetAlCapFnt{\normalfont\scshape}
%\SetProcNameSty{textsc}
%\SetFuncSty{textsc}

% Titlepage information
\title{Graph Minors for Preserving Terminal Distances Approximately -- Lower and Upper Bounds\thanks{An extended abstract will appear in Proceedings of ICALP 2016.}}

\author{
Yun Kuen Cheung\thanks{University of Vienna, Faculty of Computer Science, Vienna, Austria.}
\and Gramoz Goranci\samethanks[2]
\and Monika Henzinger\samethanks[2]
}

\date{}

\hypersetup{
	pdftitle = {title},
	pdfauthor = {authors}
}

% Main document
\begin{document}
\maketitle

\begin{abstract}
%We study the following problem:
Given a graph where vertices are partitioned into $k$ terminals and non-terminals,
the goal is to compress the graph (i.e., reduce the number of non-terminals) using minor operations while preserving terminal distances approximately.
The distortion of a compressed graph is the maximum multiplicative blow-up of distances between all pairs of terminals.
We study the trade-off between the number of non-terminals and the distortion.
This problem generalizes the Steiner Point Removal (SPR) problem, in which all non-terminals must be removed.

We introduce a novel black-box reduction to convert any lower bound on distortion for the SPR problem
into a super-linear lower bound on the number of non-terminals, with the same distortion, for our problem.
%Precisely, given any graph $\Gs$ such that removing all its non-terminals will lead to a distortion of at least $\alpha$,
%we can use $\Gs$ to construct a new graph $G$ such that every minor of $G$ with distortion less than $\alpha$
%must have $\Omega(k^{1+\delta(G')})$ non-terminals, for some constant $\delta(G') > 0$.
This allows us to show that there exist graphs such that every minor with distortion less than $2~/~2.5~/~3$
must have $\Omega(k^2)~/~\Omega(k^{5/4})~/~\Omega(k^{6/5})$ non-terminals, plus more trade-offs in between.
The black-box reduction has an interesting consequence: if the tight lower bound on distortion for the SPR problem is super-constant,
then allowing any $\calO(k)$ non-terminals will \emph{not} help improving the lower bound to a constant.

We also build on the existing results on spanners, distance oracles and connected 0-extensions
to show a number of upper bounds for general graphs, planar graphs, graphs that exclude a fixed minor and bounded treewidth graphs.
Among others, we show that any graph admits a minor with $\calO(\log k)$ distortion and $\calO(k^{2})$ non-terminals,
and any planar graph admits a minor with $1+\varepsilon$ distortion and $\widetilde{\mathcal{O}}((k/\varepsilon)^{2})$ non-terminals.
\end{abstract}

%This template was first used in~\cite{HKL15}.

\section{Introduction}

\emph{Graph compression} generally describes a transformation of a \emph{large} graph $G$ into a \emph{smaller} graph $G'$ that preserves,
either exactly or approximately, certain features (e.g., distance, cut, flow) of $G$.
Its algorithmic value is apparent, since the compressed graph can be computed in a preprocessing step of an algorithm,
so as to reduce subsequent running time and memory.
Some notable examples are graph spanners, distance oracles and cut/flow sparsifiers.

In this paper, we study compression using minor operations, which has attracted increasing attention in recent years.
Minor operations include vertex/edge deletions and edge contractions.
It is naturally motivated since it preserves certain structural properties of the original graph, e.g., any minor of a planar graph remains planar, while
reducing the size of the graph.
%, and generally the same holds for graphs that exclude a fixed minor.
We are interested in \emph{vertex sparsification}, where $G$ has a designated subset $T$ of $k$ vertices called the \emph{terminals},
and the goal is to reduce the number of non-terminals in $G'$ while preserving some feature among the terminals.
Recent work in this field studied preserving cuts and flows.
%Recently, there are a number of beautiful results for the features of cuts and flows.
%\YKC{Gramoz, you will write something here about cut/flow sparsifiers (or in the related work section).}
Our focus here is on preserving terminal distances approximately in a multiplicative sense, i.e.,
we want that for any terminals $t,t'$, $d_G(t,t') \leq d_{G'}(t,t') \leq \alpha\cdot d_G(t,t')$, for a small \emph{distortion} $\alpha$.
This problem, called {\em Approximate Terminal Distance Preservation (ATDP) problem}, has natural applications in
% end system
multicast routing~\cite{ChuRZ2000} and network traffic optimization~\cite{ScharfWZ2015}.
It was also suggested in~\cite{distancepreserving} that to solve the \emph{subset travelling salesman problem},
one can compute a compressed minor with a small distortion as a preprocessing step for algorithms
that solve the travelling salesman problem for planar graphs.
%Krauthgamer et al. suggested that a compressed graph minor with a small distortion can be used as a preprocessing step
%for algorithms (which work for planar graphs or graphs excluding a fixed minor)
%that solve the \emph{subset travelling salesman problem}.% with dynamically changing subsets.

ATDP was initiated by Gupta~\cite{gupta01}, who introduced the related {\em Steiner Point Removal (SPR) problem}:
Given a tree $G$ with both terminals and non-terminals,
output a weighted tree $G'$ {\em with terminals only} which minimizes the distortion.
Gupta gave an algorithm that achieves a distortion of $8$.
Chan et al.~\cite{chan} observed that Gupta's algorithm returned always  a minor of $G$.
%and they showed that it is essentially the best possible, by exhibiting a sequence of trees with minimum distortion $8-o(1)$.
%On the other hand, Krauthgamer et. al.~\cite{distancepreserving} showed that by allowing $2k-2$ vertices in $G'$,
%all terminal distances can be preserved exactly, i.e., the distortion is $1$.
For general graphs, Kamma et al.~\cite{KammaKN2015} gave an algorithm to construct a minor
% to construct a minor $G'$ on terminals only, i.e., $|V(G')| = |T|$, 
with distortion $\calO(\log^5 k)$.
Krauthgamer et al.~\cite{distancepreserving} studied ATDP and
showed that every graph has a minor with $\calO(k^4)$ non-terminals and distortion $1$.
It is then natural to ask, for different classes of graphs,
what the trade-off between the distortion and the number of non-terminals is.
In this paper, for different classes of graphs, and w.r.t.~different allowed distortions,
we provide lower and upper bounds on the number of non-terminals needed.

%\YKC{Then we define \emph{formally} our problem, and state all our results formally in the ``Preliminary'' section.}

%\smallskip

\subsection*{Further Related Work}

Basu and Gupta~\cite{BasuG2008} showed that for outer-planar graphs,
% (which strictly contains the class of trees, but is a proper subset of the class of planar graphs),
SPR can be solved with distortion $\calO(1)$.
When randomization is allowed, Englert et al.~\cite{englert10} showed that for graphs that exclude a fixed minor,
one can construct a randomized minor  for SPR with $\calO(1)$ expected distortion.
It remains open whether similar guarantees can be obtained in the deterministic setting.
%
%Our problem generalizes the SPR problem.
Krauthgamer et al.~\cite{distancepreserving} showed that solving ATDP with distortion $1$ for planar graphs
needs $\Omega(k^2)$ non-terminals.
%Indeed, if $G'$ is \emph{dominating} (i.e., for any two vertices $u_1,u_2$ retained in $G'$, $d_{G'}(u_1,u_2) \geq d_G(u_1,u_2)$),
%there exist planar graphs that improve the lower bound to $\Omega(k^4)$.
%Thee also showed that for graphs with bounded treewidth $p$, the two bounds are $\Omega(p|T|)$ and $\calO(p^3 |T|)$.

In the past few years, there has been a considerable amount of work on cut/flow vertex sparsifiers~\cite{moitra09, leighton, charikar, mm10, englert10, juliasteiner, andoni, racke2014}. In this setting, given a capacitated graph $G$ with terminals $T \subset V$, the goal is to find a sparsifier $H$ with $V(H)=T$ preserving all terminal cuts up to a factor $q \geq 1$, i.e. for all $S \subset T$, $\text{mincut}_G(S, T \setminus S) \leq \text{mincut}_H(S, T \setminus S) \leq  q \cdot \text{mincut}_G(S, T \setminus S)$. It is worth pointing out that in some setting, there is an equivalence between the construction of vertex cut/flow and distance sparsifiers~\cite{racke08, englert10}. 

A related graph compression is spanners, where the objective is to reduce the number of edges by edge deletions only.
We will use a spanner algorithm (e.g.,~\cite{AlthoferDDJS1993}) to derive our upper bound results for general graphs.
Although spanner operation enjoys much less freedom than minor operation,
proving a lower bound result for it is notably difficult.
Assuming the Erd\"{o}s girth conjecture~\cite{Erdos1963}, there are lower bounds that match the best known upper bounds,
but the conjecture seems far from being settled \cite{Wenger1991}.
Woodruff~\cite{Woodruff2006} showed a lower bound result bypassing the conjecture, but only for \emph{additive} spanners.

%One related graph compression problem is the spanner problem, in which only edge deletions are allowed (i.e., $G'$ is a sub-graph of $G$),
%and the objective is to preserve approximately the distance between every pair of vertices, with as few edges in $G'$ as possible.
%Indeed, we will use a spanner algorithm (e.g.,~\cite{AlthoferDDJS1993}) as a black-box to derive our upper bound results for general graphs.
%The best known spanner result is: for any integer $d\geq 1$, for any graph on $n$ vertices,
%there exists a sub-graph with $\calO(n^{1+1/d})$ edges with distortion at most $2d-1$.
%Although the sub-graph operation (which includes only edge deletion) for constructing spanner
%enjoys much less freedom than the minor operation, proving a lower bound result for spanner is notably difficult.
%Assuming the Erd\"{o}s girth conjecture~\cite{Erdos1963}, the $\calO(n^{1+1/d})$ upper bound above is tight;
%however, the conjecture is apparently far from settled, with only small cases solved
%(see~\cite{Wenger1991} and the references therein) since its birth in the 1960's.
%The other known lower bound result is given by Woodruff~\cite{Woodruff2006}, but for additive spanner.

%\smallskip

\subsection*{Our Contributions}

For various classes of graphs, we show lower and upper bounds on the number of non-terminals needed
in the minor for low distortion.
The table below summarizes our results. %(results with $\dagger$ are from \cite{distancepreserving});see the full version for a more detailed table.
%see Appendix \ref{app: discussion} for a more detailed table.
% Let $k=|T|$.
\begin{table}[H]
\begin{center}
\begin{tabular}{|l|c|c|}
\hline
Graph &  Upper Bound & Lower Bound   \\ \hline
      &  (distortion, size) & (distortion, size)   \\ \hline  \hline
General & $\forall q \in \mathbb{N}$ \quad  $(2q-1,\mathcal{O}(k^{2+2/q}))$ & $(2-\varepsilon, \Omega(k^2))$   \\
General & $-$ & $(2.5-\varepsilon, \Omega(k^{5/4}))$,~ $(3-\varepsilon, \Omega(k^{6/5}))$   \\
 &  & \small{(see Theorem \ref{thm:lower-bound-distortion-25} for more guarantees)}    \\
General & $-$ & $(2-\varepsilon,\varepsilon^3 k^2 /150)$-rand   \\
B.-Treewidth $p$ & $\forall q \in \mathbb{N}$ \quad $ (2q-1,\mathcal{O}(p^{1+2/q}k))$ &  $(1, \Omega(pk))$~\cite{distancepreserving}    \\ 
Exc.-Fix.-Minor &  $(\mathcal{O}(1),\widetilde{\mathcal{O}}(k^{2})$  & $-$    \\ 
Planar &  $(3,\widetilde{\mathcal{O}}(k^{2}))$,~ $(1+\varepsilon,\widetilde{\mathcal{O}}((k/\varepsilon)^{2})$ &  $(1+o(1), \Omega(k^{2}))$~\cite{distancepreserving}    \\ \hline
General & $(\mathcal{O}(\log^{5} k),0)$~\cite{KammaKN2015} & $-$ \\
Outerplanar & $(\mathcal{O}(1),0)$~\cite{BasuG2008} & $-$ \\
Trees & $(8,0)$~\cite{gupta01} & $(8-o(1),0)$~\cite{chan} \\ \hline
General & $(\mathcal{O}(\log k),0)$-rand~\cite{englert10} & $-$ \\
Exc.-Fix.-Minor & $(\mathcal{O}(1),0)$-rand~\cite{englert10} & $(2-o(1),0)$-rand
\\ \hline
\end{tabular}
\caption{The results which are \emph{not} followed by a reference are shown in this paper.
The guarantees with the extension ``-rand'' refer to \emph{randomized} distance approximating minors;
``size'' refers to the number of non-terminals in the minor.}
\end{center}
\end{table}
For our lower bound results, we use a novel black-box reduction to convert any lower bound on distortion for the SPR problem
into a super-linear lower bound on the number of non-terminals for ATDP with the same distortion.
Precisely, we show that given any graph $G^*$ such that solving its SPR problem leads to a minimum distortion of $\alpha$,
we use $G^*$ to construct a new graph $G$ such that every minor of $G$ with distortion less than $\alpha$ must have
at least $\Omega(k^{1+\delta(\Gs)})$ non-terminals, for some constant $\delta(\Gs) > 0$.
The lower bound results in the above table are obtained by using for $G^*$ a complete ternary tree of height $2$,
which was shown that solving its SPR problem leads to minimum distortion $3$~\cite{gupta01}.
More trade-offs are shown by using for $G^*$ a complete ternary tree of larger heights.
%similar results, with larger distortion and smaller lower bound (but still super-linear in $k$) on the number of non-terminals,
%are obtained ,
%for which its minimum distortion in SPR can be computed using the ideas of Chan et al.~\cite{chan}.

The black-box reduction has an interesting consequence.
For the SPR problem on general graphs, there is a huge gap between the best known lower and upper bounds, which are
$8$~\cite{chan} and $\calO(\log^5 k)$~\cite{KammaKN2015}; it is unclear what the asymptotically tight bound would be.
Our black-box reduction allows us to prove the following result concerning the tight bound:
for general graphs, if the tight bound on distortion for the SPR problem is super-constant,
then for any constant $C>0$, even if $Ck$ non-terminals are allowed in the minor, the lower bound will remain super-constant.
See Theorem \ref{thm:spr-vs-lspr} for a formal statement of this result.

We also build on the existing results on spanners, distance oracles and connected 0-extensions to show a number of upper bound results for general graphs, planar graphs and graphs that exclude a fixed minor. Our techniques, combined with an algorithm in Krauthgamer et al.~\cite{distancepreserving}, yield an upper bound result for graphs with bounded treewidth. In particular, our upper bound on planar graphs implies that allowing quadratic number of non-terminals, we can construct a deterministic minor with arbitrarily small distortion. %This is in contrast to minors only on the terminal, where distortion must be at least $8-o(1)$.

%The objectives of the above upper bounds results are to reduce vertices, but the minor constructed may be dense.
%For applications which want the numbers of both vertices and edges being small, simply apply spanner algorithm to the minor constructed,
%which will induce a multiplicative blow-up of the distortion, but getting a sparser minor.
\section{Preliminaries}

Let $G=(V,E,\ell)$ denote an undirected graph with terminal set $T\subset V$ of cardinality $k$,
where $\ell:E\ra\rr^+$ is the length function over edges $E$. %For any $(u,v) \in V(G)$, let $d_G(u,v)$ denote the the shortest path distance between $u$ and $v$ in $G$. 
A graph $H$ is a \emph{minor} of $G$ if $H$ can be obtained from $G$ by performing a sequence of vertex/edge deletions and edge contractions,
but no terminal can be deleted, and no two terminals can be contracted together.
In other words, all terminals in $G$ must be \emph{preserved} in $H$.

Besides the above standard description of minor operations,
there is another equivalent way to construct a minor $H$ from $G$~\cite{KammaKN2015},
which will be more convenient for presenting some of our results.
A partial partition of $V(G)$ is a collection of pairwise disjoint subsets of $V(G)$ (but their union can be a proper subset of $V(G)$).
Let $S_1,\cdots,S_m$ be a partial partition of $V(G)$ such that (1) each induced graph $G[S_i]$ is connected,
(2) each terminal belongs to exactly one of these partial partitions, and (3) no two terminals belong to the same partial partition.
Contract the vertices in each $S_i$ into one single ``super-node'' in $H$.
For any vertex $u\in V(G)$, let $S(u)$ denote the partial partition that contains $u$;
for any super-node $u\in V(H)$, let $S(u)$ denote the partial partition that is contracted into $u$.
In $H$, super-nodes $u_1,u_2$ are adjacent \emph{only if} there exists an edge in $G$ with one of its endpoints in $S(u_1)$ and the other in $S(u_2)$.
We denote the super-node that contains terminal $t$ by $t$ as well.

\begin{definition}
The graph $H=(V',E',\ell')$ is an $\alpha$-distance approximating minor (abbr.~$\alpha$\emph{-DAM}) of $G=(V,E,\ell)$
if $H$ is a minor of $G$ and for any $t,t' \in T$, $ d_G(t,t') \leq d_H(t,t') \leq \alpha \cdot d_G(t,t')$.
$H$ is an $(\alpha,y)$\emph{-DAM} of $G$ if $H$ is an $\alpha$\emph{-DAM} of $G$ with at most $y$ non-terminals.%,
%i.e., $|V'\setminus T| \leq y$.
%A class of graphs $\mathcal{G}$ is $(\alpha,f(k))$-distance-approximating-compressible (abbr.~$(\alpha,f(k))$-DAC)
%if for any graph $G\in \mathcal{G}$ with $k$ terminals, $G$ has an $(\alpha,f(k))$-DAM.
\end{definition}

We note that the SPR problem is equivalent to finding an $(\alpha,0)$-DAM.
One can also define a randomized version of distance approximating minor:

\newcommand{\calD}{\pi}

\begin{definition}
Let $\calD$ be a probability distribution over minors of $G=(V,E,\ell)$.
We call $\calD$  an $\alpha$-randomized distance approximating minor (abbr.~$\alpha$\emph{-rDAM}) of $G$
if for any $t,t' \in T$,
$\mathbb{E}_{H\sim\calD}\left[d_H(t,t')\right] \leq \alpha \cdot d_G(t,t')$,
and for every minor $H$ in the support of $\calD$, $d_H(t,t') \geq d_G(t,t')$.
Furthermore, we call $\calD$ an $(\alpha,y)$\emph{-rDAM} if $\calD$ is an $\alpha$\emph{-rDAM} of $G$,
and every minor in the support of $\calD$ has at most $y$ non-terminals.
%A class of graphs $\mathcal{G}$ is $(\alpha,f(k))$-randomized distance approximating compressible (abbr.~$(\alpha,f(k))$-rDAC)
%if for any graph $G\in \mathcal{G}$ with $k$ terminals, $G$ has an $(\alpha,f(k))$-rDAM.
\end{definition}

%Kamma et al.~\cite{KammaKN2015} showed that for any graph, it's always possible to solve the SPR problem with $\alpha = \calO(\log^5 k)$.
%For tree graphs, Gupta~\cite{gupta01} gave an algorithm that solves the SPR problem with $\alpha=8$,
%and he exhibited a family of trees for which the SPR problem cannot be solved with any $\alpha < 4$.
%Chan et al.~\cite{chan} showed that Gupta's algorithm is essentially best,
%by exhibiting another family of trees for which the SPR problem cannot be solved with any $\alpha < 8$.
%
%It is known that the class of general graphs is $(1,\calO(k^4))$-DAC~\cite{distancepreserving}.
%For some restricted classes of graphs, the number of non-terminals needed can be much less than $\calO(k^4)$;
%for instance, the class of tree graphs is $(1,2k-2)$-DAC~\cite{distancepreserving}.

%\YKC{Discuss how this result was obtained, and state the generalized result that
%if we sample $L$ shortest paths, then after minor operations, at most $2\cdot \binom{L}{2}$ Steiner points.}
\section{Deterministic and Randomized Lower Bounds}\label{sect:lower-bound}

For all the lower bound results, we use a tool in combinatorial design called \emph{Steiner system}
(or alternatively, \emph{balanced incomplete block design}). Let $[k]$ denote the set $\{1,2,\cdots,k\}$.

\begin{definition}
Given a ground set $T = [k]$, an $(s,2)$-Steiner system (abbr.~$(s,2)$-SS) of $T$ is a collection of $s$-subsets of $T$,
denoted by $\calT = \left\{T_1,\cdots,T_r\right\}$,
where $r = \binom{k}{2}\left/\binom{s}{2}\right.$,
such that every $2$-subset of $T$ is contained in \emph{exactly} one of the $s$-subsets.
\end{definition}

\begin{lemma}[\cite{Wilson1975}]\label{lem:SSS-exist}
For any integer $s\geq 2$, there exists an integer $M_s$ such that for every $q\in\nn$,
the set $[M_s + qs(s-1)]$ admits an $(s,2)$-\emph{SS}.
\end{lemma}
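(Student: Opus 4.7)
The plan is to deduce the lemma from the celebrated asymptotic existence theorem for Steiner systems. Recall that an $(s,2)$-SS on a ground set of size $k$ requires two necessary divisibility conditions: each element lies in $(k-1)/(s-1)$ blocks, forcing $(s-1) \mid (k-1)$, and the total number of blocks $\binom{k}{2}/\binom{s}{2}$ must be an integer, forcing $s(s-1) \mid k(k-1)$. Wilson's theorem (the intended reference) asserts that these necessary conditions are also \emph{sufficient} whenever $k$ is large enough.

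Given this, my plan is to choose $M_s$ to be any integer that (i) exceeds Wilson's sufficiency threshold for the parameter $s$, and (ii) satisfies $M_s \equiv 1 \pmod{s(s-1)}$. Then for every $q \in \nn$, the integer $k := M_s + qs(s-1)$ still satisfies $k \equiv 1 \pmod{s(s-1)}$, so $s(s-1)$ divides $k-1$, which trivially implies both $(s-1)\mid(k-1)$ and $s(s-1)\mid k(k-1)$. Hence Wilson's theorem produces an $(s,2)$-SS on $[k]$, which is exactly what the lemma demands. Verifying that such an $M_s$ exists is immediate: any sufficiently large integer in the residue class $1 \pmod{s(s-1)}$ will do.

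The genuine difficulty, of course, lies in Wilson's theorem itself, and I would not attempt to reprove it. The standard argument proceeds by a recursive construction (the \emph{Wilson fundamental construction}): one substitutes small ``ingredient'' designs into the groups of a resolvable group-divisible design to produce larger designs, combined with a number-theoretic analysis that shows every admissible residue class modulo $s(s-1)$ is eventually realized. Since this development is orthogonal to the purposes of the present paper, and the statement we need is an immediate corollary, I would simply cite Wilson's result and present the short derivation above as the proof.
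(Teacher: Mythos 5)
Your derivation is correct and matches the paper's intent: the lemma is stated there as a direct citation of Wilson's asymptotic existence theorem, with no proof given, and your short argument (pick $M_s\equiv 1 \pmod{s(s-1)}$ above Wilson's threshold, so every $k=M_s+qs(s-1)$ satisfies both divisibility conditions) is exactly the routine verification the citation leaves implicit. Nothing further is needed.
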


Our general strategy is to use the following black-box reduction, which proceeds by taking a \emph{small} connected graph $\Gs$ as input,
and it outputs a \emph{large} graph $G$ which contains many disjoint embeddings of $\Gs$.
Here is how it exactly proceeds:
\begin{itemize}
\item Let $\Gs$ be a graph with $s\geq 2$ terminals and $q\geq 1$ non-terminals.
Let $k$ be an integer, as given in Lemma \ref{lem:SSS-exist}, such that the terminal set $T = [k]$ admits an $(s,2)$-SS $\calT$.
\item We construct $\calT ' \subseteq \calT$ that satisfies \emph{certain} property depending on the specific problem.
For each $s$-set in $\calT'$, we add $q$ non-terminals to the $s$-set, which altogether form a \emph{group}.
The union of vertices in all groups is the vertex set of our graph $G$.
We note that each terminal may appear in many groups, but each non-terminal appears in one group only.
\item \emph{Within} each of the groups, we embed $\Gs$ in the natural way.%This completes the construction of $G$.
%we will often refer to this graph as $G(\calT',G')$.
\end{itemize}

The following two lemmas describe some basic properties of all minors of $G$ output by the black-box above.
Their proofs are deferred to Appendix \ref{app:basic-of-minor}.
%Their proofs are deferred to the full version.
%Note the following simple observation: two terminals are in the same group if and only if
%there exists a group
%
%Suppose two terminals are in the same group (which is unique by definition of Steiner-system),
%
%By the definition of Steiner-system, given any pair of terminals, there is at most one group that contains both of them.
%Suppose two terminals are in the same group,
%
%Note the following simple observation: for any $G$ constructed using the black-box reduction,
%two terminals are in the same group if and only if there is a simple path connecting the two terminals that contains no other terminal.

\begin{lemma}\label{lem:unique-group-for-edge}
Let $H$ be a minor of $G$. Then for each edge $(u_1,u_2)$ in $H$, there exists exactly one group $R$ in $G$
such that $S(u_1)\cap R$ and $S(u_2)\cap R$ are both non-empty.
\end{lemma}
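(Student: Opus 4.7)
My plan is to establish existence and uniqueness separately, with most of the work going into uniqueness.

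Existence is immediate from the construction. Since $(u_1,u_2)$ is an edge of $H$, there must be some edge $e$ of $G$ with one endpoint in $S(u_1)$ and the other in $S(u_2)$. By the construction of $G$, every edge of $G$ lies entirely inside a single group (as the only edges are those coming from the embedded copies of $\Gs$, one per group). Hence the unique group $R$ that contains $e$ satisfies $S(u_1)\cap R \neq \emptyset$ and $S(u_2)\cap R \neq \emptyset$.

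For uniqueness, I would argue by contradiction. Suppose two distinct groups $R_1 \neq R_2$ both meet $S(u_1)$ and $S(u_2)$. The structural ingredient is the Steiner system property: since every $2$-subset of $T$ lies in exactly one block of $\calT'$, the terminal sets of two distinct groups share at most one terminal; and since each non-terminal lies in exactly one group, we conclude $|R_1 \cap R_2| \leq 1$, with any shared vertex being a terminal.

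The heart of the argument is the following key claim, which I would prove first: \emph{if a super-node $S(u)$ meets two distinct groups $R_1$ and $R_2$, then $R_1 \cap R_2 = \{t\}$ for some terminal $t$, and $t \in S(u)$}. For this I plan to use the two basic properties of $S(u)$ inherited from the partial-partition description, namely that $G[S(u)]$ is connected and $S(u)$ contains at most one terminal. Pick vertices $y_i \in S(u)\cap R_i$ and a path in $G[S(u)]$ from $y_1$ to $y_2$. Every edge on this path lies inside one group, so any transition between groups along the path must occur at a vertex that belongs to both groups in question; such shared vertices are necessarily terminals. Because $S(u)$ contains at most one terminal, all such transitions occur at the \emph{same} terminal $t \in S(u)$, and $t$ must belong to every group the path visits, including $R_1$ and $R_2$. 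A short degenerate case covers the situation where $S(u) \cap R_i$ is realized purely by the shared terminal itself (so $y_1 = y_2 = t$), in which case $t \in S(u) \cap R_1 \cap R_2$ follows immediately without any path argument.

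Applying this key claim separately to $u_1$ and $u_2$ gives the same terminal $t \in S(u_1) \cap S(u_2)$, contradicting the disjointness of partial-partition classes. The main obstacle is getting the case analysis of the key claim clean: specifically, making sure the "path transitions only via the unique terminal of $S(u)$" step does not accidentally assume that both $y_1$ and $y_2$ are non-terminals. Once this is carefully written, the argument reduces to bookkeeping.
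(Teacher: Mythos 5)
Your proposal is correct and follows essentially the same route as the paper: your "key claim" is exactly the paper's two unproven "simple facts" about super-nodes (a connected $S(u)$ with no terminal lies in a single group; if it has a terminal $t$ and meets group $R$ then $t\in R$), and your final contradiction (a shared terminal forced into both $S(u_1)$ and $S(u_2)$) is the dual phrasing of the paper's (two distinct terminals forced into both groups), both resting on the Steiner-system property that two blocks meet in at most one terminal. The only cosmetic difference is that you get existence directly from the partial-partition adjacency definition rather than by induction on the minor operations, which is fine.
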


The above lemma permits us to legitimately define the notion $R$-edge:
an edge $(u_1,u_2)$ in $H$ is an $R$-edge if $R$ is the unique group that intersects both $S(u_1)$ and $S(u_2)$.

\begin{lemma}\label{lem:interchange-at-terminal}
Suppose that in a minor $H$ of $G$,
$(u_1,u_2)$ is a $R_1$-edge and $(u_2,u_3)$ is $R_2$-edge, where $R_1\neq R_2$.
Then $R_1$ and $R_2$ intersect, and $S(u_2)$ contains the terminal in $R_1\cap R_2$.
\end{lemma}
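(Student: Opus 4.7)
The plan is to exploit two features of the construction: (a) edges of $G$ live entirely inside groups, and (b) non-terminals belong to a unique group, whereas $s$-sets of the Steiner system pairwise share at most one terminal. I would first record that $R_1\cap R_2\subseteq T$ (any non-terminal lies in only one group), and since $\mathcal{T}$ is an $(s,2)$-SS, no two terminals can co-occur in two distinct $s$-sets, so $|R_1\cap R_2|\leq 1$. Hence it is enough to show $R_1\cap R_2$ is non-empty and that its (unique) element lies in $S(u_2)$.

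Next I would use property~(3) of the partial partition: $S(u_2)$ contains \emph{at most} one terminal. Suppose first that $S(u_2)$ contains no terminal. Then every vertex of $S(u_2)$ is a non-terminal and therefore belongs to a unique group. Since every edge of $G$ lies inside some group, the connected graph $G[S(u_2)]$ can only use edges of a single group, forcing all of $S(u_2)$ to lie in one common group; but $S(u_2)$ meets both $R_1$ and $R_2$ with $R_1\neq R_2$, contradiction.

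So $S(u_2)$ contains exactly one terminal $t$. The remaining step is to argue $t\in R_1\cap R_2$. Pick $v_1\in S(u_2)\cap R_1$ and $v_2\in S(u_2)\cap R_2$ and a path between them in $G[S(u_2)]$; if $v_i\neq t$ then $v_i$ is a non-terminal, so every edge of $G$ incident to $v_i$ lies in the unique group containing $v_i$, namely $R_i$. Deleting $t$ from $S(u_2)$ breaks $G[S(u_2)]$ into pieces each of which sits inside one single group (since the remaining vertices are non-terminals and any edge among them is a single-group edge). Because $G[S(u_2)]$ is connected while $v_1$ and $v_2$ lie in pieces belonging to different groups $R_1$ and $R_2$, any path from $v_1$ to $v_2$ must pass through $t$, and moreover $t$ must have a $G$-edge into the $R_1$-piece containing $v_1$ and into the $R_2$-piece containing $v_2$. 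Since both of those edges are single-group edges, they force $t\in R_1$ and $t\in R_2$, and therefore $t\in R_1\cap R_2$. Combined with $|R_1\cap R_2|\leq 1$, $t$ is \emph{the} terminal of $R_1\cap R_2$, completing the proof.

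The only potential subtlety is handling the boundary cases $v_1=t$ or $v_2=t$ cleanly, but these collapse to a single-edge version of the same argument: whichever endpoint equals $t$ is already a terminal of the corresponding $R_i$, and a single edge from $t$ to a non-terminal of the other group again lies inside that group and pins $t$ there. I do not expect any real obstacle; the argument is essentially a pigeonhole on groups, gated by the ``at most one terminal per $S_i$'' property of the partial partition.
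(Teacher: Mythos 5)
Your proof is correct and follows essentially the same route as the paper: the paper classifies super-nodes into those containing no terminal versus exactly one terminal and invokes two ``simple facts'' (a connected all-non-terminal set lies in a single group; a set with one terminal $t$ meeting group $R$ forces $t\in R$), which are exactly the two steps you prove in detail via connectivity of $G[S(u_2)]$ and the uniqueness of each non-terminal's group. Your explicit verification that $|R_1\cap R_2|\leq 1$ and your handling of the boundary cases are fine additions but do not change the argument.
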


We will show that for any minor $H$ with low distortion, at least one of the non-terminals in each group must be retained,
and thus $H$ must have at least $|\calT '|$ non-terminals.
We first present some of our main theorems on lower bounds and then prove them;
two more theorems are given in Section \ref{subsect:full-general}.

\begin{theorem} \label{thm:lower-bound-star}
For infinitely many $k\in \nn$, there exists a bipartite graph with $k$ terminals
which does not have a $(2-\epsilon,k^2/7)$-\emph{DAM}, for all $\epsilon > 0$.
\end{theorem}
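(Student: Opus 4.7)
The plan is to instantiate the black-box reduction with the minimal example $\Gs$ being the path $t_1 - v - t_2$ on two terminals and one non-terminal with unit edge lengths, so $s = 2$ and $q = 1$. Since every $2$-subset of $[k]$ trivially forms a $(2,2)$-SS, for every $k \ge 2$ we may take $\calT' = \calT = \binom{[k]}{2}$, which gives infinitely many $k$. The resulting graph $G$ has terminal set $T = [k]$, one non-terminal $v_{ij}$ per pair $\{i,j\}$, and unit-length edges $(i, v_{ij})$ and $(j, v_{ij})$; it is bipartite with sides $T$ and $\{v_{ij}\}$, and $d_G(i,j) = 2$ for all distinct $i,j \in T$.

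Suppose for contradiction that $H$ is a $(2-\epsilon, k^2/7)$-DAM of $G$. First I would classify the super-node of each $v_{ij}$. Since the only neighbors of $v_{ij}$ in $G$ are $i$ and $j$, every connected partial partition containing $v_{ij}$ is either $\{v_{ij}\}$, a super-node $S(i)\ni v_{ij}$, or a super-node $S(j) \ni v_{ij}$. In either of the latter two cases the remaining $G$-edge from $v_{ij}$ to the other endpoint becomes an $H$-edge of length $1$ between the super-nodes of $i$ and $j$, forcing $d_H(i,j) \le 1 < 2 = d_G(i,j)$ and violating the DAM property. Hence each $v_{ij}$ is either deleted or forms a singleton super-node in $H$; in particular $S(i) = \{i\}$ for every terminal $i$.

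Next I would argue that if $v_{ij}$ is deleted then $d_H(i,j) \ge 4$, which together with $d_G(i,j) = 2$ yields distortion $\ge 2$, contradicting $\alpha \le 2 - \epsilon$. By Lemma \ref{lem:unique-group-for-edge} every edge of $H$ is an $R_{ab}$-edge for a unique pair $\{a,b\}$, where $R_{ab} = \{a, b, v_{ab}\}$. Any edge incident to terminal $i$ must be an $R_{ic}$-edge landing at the singleton $\{v_{ic}\}$ for some $c$, and the only edges incident to $v_{ic}$ in $H$ are its two $R_{ic}$-edges of length $1$. Thus every $i$--$j$ path in $H$ has the form $i - v_{ic_1} - c_1 - v_{c_1 c_2} - c_2 - \cdots - j$, and Lemma \ref{lem:interchange-at-terminal} guarantees that consecutive $R$-edges indeed share a terminal at each intermediate super-node. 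Each passage through an intermediate terminal contributes length $2$; since $v_{ij}$ is absent at least one intermediate terminal is needed, so the path has length at least $4$.

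Combining the two steps, every $v_{ij}$ must be present in $H$, so $H$ contains at least $\binom{k}{2} = k(k-1)/2 > k^2/7$ non-terminals (valid for all $k \ge 2$), contradicting the size bound $k^2/7$. The main obstacle I anticipate is the careful super-node case analysis: one must rule out that a cleverly chosen contraction involving $v_{ij}$ creates a length-$2$ shortcut between $i$ and $j$ when $v_{ij}$ is nominally absent. The bipartiteness of $G$ together with Lemma \ref{lem:interchange-at-terminal} is exactly what limits each $v_{ij}$ to a singleton super-node and forces every alternative $i$--$j$ path to use some intermediate terminal, hence to have length at least $4$.
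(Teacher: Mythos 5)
Your construction with $s=2$ does not work, and the failure point is the weight model for minors. In this paper a minor $H=(V',E',\ell')$ carries its \emph{own} length function, chosen freely subject only to the domination constraint $d_H(t,t')\ge d_G(t,t')$ (see the definition of a DAM, and the footnote in the appendix about the ``standard restriction'' being the optimal weight assignment; this is also why Lemma \ref{lm: gupta1} quantifies over ``every edge-weighted graph on the terminals''). Your graph $G$ is $K_k$ with every edge subdivided once. Contract each $v_{ij}$ into $i$ and assign the resulting edge $(i,j)$ length $2$: this is a minor with \emph{zero} non-terminals and distortion exactly $1$. So the step ``the remaining $G$-edge becomes an $H$-edge of length $1$, forcing $d_H(i,j)\le 1$'' is where your argument breaks: that edge's length is not inherited from $G$, and setting it to $2$ satisfies domination with no distortion. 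Consequently the statement you want is simply false for your instance, and no repair of the subsequent path analysis can rescue it.

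The missing ingredient is exactly what the black-box reduction needs: a gadget $\Gs$ with a genuine SPR lower bound. The paper takes $s=3$, a $(3,2)$-Steiner system, and $\Gs=K_{1,3}$ (the $3$-star with its center as the unique non-terminal). If the center of a group is not retained, the $R$-edges of that group induce at most two of the three required terminal--terminal edges (Lemma \ref{lm: gupta1}), so some pair $t,t'$ in the group must be routed through a third terminal; domination forces every terminal-to-terminal hop to have length at least $2$, whence $d_H(t,t')\ge 4 = 2\,d_G(t,t')$. A $2$-terminal path has no such obstruction — a single reweighted edge preserves its distance exactly — which is precisely why your instance admits a perfect terminal-only minor. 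Your second step (a deleted $v_{ij}$ forces a detour through another terminal, hence length at least $4$) is sound and mirrors the paper's argument, but the contraction case is fatal for the $s=2$ construction.
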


\begin{theorem}\label{thm:lower-bound-distortion-25}
There exists a constant $c_1 > 0$, such that for infinitely many $k\in \nn$,
there exists a quasi-bipartite graph with $k$ terminals
which does not have an $(\alpha-\epsilon,c_1 k^\gamma)$-\emph{DAM}, for all $\epsilon>0$,
where $\alpha,\gamma$ are given in the table below.

\begin{center}
\begin{tabular}{|c||c|c|c|c|c|c|c|}
\hline
$\alpha$ & $2.5$ & $3$ & $10/3$ & $11/3$ & $4$ & $4.2$ & $4.4$ \\
\hline
$\gamma$ & $5/4$ & $6/5$ & $10/9$ & $11/10$ & $12/11$ & $21/20$ & $22/21$\\
\hline
\end{tabular}
\end{center}
\end{theorem}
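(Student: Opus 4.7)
The plan is to apply the black-box reduction just described, instantiated with $G^{*}$ chosen as a complete ternary tree of appropriate height and with a \emph{sub-sampled} Steiner system $\calT'$. Each row of the table corresponds to a different pair $(G^{*}, \calT')$: the tree height (and edge-length calibration) is tuned so that Gupta-style analysis gives an SPR lower bound of $\alpha$ on $G^{*}$, while the size of $\calT'$ determines the exponent $\gamma$. This matches the remark in the introduction that the full trade-off table is obtained by varying the height of a complete ternary tree.

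First, for each target $\alpha$, I would fix a complete ternary tree $G^{*}$ with $s$ terminal leaves and $q$ non-terminal internal nodes whose SPR optimum is exactly $\alpha$ (here $s$ and $q$ grow with the height). Next, I would invoke Lemma~\ref{lem:SSS-exist} to obtain, for infinitely many $k$, an $(s,2)$-Steiner system $\calT$ on $[k]$ of size $\binom{k}{2}/\binom{s}{2}=\Theta(k^{2})$, and then extract a sub-collection $\calT'\subseteq\calT$ with a bounded \emph{per-terminal multiplicity}: no terminal should lie in too many blocks of $\calT'$. Such a $\calT'$ of size $\Theta(k^{\gamma})$ is obtained by a simple greedy / probabilistic selection, with $\gamma$ taking the values in the table — the more restrictive the multiplicity budget, the smaller $\gamma$ is allowed to be. Feeding $(\calT', G^{*})$ into the black-box then yields a graph $G$ with $k$ terminals and $q\,|\calT'|=\Theta(k^{\gamma})$ non-terminals; it is quasi-bipartite because the embedded copies of $G^{*}$ pairwise intersect only at terminals.

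The core of the argument is to show that every minor $H$ of $G$ whose distortion is strictly less than $\alpha$ must retain at least one non-terminal from each group $R_{i}$. Suppose instead that some group $R_{i}$ contributes no non-terminal to $V(H)$; then Lemmas~\ref{lem:unique-group-for-edge} and~\ref{lem:interchange-at-terminal} say that the sub-multigraph $H_{i}\subseteq H$ formed by the $R_{i}$-edges is a minor of $G^{*}[R_{i}]$ whose vertex set consists only of the $s$ terminals of $R_{i}$. Hence $H_{i}$ is a feasible SPR solution for $G^{*}[R_{i}]$, and so by the choice of $G^{*}$ there exist terminals $t,t'\in R_{i}$ with $d_{H_{i}}(t,t')\ge \alpha\cdot d_{G^{*}}(t,t')$. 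To turn this into a contradiction one must argue that the full graph $H$ cannot shorten $d(t,t')$ by routing through \emph{other} groups, and this is precisely where the overlap control of $\calT'$ is used: any alternative $t$-$t'$ walk must hop between groups only at shared terminals, and a counting argument combined with the weight scaling of $G^{*}$ shows that the total length of such a detour still exceeds $\alpha\cdot d_{G}(t,t')$, contradicting the assumed distortion.

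The main obstacle I anticipate is exactly this last point — making the detour argument quantitative, simultaneously for all the $(\alpha,\gamma)$ pairs in the table. One must balance three quantities: the edge-length gradient inside $G^{*}$ (which makes the SPR lower bound on $G^{*}$ robust against shortcuts), the per-terminal multiplicity budget of $\calT'$ (which controls how many foreign groups can be strung together through a single terminal), and the size $|\calT'|=\Theta(k^{\gamma})$ (which we want as large as possible). Once this balance is pinned down for one row, the remaining rows follow by re-running the same calculation with a different tree height; the steps of invoking Lemma~\ref{lem:SSS-exist}, running the black-box and bookkeeping the non-terminal count are then essentially mechanical.
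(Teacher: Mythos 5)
Your high-level skeleton --- the black-box reduction with $\Gs$ a complete ternary tree of height $h$, a sub-sampled Steiner system $\calT'$, and a per-group argument that every low-distortion minor must retain a non-terminal from each group --- is the same as the paper's. But the property you impose on $\calT'$ is the wrong one, and it sits exactly at the step you yourself flag as ``the main obstacle.'' Bounded per-terminal multiplicity does not control detours: even if every terminal lies in only two blocks of $\calT'$, three blocks can pairwise intersect in three \emph{distinct} terminals, which creates a short alternative route between two terminals of a group that bypasses that group's non-terminals entirely. It is also quantitatively incompatible with your own size target: since $s\,|\calT'|$ equals the sum of multiplicities, reaching $|\calT'|=\Theta(k^{5/4})$ forces the average terminal to lie in $\Theta(k^{1/4})$ blocks, so no constant multiplicity budget can coexist with a super-linear $\gamma$. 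The paper instead builds a \emph{detouring graph} on the blocks (blocks adjacent iff they share a terminal, each edge labeled by that terminal) and selects $\calT'$ so that the induced detouring graph contains no \emph{detouring cycle} (a cycle whose consecutive edges carry distinct labels) of length at most $L$. A sample-and-delete argument (Claim~\ref{claim:no-dcycle} and Lemma~\ref{lem:large-detouring-graph}) retains $|\calT'|=\Omega_{s}(k^{L/(L-1)})$ blocks; this girth-versus-size trade-off is precisely where the exponents $\gamma=L/(L-1)$ in the table come from, and nothing in your multiplicity-based selection produces them.

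The missing quantitative detour argument then falls out of that girth property: any simple path between two terminals of a group $R$ that is not confined to $R$-edges must traverse at least $L$ groups other than $R$ (Lemma~\ref{lem:types-of-simple-paths}), hence pass through at least $L-1$ intermediate terminals by Lemma~\ref{lem:interchange-at-terminal}, hence have length at least $2L$ in the unweighted construction; since $d_G(t,t')\leq 2h$, such a detour already costs a factor $L/h$. The provable distortion is therefore $\min\{L/h,\alpha_h\}-\epsilon$, where $\alpha_h$ is the SPR lower bound of the unweighted height-$h$ ternary tree --- note that the $\alpha$ values in the table are this minimum, not the tree's SPR optimum alone, so each row requires choosing both $h$ and $L$ and knowing explicit values $\alpha_2=3$, $\alpha_3=\alpha_4=4$, $\alpha_5=4.4$ (computed in Appendix~\ref{app:explicit-tree-value}). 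No edge-length ``gradient'' or ``weight scaling'' on $\Gs$ is used or needed; the trees are unweighted.
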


\begin{theorem} \label{thm:rand-lower-bound-star}
For infinitely many $k\in \nn$, there exists a bipartite graph with $k$ terminals
which does not have a $\left(2-\epsilon, \epsilon^3 k^2 / 150\right)$-\emph{rDAM}, for any $1\geq \epsilon>0$.
%\YKC{There should be an upper bound for the admissible $\epsilon$. For instance, it surely cannot be greater than $1$.
%And indeed, since in Lemma \ref{lm: randomGupta}, when $k=3$, the lower bound is $4/3$, so $\epsilon$ should not much greater than $2/3$.}
\end{theorem}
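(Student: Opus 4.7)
The plan is to instantiate the black-box reduction with $G^{*}$ being the path $t_{1}-v-t_{2}$ on two terminals and one non-terminal with unit-length edges (so $s=2$, $q=1$), and to take $\calT'$ to be the trivial $(2,2)$-SS on $[k]$ consisting of all $\binom{k}{2}$ pairs. The resulting graph $G$ is bipartite: the $k$ terminals form one side, and for each pair $\{i,j\}\in\binom{[k]}{2}$ there is a degree-$2$ non-terminal $v_{ij}$ adjacent to $t_{i}$ and $t_{j}$ by unit edges, so $d_{G}(t_{i},t_{j})=2$ for every pair.

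The heart of the argument is the following dichotomy: in any minor $H$ of $G$ satisfying the pointwise constraint $d_{H}\geq d_{G}$, each pair $\{i,j\}$ falls into exactly one regime---either (a) $v_{ij}$ survives as a singleton non-terminal super-node and $d_{H}(t_{i},t_{j})=2$, or (b) $d_{H}(t_{i},t_{j})\geq 4$. Since the non-terminals of $G$ form an independent set, Lemmas \ref{lem:unique-group-for-edge} and \ref{lem:interchange-at-terminal} force every non-terminal super-node of $H$ to be a singleton $\{v_{ab}\}$. Furthermore, any edge of $H$ between two terminal super-nodes $S(t_{a})$ and $S(t_{b})$ would have length $1$ and hence give $d_{H}(t_{a},t_{b})\leq 1<2=d_{G}(t_{a},t_{b})$, violating the minor-distance constraint; such edges must be deleted, so $H$ remains bipartite between terminal super-nodes and singleton non-terminal super-nodes. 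Every $t_{i}$-to-$t_{j}$ path then has even length, and a length-$2$ path $t_{i}-\{v_{ab}\}-t_{j}$ forces $\{a,b\}=\{i,j\}$, yielding case (a); otherwise we are in case (b).

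Given the dichotomy, the remainder is a Markov-style expectation argument. Let $\calD$ be any $(2-\epsilon,y)$-rDAM of $G$ and let $X_{ij}$ be the indicator of the event ``$v_{ij}$ is preserved in $H\sim\calD$.'' The dichotomy yields
\[
\expect{d_{H}(t_{i},t_{j})}\;\geq\; 2\Pr[X_{ij}=1]+4\Pr[X_{ij}=0]\;=\;4-2\Pr[X_{ij}=1],
\]
which combined with the rDAM upper bound $\expect{d_{H}(t_{i},t_{j})}\leq 2(2-\epsilon)$ gives $\Pr[X_{ij}=1]\geq\epsilon$ for every pair. Because the non-terminals of $H$ are exactly the preserved $v_{ij}$'s, the number of non-terminals in each $H$ equals $\sum_{i<j}X_{ij}$; averaging,
\[
y\;\geq\;\expect{\textstyle\sum_{i<j}X_{ij}}\;=\;\sum_{i<j}\Pr[X_{ij}=1]\;\geq\;\epsilon\binom{k}{2},
\]
which exceeds $\epsilon^{3}k^{2}/150$ for every $\epsilon\in(0,1]$ and $k\geq 2$, contradicting the assumed size bound.

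The main obstacle lies in the dichotomy step: one must use the structural lemmas to pin down non-terminal super-nodes as singletons and simultaneously exploit the pointwise $d_{H}\geq d_{G}$ constraint to rule out terminal-to-terminal shortcut edges, thereby forcing $H$ to stay bipartite between the two sides. Once this structural claim is in place, the randomized conclusion follows from the single-pair Markov estimate summed over all pairs.
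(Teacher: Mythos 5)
Your construction with $s=2$ (the path $t_1$--$v$--$t_2$ as $G^*$, i.e.\ all $\binom{k}{2}$ subdivided edges of $K_k$) does not work, and the failure is exactly in the dichotomy step you identify as the heart of the argument. In the paper's model a minor $H=(V',E',\ell')$ comes with its \emph{own} length function $\ell'$, which may be assigned freely subject only to the domination constraint $d_H\geq d_G$; this is explicit in the DAM definition, in Algorithm \textsc{MinorSparsifier} (which resets contracted edge lengths to $d_H(u,w)$), and in the proof of Lemma \ref{lm: randomGupta} (``every edge of $H_t$ must set weights of size $2$''). So if $v_{ij}$ is contracted into $t_i$, the resulting edge between $S(t_i)$ and $S(t_j)$ is simply given length $2$, and $d_H(t_i,t_j)=2=d_G(t_i,t_j)$ with \emph{no} distortion and \emph{no} retained non-terminal. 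Your claim that such an edge ``would have length $1$'' presumes lengths are inherited from $G$, which is not the model (and under which SPR would be vacuously impossible). Consequently regime (b) of your dichotomy is false, $\Pr[X_{ij}=1]$ need not be bounded below, and the whole Markov argument collapses. This is precisely why Lemmas \ref{lm: gupta1} and \ref{lm: randomGupta} require stars with at least $3$ leaves: a star with $2$ leaves admits an exact terminal-only minor, so the $s=2$ gadget carries no lower bound at all. A further warning sign is that your conclusion $y\geq\epsilon\binom{k}{2}$ would be strictly stronger than the theorem's $\epsilon^3k^2/150$; the cubic dependence on $\epsilon$ in the paper arises because one must take $s=\Theta(1/\epsilon)$ so that the randomized star bound $2(1-1/s)$ approaches $2$, and then balance $s$ against the threshold $c_1$ in Claim \ref{claim:contracted-whp} controlling the probability that some group's center is contracted.

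The paper's actual route: build $G$ from an $(s,2)$-Steiner system with $G^*$ a star on $s\geq 3$ leaves, show (Claim \ref{claim:contracted-whp}) that if every minor in the support has fewer than $\binom{k}{2}/c_1$ non-terminals then some group's center is contracted with probability at least $1-s^2/(2c_1)$, and condition on that event to invoke the randomized star lower bound $2(1-1/s)$ for some pair inside that group, yielding expected distortion at least $2-(2/s+s^2/c_1)$; choosing $s=5/\epsilon$ and $c_1=2s^2/\epsilon$ gives the stated bound. If you want to repair your write-up, replace the $s=2$ gadget by a star with $s=\Theta(1/\epsilon)$ leaves and replace the deterministic dichotomy by this conditional-expectation argument.
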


%For our concrete results, $\Gs$ is a tree, which allows us to apply the lower bound results on distortion for SPR problem on trees~\cite{gupta01,chan}.
%We note that in our construction, there is no edge across two different groups, i.e., the endpoints of each edge in $G$ must belong to the same group.

\subsection{Proof of Theorem \ref{thm:lower-bound-star}}

\begin{figure}[htp]
\centering
\begin{minipage}{.35\textwidth}
  \centering
\begin{tikzpicture}
\tikzstyle{vertex}=[circle, fill=white, draw = black, minimum size = 6pt, inner sep=2pt]
\tikzstyle{vertex1}=[fill = white, draw = white]

\tikzstyle{edge}=[-,thick ]
\tikzstyle{elipse}=[-,thick ]
	
  \node[vertex] (n1) at (0,0) {$1$} ;
  \node[vertex] (n2) at (3,0)  {$6$} ;
  \node[vertex] (n3) at (1.5,3)  {$2$} ;
  \draw[edge] (n1) -- (n2) ;
  \draw[edge] (n2) -- (n3) ;
  \draw[edge] (n3) -- (n1) ;
  \draw[elipse] (1.5,0.92) ellipse (0.91 and 0.91) ;
  \node[vertex] (n4) at (1.5,0) {$5$};
  \node[vertex] (n5) at (0.7,1.35) {$4$};
  \node[vertex] (n6) at (2.3,1.35) {$7$};

  \draw[edge] (n3) -- (n4) ;
  \draw[edge] (n2) -- (n5) ;
  \draw[edge] (n1) -- (n6) ;
  \node[vertex] (n7) at (1.5,0.9)  {$3$} ;

\end{tikzpicture}
\end{minipage}%
\begin{minipage}{.65\textwidth}
  \centering
\begin{tikzpicture}
\tikzstyle{vertex}=[circle,draw = white, fill=black, minimum size = 8pt, inner sep=2pt]
\tikzstyle{vertex1}=[fill = white, draw = white]

\tikzstyle{edge}=[-,thick]
\tikzstyle{edge1}=[-,thick, gray]
\tikzstyle{vertex2}=[thick, draw = black, fill = gray!30, line width = 0.3mm] 
 
  \node[vertex] (n1) at (0,0) {} ;
  \node[vertex] (n2) at (1.25,0) {} ;
  \node[vertex] (n3) at (2.5,0) {} ;
  \node[vertex] (n4) at (3.75,0) {} ;
  \node[vertex] (n5) at (5,0) {};
  \node[vertex] (n6) at (6.25,0) {};
  \node[vertex] (n7) at (7.5,0) {} ;
  
  \node[vertex2] (n124) at (0,2) {} ;
  \node[vertex2] (n235) at (1.25,2) {} ;
  \node[vertex2] (n346) at (2.5,2) {} ;
  \node[vertex2] (n457) at (3.75,2) {} ;
  \node[vertex2] (n561) at (5,2) {} ;
  \node[vertex2] (n672) at (6.25,2) {} ;
  \node[vertex2] (n713) at (7.5,2) {} ;
  
  \draw[edge] (n124) -- (n1) node[below = 2.5pt] {$1$};
  \draw[edge] (n124) -- (n2) node[below = 2.5pt] {$2$};
  \draw[edge] (n124) -- (n4) node[below = 2.5pt] {$4$};
  
  \draw[edge] (n235) -- (n2) ;
  \draw[edge] (n235) -- (n3) node[below = 2.5pt] {$3$};
  \draw[edge] (n235) -- (n5) node[below = 2.5pt] {$5$};
  
  \draw[edge] (n346) -- (n3) ;
  \draw[edge] (n346) -- (n4) ;
  \draw[edge] (n346) -- (n6) node[below = 2.5pt] {$6$};
  
  \draw[edge] (n457) -- (n4) ;
  \draw[edge] (n457) -- (n5) ;
  \draw[edge] (n457) -- (n7) node[below = 2.5pt] {$7$};
  
  \draw[edge] (n561) -- (n5) ;
  \draw[edge] (n561) -- (n6) ;
  \draw[edge] (n561) -- (n1) ;
  
  \draw[edge] (n672) -- (n6) ;
  \draw[edge] (n672) -- (n7) ;
  \draw[edge] (n672) -- (n2) ;
  
  \draw[edge] (n713) -- (n7) ;
  \draw[edge] (n713) -- (n1) ;
  \draw[edge] (n713) -- (n3) ;
\end{tikzpicture}
\end{minipage}
\caption{On the left side: a Fano plane corresponding to a $(3,2)$-SS with $k=7$. On the right side: the bipartite graph of the Fano plane constructed using our black-box reduction. Numbered vertices are \emph{terminals} while square-shaped vertices are \emph{non-terminals}.}
\end{figure}
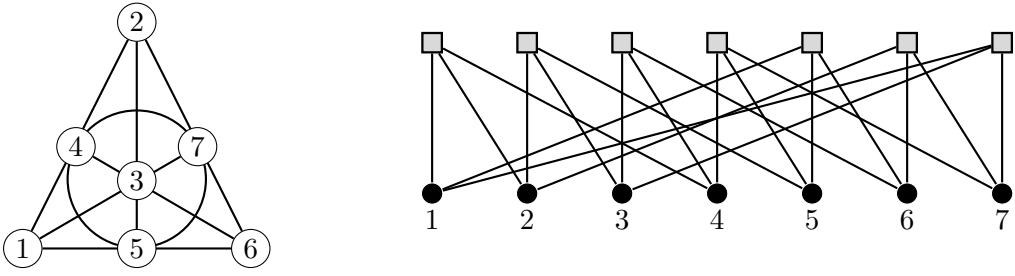

We start by reviewing the lower bound for SPR problem on stars due to Gupta~\cite{gupta01}.
 
\begin{lemma}\label{lm: gupta1}
Let $\Gs = (T \cup \{v\}, E)$ be an unweighted star with $k \geq 3$ terminals, in which $v$ is the center of the star.
Then, every edge-weighted graph only on the terminals $T$ with fewer than $\binom{k}{2}$ edges has distortion at least $2$.
\end{lemma}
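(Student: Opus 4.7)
The plan is a direct counting argument that leverages the non-contraction property built into the definition of distortion. First, I would record that in the unweighted star $\Gs$, every two distinct terminals are at distance exactly $2$, since each terminal is adjacent only to the center $v$. Writing the distortion bound as $d_{\Gs}(t,t') \leq d_H(t,t') \leq \alpha \cdot d_{\Gs}(t,t')$, the lower inequality forces $d_H(t,t') \geq 2$ for every pair of distinct terminals. In particular, for any edge $(t,t') \in E(H)$, the edge is itself a $t$-to-$t'$ path, so its length must satisfy $\ell(t,t') \geq d_H(t,t') \geq 2$.

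Next I would invoke pigeonhole on the edge count. The complete graph on $T$ has exactly $\binom{k}{2}$ edges, so $H$ having strictly fewer edges forces the existence of some pair $t_1,t_2 \in T$ that are not joined by an edge in $H$. Any $t_1$-$t_2$ path in $H$ must then consist of at least two edges, each of length at least $2$, so $d_H(t_1,t_2) \geq 4$ (with the convention $d_H = \infty$ if $H$ is disconnected, which only strengthens the bound). Combining with $d_{\Gs}(t_1,t_2) = 2$ yields distortion at least $4/2 = 2$.

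The argument is essentially pigeonhole applied on top of the non-contraction constraint, so there is no real obstacle; the only point requiring care is the observation that non-contraction forces every edge length in $H$ to be at least the minimum terminal distance in $\Gs$, which is precisely what turns a missing edge into a doubled-distance witness.
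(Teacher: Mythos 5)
Your proof is correct: the domination requirement $d_H(t,t')\ge d_{\Gs}(t,t')=2$ forces every edge of $H$ to have length at least $2$, so any pair left non-adjacent by the edge deficit must be at distance at least $4$, giving distortion $2$. This is exactly the standard argument (the paper cites Gupta for this lemma rather than reproving it, but the same observation appears verbatim in its proof of the randomized analogue, Lemma~\ref{lm: randomGupta}), so there is nothing to add.
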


We construct $G$ using the black-box reduction above. Let $k\in\nn$ be such that the terminals $T=[k]$ admits a $(3,2)$-SS, denoted by $\calT$ (see the figure above).
Here, we set $\calT' = \calT$ and $\Gs$ to be the star with $3$ terminals, as described in Lemma \ref{lm: gupta1}.

By the definition of Steiner system, the shortest path between every pair of terminal $t,t'$ in $G$ is unique,
which is the $2$-hop path within the group that contains both terminals, i.e., $d_G(t,t') = 2$ for all $t,t' \in T$.
Every other simple path between $t,t'$ must pass through an extra terminal, so the length of such simple path is at least $4$.

%Further, we can alternatively view $G$ as a concatenation of unweighted stars on the terminals, where every $T_i$ forms a star with $n_i$ being the center and connected to $3$ leaves corresponding to the terminals from $T_i$. 

%We associate $\calT$ with a bipartite graph $G=(V,E,\ell)$ which we define below. 
%\begin{itemize}
%\item We set $V = T \cup N$, where $T$ is the terminal set and $N = \{n_1,\ldots,n_r\}$ is the set of non-terminal vertices corresponding to each $T_i \in \calT$, $i=1,\ldots,r$. 
%\item For $n_i \in N$ and $t_j \in T_i$, where $i = 1,\ldots,r$ and $j = 1,2,3$, we add $(n_i,t_j)$ to $E$. We define $G$ to be unweighted, i.e. $\ell_1 = 1$, for all $e \in E$. 
%\end{itemize}

Let $H$ be a minor of $G$. Suppose that the number of non-terminals in $H$ is less than $r$, then there exists a group $R$ in which
its non-terminal is not retained (which means that it is either deleted, or contracted into a terminal in that group).
By Lemma \ref{lm: gupta1}, there exists a pair of terminals in that group
such that every simple path within $R$ (which means a path comprising of $R$-edges only) between the two terminals has length at least $4$.
And every other simple path must pass through an extra terminal (just as in $G$), so again it has length at least $4$.
Thus, the distortion of the two terminals is at least $2$.

Therefore, every $(2-\epsilon)$-DAM of $G$ must have $r > k^{2}/7$ non-terminals.

\subsection{Proof of Theorem \ref{thm:lower-bound-distortion-25}}

We will give the proof for the case $\alpha = 2.5$ here, and discuss how to generalize this proof for other distortions.
We will first define the notions of \emph{detouring graph} and \emph{detouring cycle},
and then use them to construct the graph $G$ that allows us to show the lower bound.

\smallskip

\noindent\textbf{Detouring Graph and Detouring Cycle.~}
For any $s\geq 3$, let $k\in\nn$ be such that the terminal set $T = [k]$ admits an $(s,2)$-SS.
Let $\calT = \{T_1,\cdots,T_r\}$ be such an $(s,2)$-SS.
%; recall that $r = \binom{k}{2}\left/\binom{s}{2}\right. \sim \frac{k^2}{s(s-1)}$.
A \emph{detouring graph} has the vertex set $\calT$.
By the definition of Steiner system, $\left|T_i\cap T_j\right|$ is either zero or one.
In the detouring graph, $T_i$ is adjacent to $T_j$ if and only if $\left|T_i\cap T_j\right| = 1$.
Thus, in the detouring graph, it is legitimate to give each edge $(T_i,T_j)$ a \emph{terminal label}, which is the terminal in $T_i\cap T_j$.
A \emph{detouring cycle} is a cycle in the detouring graph such that no two neighboring edges of the cycle have the same terminal label.

\smallskip

\noindent\textbf{Fact.~}Suppose that two edges in the detouring graph have a common vertex,
and their terminal labels are different, denoted by $t,t'$.
Then the common vertex must be an $s$-set in $\calT$ containing both $t,t'$.
By the definition of Steiner system, the $s$-set is uniquely determined.

%Next, we define a \emph{detouring cycle}. $T_{a_1},T_{a_2},\cdots,T_{a_\ell},T_{a_{\ell+1}} = T_{a_1}$ form a detouring cycle of size $\ell\geq 3$
%if the followings hold: (i) they form a cycle of size $\ell$ in the detouring graph; and
%(ii) for each two neighboring vertices $T_{a_i},T_{a_{i+1}}$ in the cycle,
%suppose the edge connecting them is labelled by the terminal in $T_{a_i} \cap T_{a_{i+1}}$,
%then no two adjacent edges in the cycle are labelled by the same terminal.
%\begin{enumerate}
%\item[(i)] They form a cycle of size $\ell$ in the detouring graph.
%\item[(ii)] For each two neighboring vertices $T_{a_i},T_{a_{i+1}}$ in the cycle,
%suppose the edge connecting them is labelled by the terminal in $T_{a_i} \cap T_{a_{i+1}}$.
%Then no two adjacent edges in the cycle are labelled by the same terminal.
%\end{enumerate}

\begin{claim}\label{claim:no-dcycle}
In the detouring graph, number of detouring cycles of size $\ell\geq 3$ is at most $k^\ell$.
\end{claim}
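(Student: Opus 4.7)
The plan is to show that every detouring cycle of length $\ell$ is uniquely encoded by a cyclic sequence of at most $\ell$ terminal labels drawn from $T = [k]$, and then to bound the number of such sequences trivially by $k^{\ell}$.

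First I would pick, for each detouring cycle $C = (T_{i_1}, T_{i_2}, \ldots, T_{i_\ell}, T_{i_1})$, an arbitrary starting vertex and orientation, and associate to $C$ the sequence $(t_1, t_2, \ldots, t_\ell)$, where $t_j$ is the terminal label of the edge $(T_{i_j}, T_{i_{j+1}})$ (indices mod $\ell$). The definition of a detouring cycle guarantees that $t_{j-1} \neq t_j$ for every $j$.

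The key step is to show that the sequence $(t_1, \ldots, t_\ell)$, together with the choice of starting vertex, determines the entire cycle. For each $j$, the two edges of $C$ incident to the vertex $T_{i_j}$ are $(T_{i_{j-1}}, T_{i_j})$ and $(T_{i_j}, T_{i_{j+1}})$, with distinct labels $t_{j-1}$ and $t_j$. By the Fact preceding the claim, $T_{i_j}$ must be the unique $s$-set in $\calT$ containing both $t_{j-1}$ and $t_j$. Hence each $T_{i_j}$ is determined by the adjacent pair $(t_{j-1}, t_j)$, and the cycle is recovered from the label sequence alone.

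Finally, I would count: the number of sequences $(t_1, \ldots, t_\ell) \in T^\ell$ is at most $k^\ell$, and since the encoding map from cycles to sequences is injective (even before quotienting by the $2\ell$ choices of starting point and orientation, which would only improve the bound), the number of detouring cycles of length $\ell$ is at most $k^\ell$. I do not expect a genuine obstacle here; the only subtlety is making sure the Fact is invoked correctly to pin down each vertex from its two neighboring labels, which is exactly what the ``no two neighboring edges have the same terminal label'' condition in the definition of a detouring cycle is designed to enable.
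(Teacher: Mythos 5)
Your proof is correct and follows essentially the same approach as the paper's: encode each detouring cycle by its $\ell$-tuple of terminal labels, use the Fact to recover each vertex as the unique $s$-set containing its two adjacent (distinct) labels, and bound the number of tuples by $k^\ell$. Your write-up just makes the injectivity of this encoding more explicit than the paper does.
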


\begin{proof}
Let $(t_1,\cdots,t_\ell)$ be an $\ell$-tuple, where each entry is a terminal, that represents the terminal labels of a detouring cycle.
By the Fact above, the $\ell$-tuple determines uniquely all the vertices in the detouring cycle.
By trivial counting, the number of possible $\ell$-tuples is at most $k^\ell$,
and hence also the number of detouring cycles of size $\ell$.
\end{proof}

%For each possible labelling on the edges as defined in (ii),
%the two terminal labels on any two adjacent edges determine uniquely their common adjacent vertex in the detouring cycle,
%which is the unique set in $\calT$ that contains both labels.
%Thus, for each such possible labelling, there is \emph{at most} one detouring cycle of size $\ell$.
%As there can be at most $k^\ell$ possible labellings, the number of detouring cycles is at most $k^\ell$.

Our key lemma is: for any $L\geq 3$, we can retain $\Omega_s(k^{L/(L-1)})$ vertices in the detouring graph,
such that the induced graph on these vertices has \emph{no} detouring cycle of size $L$ or less.

\begin{lemma}\label{lem:large-detouring-graph}
For any integer $L\geq 3$, given a detouring graph with vertex set $\calT = \{T_1,T_2,\cdots,T_r\}$,
there exists a subset $\calT '\subset \calT$ of cardinality $\Omega_{s}(k^{L/(L-1)})$
such that the induced graph on $\calT '$ has no detouring cycle of size $L$ or less.
\end{lemma}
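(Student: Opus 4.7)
The plan is a standard probabilistic deletion argument of Erd\H{o}s type, using Claim \ref{claim:no-dcycle} to bound the number of short detouring cycles that must be destroyed.

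First, I sample each vertex of $\calT$ independently with probability $p$ (to be chosen), yielding a random subset $\calT_0 \subseteq \calT$. Since $|\calT| = r = \binom{k}{2}/\binom{s}{2} = \Theta_{s}(k^2)$, the expected size is $\mathbb{E}[|\calT_0|] = pr = \Theta_{s}(pk^2)$. By Claim \ref{claim:no-dcycle}, the number of detouring cycles of size exactly $\ell$ in the whole detouring graph is at most $k^\ell$, so the expected number of detouring cycles of size at most $L$ that survive entirely inside $\calT_0$ is at most $\sum_{\ell=3}^{L} p^\ell k^\ell$. For each such surviving short cycle I delete one of its vertices from $\calT_0$ and call the remaining set $\calT'$. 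By linearity of expectation,
\[
\mathbb{E}[|\calT'|] \;\ge\; pr \;-\; \sum_{\ell=3}^{L} p^\ell k^\ell,
\]
and by construction the induced detouring graph on $\calT'$ contains no detouring cycle of size at most $L$.

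The optimization step is to set $p = c_L \cdot k^{-(L-2)/(L-1)}$ for a sufficiently small constant $c_L$ depending on $L$ (and on $s$ through the $\binom{s}{2}^{-1}$ factor in $r$). With this choice $pk = c_L\, k^{1/(L-1)} \ge 1$ for all sufficiently large $k$, so the cycle sum is dominated by its largest term: $\sum_{\ell=3}^{L} (pk)^\ell \le L(pk)^L = L c_L^L\, k^{L/(L-1)}$. Meanwhile $pr = \Theta_{s}(c_L\, k^{L/(L-1)})$. Choosing $c_L$ small enough makes the $pr$ term strictly dominate, so $\mathbb{E}[|\calT'|] = \Omega_{s}(k^{L/(L-1)})$; hence some realization of the sample achieves this bound, proving the lemma.

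The exponent $L/(L-1)$ is forced by balancing the two terms $pk^2$ and $(pk)^L$, and this is essentially the only choice available. I do not expect any serious obstacle: the argument is a direct pairing of Claim \ref{claim:no-dcycle} with a one-line probabilistic deletion. The only small point to watch is to pick $p$ in the regime $pk \ge 1$, so that the sum $\sum_{\ell=3}^L (pk)^\ell$ is controlled by its largest rather than its smallest summand; the chosen $p$ satisfies this automatically for large $k$.
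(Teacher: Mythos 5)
Your proposal is correct and is essentially identical to the paper's proof: the same Erd\H{o}s-type deletion argument, the same sampling probability $p = \Theta(k^{-(L-2)/(L-1)})$, the same use of Claim \ref{claim:no-dcycle} to bound the expected number of short detouring cycles, and the same balancing to get $\Omega_s(k^{L/(L-1)})$ surviving vertices. The only cosmetic difference is that you bound the cycle sum by $L(pk)^L$ while the paper bounds it by twice its largest term; both suffice.
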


\begin{proof}
%Firstly, we note that $|r| = \binom{k}{2} \left/ \binom{q}{2} \right. \sim \frac{k^2}{q(q-1)}$.
We choose the subset $\calT '$ by the following randomized algorithm:
\begin{enumerate}
\item Each vertex is picked into $\calT '$ with probability $\delta k^{-(L-2)/(L-1)}$, where $\delta = \delta(s) < 1$ is a positive constant
which we will derive explicitly later.
\item While (there is a detouring cycle of size $L$ or less in the induced graph of $\calT '$)\\
\hspace*{0.3in}Remove a vertex in the detouring cycle from $\calT '$
\end{enumerate}

After Step 1, $\expect{|\calT '|} = r \cdot \delta k^{-(L-2)/(L-1)} \ge \frac{\delta}{2s(s-1)} k^{L/(L-1)}$.
Using Claim \ref{claim:no-dcycle}, the expected number of detouring cycles of size $L$ or less is at most
$$\sum_{\ell=3}^L k^\ell \cdot (\delta k^{-(L-2)/(L-1)})^\ell \leq 2 \delta^3 k^{L/(L-1)}.$$
Thus, the expected number of vertices removed in Step 2 is at most $2 \delta^3 k^{L/(L-1)}$.
Now, choose $\delta = 1/\sqrt{8s(s-1)}$. By the end of the algorithm,
$$\expect{|\calT '|} \geq \frac{\delta}{2s(s-1)} k^{L/(L-1)} - 2 \delta^3 k^{L/(L-1)} = \Omega(k^{L/(L-1)}).\vspace*{-0.2in}$$
\end{proof}

\noindent\textbf{Construction of $G$ and the Proof.~}
Recall the black-box reduction.
Let $k$ be an integer such that $T=[k]$ admits a $(9,2)$-SS $\calT$.
By Lemma \ref{lem:large-detouring-graph}, we choose $\calT '$ to be a subset of $\calT$ with $|\calT'| = \Omega(k^{5/4})$,
such that the induced graph on $\calT '$ has no detouring cycle of size $5$ or less.
We choose $\Gs$ to be a complete ternary tree of height $2$, in which the $9$ leaves are the terminals.
For each $T_i\in \calT '$, we add four non-terminals to $T_i$, altogether forming a \emph{group}.

The following lemma is a direct consequence that the induced graph on $\calT'$ has no detouring cycle of size $5$ or less.

\begin{lemma}\label{lem:types-of-simple-paths}
For any two terminals $t,t'$ in the same group, let $R$ denote the group.
Then, in any minor $H$ of $G$, every simple path from $t$ to $t'$ either comprises of $R$-edges only,
or it comprises of edges from at least $5$ groups other than $R$.
\end{lemma}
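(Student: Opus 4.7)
The plan is to translate any simple path $P$ in $H$ from $t$ to $t'$ that uses at least one non-$R$ edge into a closed detouring walk in the detouring graph that starts and ends at $R$, and then use pigeonhole to extract a short detouring cycle whenever $P$ uses too few non-$R$ groups, contradicting the defining property of $\calT'$. First, using Lemma \ref{lem:unique-group-for-edge} I read off the group label of each edge of $P$ and compress consecutive duplicates to obtain a sequence $d_1, d_2, \ldots, d_m$. By Lemma \ref{lem:interchange-at-terminal}, consecutive $d_i, d_{i+1}$ are adjacent in the detouring graph with edge-label $l_i$ equal to the unique terminal of $d_i \cap d_{i+1}$, and this terminal lies in the transition super-node of $P$ between those two groups. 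Since $P$ is simple these transition super-nodes are pairwise distinct, and each terminal lies in exactly one super-node, so $l_i \neq l_{i+1}$; thus $d_1, \ldots, d_m$ is already a detouring walk.

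Next I would anchor this walk at $R$ on both sides. The key sub-step is: if the first edge of $P$ is a $d_1$-edge with $d_1 \neq R$, then $t \in d_1$. The reason is structural: $S(t)$ is connected in $G$, contains $t$, and contains no other terminal because a minor cannot merge two terminals; moreover every edge of $G$ lies inside a single group. So along any $G$-path inside $S(t)$ from $t$ to a vertex of $d_1$, the first vertex lying in $d_1$ must be a terminal of $d_1$ (if it were a non-terminal, its single-group membership would force its predecessor to also lie in $d_1$, contradicting minimality), and the only terminal available in $S(t)$ is $t$ itself. Hence $t \in R \cap d_1$, and by the Steiner system $R \cap d_1 = \{t\}$; symmetrically $R \cap d_m = \{t'\}$ when $d_m \neq R$. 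Prepending and/or appending $R$ therefore yields a closed detouring walk $R = e_0, e_1, \ldots, e_M = R$, whose extension-edge labels $t, t'$ differ from the adjacent internal labels by the same distinct-super-node reasoning (the super-nodes holding $t, t'$ are the path endpoints, distinct from every transition super-node). The degenerate case $m = 1, d_1 \neq R$ is simultaneously ruled out because it would place $\{t, t'\}$ into $R \cap d_1$, violating the Steiner system, so in fact $M \geq 3$ whenever $P$ uses any non-$R$ edge.

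To conclude, I assume for contradiction that $P$ uses at most four groups other than $R$; then $\{e_0, \ldots, e_M\}$ lies in a set of at most five groups, and the first repetition index $j$ with $e_j \in \{e_0, \ldots, e_{j-1}\}$ satisfies $j \leq 5$. The closed subwalk $e_i, e_{i+1}, \ldots, e_j = e_i$ has length $j - i \leq 5$. Length $1$ is excluded by deduplication, and length $2$ is excluded because the two walk-edges would traverse the single detouring-graph edge $(e_i, e_{i+1})$ back and forth with identical label (the unique terminal of $e_i \cap e_{i+1}$), violating consecutive-label-distinctness. So the subwalk is a genuine cycle of length in $\{3, 4, 5\}$ whose neighboring labels all differ (including the wrap-around pair at $e_i = e_j$, again by the distinct-super-node argument), i.e., a detouring cycle of size at most $5$ in the induced detouring graph on $\calT'$, contradicting Lemma \ref{lem:large-detouring-graph}. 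The hardest step will be the anchoring at $R$: without the observation that reaching a non-terminal of $d_1$ from $t$ inside the single-terminal connected set $S(t)$ forces $t \in d_1$, one cannot close the walk back to $R$, and the pigeonhole argument would not produce a short cycle.
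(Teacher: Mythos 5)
Your proof is correct, and it takes exactly the route the paper intends: the paper states this lemma as a ``direct consequence'' of $\calT'$ having no detouring cycle of size $\leq 5$ without spelling out the argument, whereas you supply the full verification (group-labelling the edges via Lemma \ref{lem:unique-group-for-edge}, turning the path into a label-distinct detouring walk via Lemma \ref{lem:interchange-at-terminal}, anchoring both ends at $R$ via the Type-B fact, and extracting a cycle of length $3$--$5$ by pigeonhole). The anchoring step you flag as hardest is indeed the one substantive point the paper leaves implicit, and your treatment of it — together with the exclusion of the length-$1$ and length-$2$ degenerate subwalks — is sound.
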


\begin{pfof}{Theorem \ref{thm:lower-bound-distortion-25}}
Let $H$ be a $(2.5-\epsilon)$-DAM of $G$, for some $\epsilon>0$.
Suppose that there exists a group such that all its non-terminals are not retained in $H$.
By \cite{gupta01}, there exists a pair of terminals $t,t'$ in that group such that
every simple path between $t$ and $t'$, which comprises of edges of that group only, has length at least $3\cdot d_G(t,t')$.

By Lemma \ref{lem:types-of-simple-paths} and Lemma \ref{lem:interchange-at-terminal},
any other simple path $P$ between $t$ and $u$ passes through at least $4$ other terminals,
say they are $t_a,t_b,t_c,t_d$ in the order of the direction from $t$ to $t'$.
We denote this path by $P := t\ra t_a\ra t_b\ra t_c\ra t_d\ra t'$, by ignoring the non-terminals along the path.
Between every pair of consecutive terminals in $P$, the length is at least $2$.
Thus, the length of $P$ is at least $10$.
Since $d_G(t,t')\leq 4$, the length of $P$ is at least $2.5 \cdot d_G(t,t')$.

Thus, the length of \emph{every} simple path from $t$ to $t'$ in $H$ is at least $2.5\cdot d_G(t,t')$, a contradiction.
Therefore, at least one non-terminal in each group is retained in $H$. As there are $\Omega(k^{5/4})$ groups, we are done.
\end{pfof}

For the other results in Theorem \ref{thm:lower-bound-distortion-25},
we follow the above proof almost exactly, with the following modifications.
Set $s=3^h$ for some $h\geq 2$, and set $\Gs$ to be a complete ternary tree with height $h$, in which the leaves are the terminals.
Let $\alpha_h$ be a lower bound on the distortion for the SPR problem on $\Gs$.
Apply Lemma \ref{lem:large-detouring-graph} with some integer $h < L \leq \lceil\alpha_h h\rceil$.\footnote{Any choice of $L$
larger than $\lceil\alpha_h h\rceil$ will not improve the result.}
Following the above proof, attaining a distortion of $\min\left\{\frac{L}{h},\alpha_h\right\}-\epsilon$ needs $\Omega(k^{L/(L-1)})$ non-terminals.

The last puzzle we need is the values of $\alpha_h$.
Chan et al.~\cite{chan} proved that for complete binary trees of height $h$, $\lim_{h\ra +\infty} \alpha_h = 8$,
but they did not give explicit values of $\alpha_h$.
We apply their ideas to complete ternary tree of height $h$, to obtain explicit values for $h\leq 5$,
which are used to prove all the results in Theorem \ref{thm:lower-bound-distortion-25}.
The explicit values are $\alpha_2 = 3$, $\alpha_3 = \alpha_4 = 4$ and $\alpha_5 = 4.4$.
We discuss the details for computing these values in Appendix \ref{app:explicit-tree-value}.
%We discuss the details for computing these values in the full version.
%\begin{tabular}{|c||c|c|c|}
%\hline
%$\mathbf{h}$ & 2 & 3,4 & 5\\
%\hline
%$\mathbf{\alpha_h}$ & 3 & 4 & $4.4$\\
%\hline
%\end{tabular}

%The above proof can be generalized to yield the lower bounds in Table , with a few simple modifications.
%In \cite{placeholder}, they showed that for a ternary tree of height $h\geq 2$,
%in which the $3^h$ leaves are terminals and all the internal nodes are non-terminals,
%any terminal-preserving minor of the tree with all Steiner points removed must have distortion of at least $\alpha_h = something$.
%
%We consider $(3^h,2)$-SS on the terminals, and then construct a detouring graph with no detouring cycle of size $L$ or less,
%where $2\alpha_{h-1} h \leq L \leq 2\alpha_h h$.
%With such modifications, we then follow the above proof closely to show that there exists a graph $G$ with $k$ terminals
%which does not have a deterministic $\left(\frac{L}{2h}-\epsilon,c_4 k^{L/(L-1)}\right)$-DAM.

\subsection{Full Generalization of Theorem \ref{thm:lower-bound-distortion-25}, and its Interesting Consequence}\label{subsect:full-general}

Indeed, we can set $\Gs$ as \emph{any} graph.
In our above proofs we used a tree for $\Gs$  because the only known lower bounds on distortion for the SPR problem are for trees.
If one can find a graph $\Gs$ (either by a mathematical proof, or by computer searches)
such that its distortion for the SPR problem is at least $\alpha$,
applying the black-box reduction with this $\Gs$, and  reusing the above proof show that
there exists a graph $G$ with $k$ terminals such that attaining a distortion of $\alpha-\epsilon$
needs $\Omega(k^{1+\delta(\Gs)})$ non-terminals, for some $\delta(\Gs) > 0$.

%Indeed, the graph within each group can be \emph{anything}, not necessarily a tree. Here is the result in full generality:
%Suppose there exists a graph with $q$ terminals and at least one non-terminal, such that its Steiner Point Removal problem has distortion at least $\alpha$,
\begin{theorem}\label{thm:lower-bound-superlinear}
Let $\Gs$ be a graph with $s$ terminals, and the distance between any two terminals is between $1$ and $\beta$.
Suppose the distortion for the \emph{SPR} problem on $\Gs$ is at least $\alpha$.
Then, for any positive integer $\max\{2,\lceil\beta\rceil\} \leq L\leq \left\lceil \alpha\beta \right\rceil$,
there exists a constant $c_4 := c_4(s) > 0$, such that for infinitely many $k\in \nn$,
there exists a graph with $k$ terminals which does not have a
$\left(\min\left\{L/\beta,\alpha\right\}-\epsilon,c_4 k^{L/(L-1)}\right)$-\emph{DAM}, for all $\epsilon>0$.
\end{theorem}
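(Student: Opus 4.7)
The plan is to replay the proof of Theorem~\ref{thm:lower-bound-distortion-25} verbatim, with the specific parameters there (complete ternary tree of height $2$, $s=9$, $L=5$, $\alpha=3$, $\beta=4$) replaced by the abstract $\Gs$, $s$, $L$, $\alpha$, $\beta$ supplied by the hypothesis. First, invoke Lemma~\ref{lem:SSS-exist} to obtain infinitely many $k$ for which $T=[k]$ admits an $(s,2)$-SS $\calT$, and then apply Lemma~\ref{lem:large-detouring-graph} with the given $L$ to extract $\calT'\subseteq\calT$ of cardinality $\Omega_s(k^{L/(L-1)})$ whose induced detouring graph contains no detouring cycle of size $\leq L$. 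Now execute the black-box reduction: for each $s$-set in $\calT'$ attach $q$ fresh non-terminals and embed a copy of $\Gs$ on the resulting group to build the graph $G$. By the Steiner property together with the hypothesis on terminal distances in $\Gs$, $d_G(t,t')\leq d_{G^*}(t,t')\leq\beta$ whenever $t,t'$ lie in a common group.

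Next I would prove the obvious analogue of Lemma~\ref{lem:types-of-simple-paths} with $L$ in place of $5$: for any minor $H$ of $G$ and any terminals $t,t'$ in a common group $R$, every simple $t$-$t'$ path in $H$ either consists of $R$-edges only, or uses edges from at least $L$ distinct groups other than $R$. The argument is unchanged: partitioning the edges of the path by their group label and applying Lemma~\ref{lem:interchange-at-terminal} shows that fewer than $L$ non-$R$ groups would yield a detouring cycle of size $\leq L$ in $\calT'$, contradicting our choice of $\calT'$.

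For the main step, suppose for contradiction that $H$ is a $(\min\{L/\beta,\alpha\}-\epsilon)$-DAM of $G$ with fewer than $|\calT'|$ non-terminals. By the same pigeonhole argument as in Theorem~\ref{thm:lower-bound-distortion-25}, some group $R$ has all of its non-terminals absorbed into the terminal super-nodes of $R$, so the subgraph of $H$ formed by its $R$-edges is a weighted graph on the $s$ terminals of $\Gs$ only; the $\alpha$-SPR lower bound on $\Gs$ then supplies $t,t'\in R$ whose shortest $R$-only path in $H$ has length $\geq\alpha\cdot d_{G^*}(t,t')\geq\alpha\cdot d_G(t,t')$. Any other simple $t$-$t'$ path in $H$ uses $\geq L$ distinct non-$R$ groups, and decomposing it at its group-transition terminals (all terminals, by Lemma~\ref{lem:interchange-at-terminal}) splits it into at least $L$ consecutive terminal-to-terminal sub-paths, each of length $\geq 1$ (the minimum inter-terminal distance in $G$). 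Hence this path has length $\geq L\geq(L/\beta)\cdot d_G(t,t')$, and combined with the $R$-only bound we get $d_H(t,t')\geq\min\{\alpha,L/\beta\}\cdot d_G(t,t')$, contradicting the assumed distortion. Setting $c_4(s)$ to be the constant implicit in Lemma~\ref{lem:large-detouring-graph} finishes the proof. The only delicate bookkeeping is the count ``at least $L$ distinct non-$R$ groups imply at least $L$ terminal-to-terminal segments of length $\geq 1$'', which replays the ``5 groups, 4 intermediate terminals, 5 segments'' pattern of Theorem~\ref{thm:lower-bound-distortion-25}; everything else is mechanical.
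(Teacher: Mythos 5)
Your proposal is correct and matches the paper's intended argument: the paper proves Theorem~\ref{thm:lower-bound-superlinear} only by remarking that one should rerun the black-box reduction and the proof of Theorem~\ref{thm:lower-bound-distortion-25} with a general $\Gs$, which is exactly what you do (including the $L$-group analogue of Lemma~\ref{lem:types-of-simple-paths} via Lemma~\ref{lem:large-detouring-graph}, the pigeonhole on groups, and the two-case bound giving $\min\{L/\beta,\alpha\}$). The one point worth making explicit is the role of the hypothesis $L\geq\lceil\beta\rceil$: it forces every non-$R$ path between two terminals of a group $R$ to have length at least $L\geq\beta\geq d_{\Gs}(t,t')$, so that $d_G(t,t')=d_{\Gs}(t,t')$ within each group and the $R$-edge subgraph of $H$ dominates $d_{\Gs}$, which is needed to legitimately invoke the SPR distortion lower bound for $\Gs$.
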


The above theorem has an interesting consequence. For the SPR problem on general graphs, the best known lower bound is $8$,
while the best known upper bound is $\calO(\log^5 k)$ \cite{KammaKN2015}.
There is a huge gap between the two bounds, and it is not clear where the tight bound locates in between.
Suppose that the tight lower bound on SPR is super-constant.
Then for any positive constant $\alpha$, there exists a graph $\Gs_\alpha$ with $s(\alpha)$ terminals and some non-terminals, such that the distortion is larger than $\alpha$.
By Theorem \ref{thm:lower-bound-superlinear}, $\Gs_\alpha$ can be used to construct a family of graphs with $k$ terminals,
such that to attain distortion $\alpha$, the number of non-terminals needed is super-linear in $k$.
Recall that in SPR, no non-terminal can be retained. In other words, Theorem \ref{thm:lower-bound-superlinear} implies that:
\emph{if retaining no non-terminal will lead to a super-constant lower bound on distortion,
then having the power of retaining any linear number of non-terminals will not improve the lower bound to a constant}.

\newcommand{\lspr}{\text{\textsf{LSPR}}}

Formally, we define the following generalization of SPR problem.
Let $\lspr_y$ denote the problem that for an input graph with $k$ terminals,
find a DAM with at most $yk$ non-terminals so as to minimize the distortion;
the SPR problem is equivalent to $\lspr_0$.
\begin{theorem}\label{thm:spr-vs-lspr}
For general graphs, \emph{SPR} has super-constant lower bound on distortion
if and only if for any constant $y\geq 0$, \emph{$\lspr_y$} has super-constant lower bound on distortion.
\end{theorem}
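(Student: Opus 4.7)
The plan is to address the two implications separately, with essentially all the work concentrated in the forward direction. The reverse direction is immediate: since $\lspr_0$ is by definition the SPR problem itself, a super-constant distortion lower bound for every $\lspr_y$ specializes at $y=0$ to a super-constant lower bound for SPR. So the effort lies entirely in the forward implication.

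Assume SPR on general graphs has a super-constant distortion lower bound. Then for every constant $\alpha > 1$ there is a witness graph $\Gs_\alpha$ with a constant number $s(\alpha)$ of terminals whose minimum-achievable SPR distortion is at least $\alpha$; after scaling edge lengths I may assume its pairwise terminal distances lie in $[1,\beta(\alpha)]$ for some constant $\beta(\alpha)$. The plan is to feed $\Gs_\alpha$ into Theorem \ref{thm:lower-bound-superlinear} as the ``small'' graph $\Gs$, with the parameter choice $L := \lceil \alpha\,\beta(\alpha)\rceil$. For $\alpha$ large enough this $L$ lies in the admissible range $[\max\{2,\lceil\beta\rceil\},\,\lceil\alpha\beta\rceil]$, and the inequality $L/\beta \geq \alpha$ forces $\min\{L/\beta,\alpha\} = \alpha$. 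Theorem \ref{thm:lower-bound-superlinear} therefore produces, for infinitely many $k$, a graph $G_{\alpha,k}$ with $k$ terminals that admits no $(\alpha-\epsilon,\ c_4(s(\alpha))\,k^{L/(L-1)})$-\emph{DAM}, for every $\epsilon > 0$.

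Now fix any constant $y \geq 0$. Since $L \geq 2$, the exponent $L/(L-1)$ is strictly greater than $1$, so $c_4(s(\alpha))\,k^{L/(L-1)} \geq yk$ once $k$ is sufficiently large; along the infinite family of $k$'s from the previous step, this absorbs the entire linear non-terminal budget allowed by $\lspr_y$. Hence every $\lspr_y$-solution on $G_{\alpha,k}$ has distortion at least $\alpha - \epsilon$, so the worst-case $\lspr_y$-distortion over graphs with $k$ terminals is at least $\alpha - \epsilon$ for infinitely many $k$. Because $\alpha$ was arbitrary, this worst-case distortion is unbounded in $k$, which is exactly what super-constant means.

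The main subtlety is balancing the parameter $L$ so that two conditions hold simultaneously: (i) the guaranteed forced distortion $\min\{L/\beta,\alpha\}$ equals $\alpha$, so that pushing $\alpha\to\infty$ in the hypothesis actually pushes the $\lspr_y$ lower bound to infinity in the conclusion; and (ii) the non-terminal threshold $k^{L/(L-1)}$ remains strictly super-linear, so that the linear budget $yk$ cannot catch up. The single choice $L=\lceil \alpha\beta\rceil$ satisfies both, after which everything downstream is routine and uses nothing beyond Theorem \ref{thm:lower-bound-superlinear}.
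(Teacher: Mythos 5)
Your proof is correct and follows essentially the same route as the paper: the reverse direction is the trivial specialization $\lspr_0 = \text{SPR}$, and the forward direction feeds each witness graph $\Gs_\alpha$ into Theorem \ref{thm:lower-bound-superlinear} to get a super-linear non-terminal requirement $\Omega(k^{L/(L-1)})$ that eventually dominates any linear budget $yk$. Your explicit choice $L=\lceil\alpha\beta\rceil$, which makes $\min\{L/\beta,\alpha\}=\alpha$ while keeping $L/(L-1)>1$, is exactly the parameter balancing the paper's (informal) argument relies on.
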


\subsection{Proof of Theorem \ref{thm:rand-lower-bound-star}}

In this subsection we give a lower bound for rDAM. The strategy we follow will be very similar to that of Theorem \ref{thm:lower-bound-star}. In fact, one can view it as a randomized version of that proof. 
We start with the following lemma, which generalizes the deterministic SPR lower bound of Gupta in Lemma \ref{lm: gupta1} to randomized minors.

\begin{lemma} \label{lm: randomGupta}
Let $\Gs = (T \cup \{v\}, E)$ be an unweighted star with $k \geq 3$ terminals, in which $v$ is the center of the star.
Then, for every probability distribution over minors of $\Gs$ with vertex set $T$,
there exists a terminal pair with distortion at least $2(1-1/k)$.
\end{lemma}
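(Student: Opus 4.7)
The plan is to imitate Gupta's deterministic argument (Lemma~\ref{lm: gupta1}) but upgrade it by a simple linearity-of-expectation / averaging step that lets us extract a single terminal pair with large expected distortion from a distribution over minors.

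First I would pin down the structure of every minor $H$ in the support of $\pi$. Since no two terminals may be contracted together and $V(H)=T$, the unique non-terminal $v$ of $G^*$ must be contracted into exactly one terminal $t_v\in T$ (otherwise $v$ is deleted and the minor is edgeless, trivially giving infinite distortion). After this contraction $H$ is a subgraph of the star on $T$ centered at $t_v$, and if all pairwise distances in $H$ are finite — which we may assume, else a pair with infinite expected distance already witnesses the bound — then $H$ must be exactly that full star. In particular $|E(H)| = k-1$ for every $H$ in the support.

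Second I would exploit the rDAM constraint $d_H(t,t')\ge d_{G^*}(t,t')=2$. Since every edge $\{t,t'\}$ of $H$ satisfies $\ell'(t,t')\ge d_H(t,t')\ge 2$, every edge weight is at least $2$, and therefore any terminal pair that is non-adjacent in $H$ has $d_H$-distance at least $4$ (two edges of weight $\ge 2$).

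The final step is averaging. For each pair $\{t,t'\}\subseteq T$, let $p(t,t'):=\Pr_{H\sim\pi}[\{t,t'\}\in E(H)]$. By linearity of expectation,
\[
\sum_{\{t,t'\}\subseteq T} p(t,t') \;=\; \mathbb{E}_{H\sim\pi}\bigl[|E(H)|\bigr] \;=\; k-1,
\]
so some pair $(t^*,t'^*)$ satisfies $p(t^*,t'^*) \le (k-1)/\binom{k}{2} = 2/k$. Combining with Step 2,
\[
\mathbb{E}_{H\sim\pi}\bigl[d_H(t^*,t'^*)\bigr] \;\ge\; 2\,p(t^*,t'^*) + 4\,(1-p(t^*,t'^*)) \;=\; 4 - 2\,p(t^*,t'^*) \;\ge\; 4 - \tfrac{4}{k},
\]
and dividing by $d_{G^*}(t^*,t'^*)=2$ yields expected distortion at least $2(1-1/k)$. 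The only mildly delicate point, which I would verify carefully, is the reduction to the case where every $H$ in the support is a full star on $T$; disconnected minors either have pairs with infinite expected distance (immediate win) or, equivalently, can be absorbed by the observation that $|E(H)|\le k-1$ still holds, in which case the same averaging gives the bound with $\le$ replacing the equality throughout.
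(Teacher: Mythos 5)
Your proof is correct. The structural setup is the same as the paper's: in both arguments the support of the distribution consists of the $k$ stars $H_t$ obtained by contracting $v$ into some terminal $t$, every edge weight is forced to be at least $2$ by the domination requirement, and a pair's distortion is $1$ or $2$ according to whether one of its endpoints is the contraction target. Where you diverge is the finishing step. The paper computes the exact expected distortion of each pair, $D(t',t'') = 2 - \pi_{t'} - \pi_{t''}$, writes down the linear program minimizing the maximum of these, and invokes a result on symmetric LPs to conclude that the uniform distribution $\pi_t = 1/k$ is optimal, giving $\lambda = 2(1-1/k)$. You instead observe that $\sum_{\{t,t'\}} p(t,t') = \mathbb{E}[|E(H)|] \le k-1$ and apply averaging over the $\binom{k}{2}$ pairs to find one with adjacency probability at most $2/k$, which yields the same bound. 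Your route is more elementary (no LP machinery or symmetry lemma needed) and extends verbatim to supports containing degenerate minors, since only the inequality $\mathbb{E}[|E(H)|] \le k-1$ is used; the paper's route has the mild advantage of exhibiting the extremal distribution explicitly, showing the bound $2(1-1/k)$ is tight. One small presentational point: when you write $\ell'(t,t') \ge d_H(t,t') \ge 2$, the first inequality is the one you actually need only in the direction $d_H(t',t_v), d_H(t_v,t'') \ge 2$ for the non-adjacent case, which follows directly from the rDAM domination condition applied to those terminal pairs; phrasing it that way avoids any discussion of edge weights versus distances.
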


We now continue with the construction of our input graph. For some constant $s \geq 3$ and some integer $k$, we construct a $(s,2)$-SS of the terminal set $T$
and denote it by  $\mathcal{T} = \{T_1,\ldots,T_r\}$, where $r = \binom{k}{2}/\binom{s}{2} \geq 2\binom{k}{2}/s^2$.
Similarly to the proof of Theorem \ref{thm:lower-bound-star}, we apply the black-box reduction with $\calT' = \calT$,
and set $\Gs$ as a star with $c_1$ terminals, to generate a bipartite graph $G$.
% with $T$ being the terminal set. 
For any constant $c_1 > 0$, we define the family of minors
$$ \mathcal{L} := \{H : H \text{ is a minor of } G \text{ and } |V(H)| < \tbinom{k}{2}/c_1\}.$$

\begin{claim} \label{claim:contracted-whp}
Let $\pi$ be any probability distribution over $\mathcal{L}$.
%, and let $c_1 \geq 3,~c_2 > 0$ be two constants. 
There exists a non-terminal of $G$ that is involved in an edge contraction with probability at least $1 - s^{2}/2c_1$ under $\pi$.
\end{claim}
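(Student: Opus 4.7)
The plan is a straightforward averaging argument over the non-terminals of $G$, preceded by one structural observation that is special to the current choice of $\Gs$. Because $\Gs$ is a star, each group of $G$ contributes exactly one non-terminal, and all edges of $G$ lie strictly inside a single group; hence non-terminals of $G$ belonging to distinct groups are non-adjacent. Consequently every induced-connected subset of $V(G)$ that avoids $T$ has size one, and so every non-terminal super-node of any minor $H$ of $G$ is a singleton partition $\{v\}$ for some non-terminal $v$ of $G$. In particular, the number of non-terminals of $H$ equals the number of non-terminals of $G$ that appear as singletons in $H$.

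Given this, the argument is essentially a one-line calculation. For each non-terminal $v$ of $G$, let $X_v(H)$ be the indicator that $v$ is a singleton partition in $H$, and put $p_v := \expect{X_v(H)}$ with $H \sim \pi$. Because every $H \in \mathcal{L}$ has $|V(H)| < \tbinom{k}{2}/c_1$ and contains exactly $k$ terminal super-nodes, the number of singleton non-terminals in $H$ is strictly less than $\tbinom{k}{2}/c_1$. By linearity of expectation,
\[
\sum_{v \notin T} p_v \;=\; \expect{\sum_{v \notin T} X_v(H)} \;<\; \tbinom{k}{2}/c_1.
\]
Since the number of non-terminals of $G$ is $r = \tbinom{k}{2}/\tbinom{s}{2} \geq 2\tbinom{k}{2}/s^2$, averaging yields a non-terminal $v^{\ast}$ with $p_{v^{\ast}} < s^2/(2c_1)$; equivalently, $\Pr_\pi[v^{\ast} \text{ is not a singleton in } H] \geq 1 - s^2/(2c_1)$, which is what the claim asserts.

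The single conceptually delicate point is the identification of ``not retained as a singleton'' with ``involved in an edge contraction''. A non-singleton non-terminal $v^{\ast}$ could either be contracted with the unique terminal of its group to which it is adjacent, or be absent from every partition (deleted outright). I would handle this by a short reduction showing that, without loss of generality, $\pi$ places no mass on minors that delete a non-terminal: replacing a deletion of $v^{\ast}$ by its contraction into an arbitrary terminal of its group produces a minor still in $\mathcal{L}$ with the same value of $|V(H)|$, whose pairwise terminal distances can only decrease, so this substitution cannot help the adversary in the subsequent proof of Theorem \ref{thm:rand-lower-bound-star}. The averaging itself is routine; the main obstacle I anticipate is merely stating this deletion-versus-contraction reduction cleanly enough that the later application of Lemma \ref{lm: randomGupta} proceeds without fuss.
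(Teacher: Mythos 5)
Your proof is correct and follows essentially the same route as the paper's: both bound the expected number of retained non-terminals by $\binom{k}{2}/c_1$ using the size constraint defining $\mathcal{L}$ and then average over the $r \geq 2\binom{k}{2}/s^2$ non-terminals (the paper phrases this as a contradiction via the probabilistic method, you phrase it as direct averaging). Your explicit handling of the deletion-versus-contraction distinction addresses a point the paper's own proof silently glosses over, so it is added care rather than a different approach.
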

\begin{proof}
Suppose that for the sake of contradiction that the claim is not true. Then every non-terminal of $G$ is contracted with probability strictly less than $1 - s^{2}/2c_1$. This implies that every non-terminal of $G$ is \textit{not} contracted with probability at least $s^{2}/2c_1$, and hence
$$
	\mathbb{E}_{\pi}[  \text{number of non-terminals} ] > \frac{s^2}{2c_1} \cdot \frac{2\binom{k}{2}}{s^{2}} = \frac{\binom{k}{2}}{c_1}.
$$ 
The inequality along with the probabilistic method imply that there exists a minor $H$ in the support of $\pi$ with at least $\binom{k}{2}/c_1$ non-terminals, thus violating the properties of the members of $\mathcal{L}$, which leads to a contradiction. 
\end{proof}

\begin{pfof}{Theorem \ref{thm:rand-lower-bound-star}}
Let $v$ be the non-terminal from Claim \ref{claim:contracted-whp} and let $T_i$ be its corresponding set of size $s$. Invoking Lemma \ref{lm: randomGupta} and using conditional expectations, we get that there exists a terminal pair $(t,t') \in T_i$ such that
\begin{align*}
	\frac{\mathbb{E}_{\pi}[d_H(t,t')]}{d_G(t,t')} &~ \geq~  \frac{\mathbb{E}_{\pi}\left[d_H(t,t') ~|~ v \text{ is contracted}\right] \cdot 
	\mathbb{P}_{\pi} \left[v \text{ is contracted}\right]}{d_G(t,t')}  \\
	&~ \geq ~ 2\left(1-\frac{1}{s}\right)\left(1-\frac{s^2}{2c_1}\right) ~~ \geq ~~ 2 - \left(\frac{2}{s} + \frac{s^{2}}{c_1}\right),
\end{align*}
which can be made arbitrarily close to $2$ by setting $s$ and $c_1$ sufficiently large.
To be precise, given any $\epsilon > 0$, by setting $s = 5/\epsilon$ and $c_1 = 2 s^2/\epsilon$,
the distortion is at least $2-\epsilon$.
\end{pfof}

%\GG{Add the argument about the detour! Maybe from the Lemma, as we discussed!}~\YKC{I think this is not needed here; the concept of detouring
%was discussed in the proof for the first theorem, so we may assume the reader have the knowledge about this minor detail.}

%Combining the above arguments, we get that for any $\epsilon > 0$,
%there exists some constant $c=c(\epsilon)$ such that $G$ does not have a $(2-\epsilon-\delta,\binom{k}{2}/c)$-RDAM, for every $\delta > 0$.
%\YKC{$c(\epsilon)$ should be explicitly stated, otherwise readers cannot see the tradeoff between $\epsilon$ and $c(\epsilon)$.}

%\input{lower-bound-distortion-2}

%\input{lower-bound-distortion-3}
\section{Minor Construction for General Graphs}\label{sect:UB-general}
In this section we give  minor constructions that present numerous trade-offs between the distortion and size of DAMs.
Our results are obtained by combining the work of Coppersmith and Elkin~\cite{coppersmithE06} on sourcewise distance preservers with the well-known notion of spanners.% (see \cite{AlthoferDDJS1993}).

Given an undirected graph $G=(V,E, \ell)$ with terminals $T$, we let $\Pi_{u,v}$ denote the shortest path  between $u$ and $v$ in $G$.
Without loss of generality, we assume that for any pair of vertices $(u,v)$, the shortest path connecting $u$ and $v$ is \textit{unique}.
This can be achieved by slightly perturbing the original edge lengths of $G$ such that no paths have exactly the same length (see \cite{coppersmithE06}). The perturbation implies a \emph{consistent} tie-breaking scheme: whenever $\Pi$ is chosen as the shortest path, every subpath of $\Pi$ is also chosen as the shortest path.

For a graph $G$, let $N_G(u)$ denote the vertices incident to $u$ in $G$. We say that two paths $\Pi$ and $\Pi'$ branch at a vertex $u \in V(\Pi) \cap V(\Pi')$ iff $|N_{\Pi \cup \Pi'}(u)| > 2$. We call such a vertex $u$ a \textit{branching} vertex.
Let $\mathcal{P}$ denote the set of shortest paths corresponding to every pair of vertices in $G$. 
We review the following result proved in \cite[Lemma 7.5]{coppersmithE06}. 
\begin{lemma} \label{lemma: elkin} 
%Let $G$ be a graph with consistent tie-breaking scheme among shortest paths.
%Then,
Any pair of shortest paths $\Pi, \Pi' \in \mathcal{P}$ has at most two branching vertices.
\end{lemma}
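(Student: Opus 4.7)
The plan is to argue by contradiction: assume that $\Pi$ and $\Pi'$ share at least three branching vertices $a,b,c$, and derive a contradiction from the uniqueness of shortest paths guaranteed by the consistent tie-breaking scheme.

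First I would record two easy structural facts. \textbf{(i)} Every branching vertex lies on both $\Pi$ and $\Pi'$: a vertex appearing on only one of the two paths has at most two neighbors in $\Pi \cup \Pi'$. \textbf{(ii)} For any two vertices $x,y$ lying on $\Pi$, the sub-arc of $\Pi$ between them is the shortest path from $x$ to $y$ in $G$ (by the consistent tie-breaking mentioned before the lemma), and is therefore the \emph{unique} such shortest path; the analogous statement holds for $\Pi'$. Consequently, if $x$ and $y$ lie on both $\Pi$ and $\Pi'$, the two sub-arcs between them coincide edge-for-edge.

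Next, let $a,b,c$ be three branching vertices, labelled so that they appear in the order $a,b,c$ along $\Pi$. I would do a short case analysis on their relative order along $\Pi'$, which (up to reversing $\Pi'$) reduces to three cases: the order is $a,b,c$, or $a,c,b$, or $b,a,c$. In the first case, fact (ii) forces the $\Pi$-arc from $a$ to $b$ and the $\Pi'$-arc from $a$ to $b$ to be identical, and similarly for $b$ to $c$; but then the two neighbors of $b$ in $\Pi$ coincide with the two neighbors of $b$ in $\Pi'$, so $|N_{\Pi \cup \Pi'}(b)|=2$, contradicting that $b$ is branching. In either of the remaining two cases, fact (ii) forces some pair (e.g.\ the $\Pi$-arc and the $\Pi'$-arc between the two outer vertices in the $\Pi'$-order) to coincide, and this shared arc would make the third branching vertex appear twice on one of the simple paths $\Pi,\Pi'$, a contradiction.

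The main obstacle I anticipate is purely bookkeeping: making the case analysis airtight when one or more of $a,b,c$ is an endpoint of $\Pi$ or $\Pi'$, since the definition of branching counts neighbors in $\Pi \cup \Pi'$ and endpoints contribute only one neighbor per path. I would handle this uniformly by noting that in all subcases the argument above only uses the uniqueness of the internal sub-arcs between consecutive branching vertices, which is unaffected by whether the outer branching vertices are endpoints; the contradiction in Case 1 still comes from the middle vertex $b$ being internal to both paths, and in the other cases from a simplicity violation of $\Pi$ or $\Pi'$.
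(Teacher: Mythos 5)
The paper does not prove this lemma itself but cites it from Coppersmith and Elkin (their Lemma 7.5), whose argument rests on exactly your fact (ii): consistent tie-breaking forces the sub-arcs of $\Pi$ and $\Pi'$ between any two common vertices to coincide, so your proof is correct and in the same spirit. Note that applying fact (ii) just once, to the first and last common vertices along $\Pi$, already shows that $V(\Pi)\cap V(\Pi')$ is a single contiguous segment shared by both paths, whose interior vertices have only two neighbours in $\Pi\cup\Pi'$; this yields the bound of two (the segment's endpoints) directly and spares the three-case order analysis and its endpoint bookkeeping.
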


To simplify our exposition, we introduce the following definition.

\begin{definition}[Terminal Path Cover] Given $G = (V,E, \ell)$ with terminals $T$,
a set of shortest paths $\mathcal{P}' \subset \mathcal{P}$ is an $(\alpha, f(k))$-terminal path cover (abbr.~$(\alpha, f(k))$-TPc) of $G$ if
\begin{enumerate} 
\item $\mathcal{P}'$ covers the terminals, i.e. $T \subseteq V(H)$, where $H = \bigcup_{\Pi \in \mathcal{P}'} E(\Pi)$, 
\item  $|\mathcal{P}'| \leq f(k)$ and $\forall t,t' \in T$, $d_G(t,t') \leq d_H (t,t') \leq \alpha \cdot d_G(t,t')$.
\end{enumerate}
\end{definition}
We remark that the endpoints of the shortest paths in $\mathcal{P}'$ are not necessarily terminals. Now we give a simple algorithm generalizing the one presented by Krauthgamer et al.~\cite{distancepreserving}.

\begin{algorithm}[H]
\caption{\textsc{MinorSparsifier} (graph $G$, terminals $T$, $(\alpha,f(k))$-TPc $\mathcal{P}'$ of $G$)}
\begin{algorithmic}[1]
\State Set $H = \emptyset$. Then add all shortest paths from the path  cover $\mathcal{P}'$ to $H$.
\While{there exists a degree two non-terminal $v$ incident to edges $(v,u)$ and $(v,w)$} 
\State Contract the edge $(u,v)$, then set the length of edge $(u,w)$ to $d_H(u,w)$.
\EndWhile
\State \Return $H$
\end{algorithmic}
\label{algo: minor}
\end{algorithm}

The following lemma gives an upper bound on the size of the DAM output by Algorithm \ref{algo: minor}. It is an easy generalization of a lemma in \cite[Lemma 2.2]{distancepreserving}; for completeness, we give its proof in Appendix \ref{app:branching}.

\begin{lemma} \label{lemma: branching}
For a given graph $G = (V,E, \ell)$ with terminals $T \subset V$ and an $(\alpha,f(k))$-\emph{TPc} $\mathcal{P}'$ of $G$, \textsc{MinorSparsifier($G$,$T$,$\mathcal{P}'$)} outputs an $(\alpha,f(k)^2)$-\emph{DAM} of $G$. 
\end{lemma}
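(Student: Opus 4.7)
The plan is to verify three properties of the output $H$: that it is a minor of $G$, that its distortion on terminal pairs is at most $\alpha$, and that it contains at most $f(k)^2$ non-terminals. The first two are routine consequences of how the algorithm operates; the main work lies in the size bound.

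For the minor property, observe that the initial graph $H_0 := \bigcup_{\Pi \in \mathcal{P}'} E(\Pi)$ constructed in Line~1 is a subgraph of $G$ (obtained by vertex and edge deletions), and the while loop only performs edge contractions of the form $(u,v)$ where $v$ is a non-terminal. Hence no terminal is deleted and no two terminals are ever merged, so $H$ is a minor of $G$. For the distortion bound $d_G(t,t') \le d_H(t,t') \le \alpha \cdot d_G(t,t')$, the lower direction holds because deletions only increase distances and the degree-two contraction with the length update $\ell(u,w) \leftarrow d_H(u,w)$ exactly preserves distances among the surviving vertices: any route from $u$ to $w$ through $v$ had total length $\ell(u,v)+\ell(v,w)$, which equals $d_H(u,w)$ since $v$ had only those two neighbors. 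The upper direction follows because the TPc property already guarantees $d_{H_0}(t,t') \le \alpha \cdot d_G(t,t')$, and the length-preserving contractions inherit this inequality.

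The main obstacle is bounding the number of non-terminals. The key observation is that the while loop terminates precisely when every non-terminal has degree different from two, and that a degree-two contraction cannot increase any vertex's degree or create new branching structure. Consequently, every non-terminal of degree at least $3$ in the final $H$ already had degree at least $3$ in $H_0$, which forces it to be a branching vertex between two distinct paths $\Pi, \Pi' \in \mathcal{P}'$: indeed, three incident edges in $H_0$ cannot all come from a single path, since vertices are interior to a path with degree exactly two there. Applying Lemma~\ref{lemma: elkin}, each pair of paths contributes at most two branching vertices, so the total count is at most $2\binom{f(k)}{2} = f(k)(f(k)-1) < f(k)^2$.

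The one subtle point I anticipate is non-terminal \emph{endpoints} of paths that lie on no other path: these have degree one in $H_0$ and survive the while loop, contributing up to $2f(k)$ additional non-terminals. The natural remedy, implicit in the referenced appendix proof, is to extend the algorithm to also prune degree-one non-terminals, a distance-preserving operation since such vertices cannot sit on any terminal-to-terminal shortest path in $H$. With this pruning, the branching-vertex bound $f(k)(f(k)-1) \le f(k)^2$ is the final count, establishing the lemma.
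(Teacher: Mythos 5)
Your proof is correct and follows essentially the same route as the paper's: the minor property via subgraph-plus-contractions, the distortion bound via distance-preserving degree-two contractions, and the size bound by charging every surviving non-terminal to a branching vertex of a pair of paths and invoking Lemma~\ref{lemma: elkin}. The degree-one endpoint caveat you flag is real but is glossed over in the paper's own proof as well (which accordingly only concludes $\mathcal{O}(f(k)^2)$ rather than the literal $f(k)^2$), so your explicit pruning remedy is, if anything, more careful than the original.
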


A trivial \textit{exact} terminal path cover for any $k$-terminal graph is to take the union of all terminal shortest paths,
which we refer to as the $(1,\calO(k^{2}))$-TPc $\mathcal{P}'$ of $G$.
Krauthgamer et al.~\cite{distancepreserving} used this $(1,\calO(k^{2}))$-TPc to construct an $(1,\mathcal{O}(k^{4}))$-DAM.
Here, we study the question of whether increasing the distortion slightly allows us to obtain a cover of size $o(k^2)$.
We answer this question positively, by reducing it to the well-known spanner problem.

Let $q \geq 1$ be an integer and let $G = (V,E,\ell)$ be an undirected graph. A $q$-spanner of $G$ is a subgraph $S = (V,E_S, \ell)$ such that
$\forall u,v \in V,~d_G(u,v) \leq d_S(u,v) \leq q \cdot d_G(u,v)$ .
We refer to $q$ and $|E_S|$ as the \textit{stretch} and \textit{size} of spanner $S$, respectively.
%It is well-known that a simply greedy algorithm achieves the following guarantees.
\begin{lemma}[\cite{AlthoferDDJS1993}] \label{lemma: spanner}
Let $q \geq 1$ be an integer. Any graph $G = (V,E,\ell)$ admits a $(2q-1)$-spanner $S$ of size $\mathcal{O}(|V|^{1+1/q})$.
\end{lemma}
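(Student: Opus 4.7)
The plan is to use the classical greedy spanner construction of Alth\"{o}fer et al. First I would sort the edges of $G$ in non-decreasing order of weight, and then process them one by one, maintaining a subgraph $S$ which is initially the empty graph on $V$. For each edge $(u,v)$ with weight $w = \ell(u,v)$ considered, I would add $(u,v)$ to $S$ precisely when the current distance $d_S(u,v)$ strictly exceeds $(2q-1) \cdot w$. Let $S$ be the final output.

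The stretch bound is essentially immediate. For every edge $(u,v) \in E$ not included in $S$, the greedy test guarantees $d_S(u,v) \leq (2q-1) \cdot \ell(u,v)$. Concatenating this bound along any shortest path of $G$ edge-by-edge shows $d_S(u,v) \leq (2q-1) \cdot d_G(u,v)$ for all pairs $u,v \in V$, while $d_S(u,v) \geq d_G(u,v)$ holds trivially since $S$ is a subgraph of $G$.

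The size bound comes from showing that $S$ has girth strictly greater than $2q$. Suppose for contradiction that $S$ contains a cycle $C$ of length at most $2q$, and let $e = (u,v)$ be the edge of $C$ added last by the algorithm (so $e$ is a heaviest edge in $C$). At the moment just before $e$ is processed, all the other edges of $C$ already lie in $S$, and they form a $u$--$v$ path with at most $2q-1$ edges, each of weight at most $\ell(e)$. Thus $d_S(u,v) \leq (2q-1) \cdot \ell(e)$ at that point, so the greedy rule would not have added $e$, a contradiction. Hence $S$ has girth at least $2q+1$, and the Moore bound (or the standard Bondy--Simonovits counting argument) gives $|E(S)| = \mathcal{O}(|V|^{1+1/q})$.

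The main delicate point in writing this up is the size argument: one must be careful to state the greedy rule as a \emph{strict} inequality to make the girth contradiction go through cleanly, and to invoke the correct form of the Moore bound for weighted graphs (which depends only on the unweighted girth of $S$). Everything else is routine.
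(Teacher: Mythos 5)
Your proof is correct and is exactly the classical greedy-spanner argument of Alth\"{o}fer et al.; the paper itself only cites this result without proof, and your write-up (greedy rule with strict inequality, girth $> 2q$ via the last-added edge of a short cycle, then the Moore bound) is the standard derivation from the cited reference.
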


We use the above lemma as follows. Given a graph $G=(V,E,\ell)$ with terminals $T$, we compute the complete graph $Q_T = (T,\binom{T}{2},d_G|T)$, where $d_G|T$ denotes the distance metric of $G$ restricted to the point set $T$ (In other words, for any pair of terminals $t,t' \in T$, the weight of the edge connecting them in $Q_T$ is given by $w_{Q_T}(t,t') = d_G(t,t')$). Recall that all shortest paths in $G$ are unique.

Using Lemma \ref{lemma: spanner}, we construct a $(2q-1)$-spanner $S$ of size $\mathcal{O} (k^{1+1/q})$ for $Q_T$. Observe that each edge of $S$ corresponds to an unique (terminal) shortest path in $G$ since $S$ is a subgraph of $Q_T$. Thus, the set of shortest paths corresponding to edges of $S$ form a $(2q-1, \mathcal{O} (k^{1+1/q}))$-TPc $\mathcal{P}'$ of $G$. Using $\mathcal{P}'$ with Lemma \ref{lemma: branching} gives the following theorem.

\begin{theorem}
Let $q \geq 1$ an integer. Any graph $G=(V,E,\ell)$ with $T \subset V$ admits a $(2q-1,\mathcal{O} (k^{2+2/q}))$-\emph{DAM}.
\end{theorem}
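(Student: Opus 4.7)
The plan is to directly combine the spanner construction (Lemma on $(2q-1)$-spanners) with the \textsc{MinorSparsifier} algorithm and Lemma on branching. The theorem will follow by exhibiting an appropriate terminal path cover and feeding it into \textsc{MinorSparsifier}.

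First, I would build the complete graph $Q_T = (T, \binom{T}{2}, d_G|T)$ on the terminals, as introduced right before the theorem: the weight of each edge $\{t,t'\}$ is the shortest-path distance $d_G(t,t')$ in $G$. Since we have assumed (via perturbation and consistent tie-breaking) that shortest paths in $G$ are unique, each edge of $Q_T$ corresponds to a unique shortest path $\Pi_{t,t'}$ in $G$.

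Next, I would invoke the spanner lemma (Althöfer et al.) on $Q_T$ to obtain a $(2q-1)$-spanner $S$ of $Q_T$ with $|E(S)| = \mathcal{O}(k^{1+1/q})$. For the key translation back to $G$: let $\mathcal{P}'$ be the set of shortest paths $\Pi_{t,t'}$ corresponding to the edges of $S$. I would check the two conditions of an $(\alpha, f(k))$-TPc. Condition (1) holds because every terminal is incident to at least one edge of $S$ (otherwise its distance to any other terminal in $S$ would be infinite, violating the spanner property for $k \geq 2$), so $T \subseteq V(\bigcup_{\Pi \in \mathcal{P}'} \Pi)$. For condition (2), $|\mathcal{P}'| = |E(S)| = \mathcal{O}(k^{1+1/q})$. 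To verify the distance guarantee on $H := \bigcup_{\Pi \in \mathcal{P}'} E(\Pi)$, note that for any $t,t' \in T$ the spanner property gives a path in $S$ from $t$ to $t'$ of $S$-length at most $(2q-1)\cdot d_{Q_T}(t,t') = (2q-1)\cdot d_G(t,t')$; mapping each edge of this $S$-path back to its corresponding shortest path in $\mathcal{P}' \subseteq H$ yields a walk in $H$ of the same total length. Hence $d_H(t,t') \leq (2q-1)\cdot d_G(t,t')$, and the opposite inequality is immediate since $H$ is a subgraph of $G$ (with the same edge lengths on the paths). Thus $\mathcal{P}'$ is a $(2q-1, \mathcal{O}(k^{1+1/q}))$-TPc of $G$.

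Finally, I would feed $\mathcal{P}'$ into \textsc{MinorSparsifier}$(G,T,\mathcal{P}')$ and invoke the branching lemma, which says the output is an $(\alpha, f(k)^2)$-DAM when the input is an $(\alpha, f(k))$-TPc. With $\alpha = 2q-1$ and $f(k) = \mathcal{O}(k^{1+1/q})$, the resulting DAM has distortion $2q-1$ and $\mathcal{O}(k^{2+2/q})$ non-terminals, as claimed.

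I do not expect any serious obstacle; the real work was done in the earlier lemmas, and this proof is essentially a clean composition of (spanner on $Q_T$) $\to$ (TPc on $G$) $\to$ (DAM via \textsc{MinorSparsifier}). The only mild subtlety worth spelling out is the translation from $S$-paths to $G$-paths via the unique-shortest-path convention, ensuring that $d_H(t,t') \leq (2q-1)\cdot d_G(t,t')$ and that the paths in $\mathcal{P}'$ really do inherit the consistency needed for the branching argument in the TPc lemma.
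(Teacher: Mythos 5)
Your proposal is correct and follows exactly the paper's argument: construct $Q_T$, take a $(2q-1)$-spanner of size $\mathcal{O}(k^{1+1/q})$, translate its edges into a $(2q-1,\mathcal{O}(k^{1+1/q}))$-TPc via the unique shortest paths, and apply \textsc{MinorSparsifier} with Lemma \ref{lemma: branching}. The extra details you spell out (terminal coverage and the walk-length bound in $H$) are sound and consistent with the paper's reasoning.
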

We mention two trade-offs from the above theorem. When $q=2$, we get an $(3,\mathcal{O} (k^3))$-DAM. When $q=\log k$, we get an $(\mathcal{O}(\log k),\mathcal{O}(k^2))$-DAM. These are new distortion-size trade-offs.

The above method allows us to have improved guarantees for bounded treewidth graphs.

\begin{theorem}\label{thm:treewidth}
Let $q \geq 1$ be an integer. Any graph $G=(V,E,\ell)$ with treewidth at most $p$, $T \subset V$ and $k \geq p$
admits a $(2q-1,\mathcal{O}(p^{1+2/q} k))$-\emph{DAM}.
\end{theorem}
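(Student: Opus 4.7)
The plan is to combine the spanner-based pipeline of the preceding theorem with the bounded-treewidth techniques of Krauthgamer et al.~\cite{distancepreserving}. A naive application of the generic machinery would only yield $\mathcal{O}(k^{2+2/q})$ non-terminals via Lemma~\ref{lemma: branching}, ignoring the treewidth parameter $p$ entirely; the task is to replace the generic branching count by one that exploits the tree decomposition of $G$.

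First, I would construct the path cover exactly as before: form the complete terminal metric graph $Q_T$, apply Lemma~\ref{lemma: spanner} to obtain a $(2q-1)$-spanner $S$ of $Q_T$ of size $\mathcal{O}(k^{1+1/q})$, and use the consistent tie-breaking convention to read off, for each edge of $S$, a unique terminal shortest path in $G$. This yields a $(2q-1, \mathcal{O}(k^{1+1/q}))$-TPc $\mathcal{P}'$ of $G$. Running \textsc{MinorSparsifier}($G$, $T$, $\mathcal{P}'$) then produces a minor $H$ which is automatically a $(2q-1)$-DAM by the definition of a TPc.

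Second, to bound the number of non-terminals in $H$, I would replace Lemma~\ref{lemma: branching} by the treewidth-specialized distance-preserver analysis of \cite{distancepreserving}. The key structural fact is that, in a treewidth-$p$ graph, shortest paths interact with the bags of any tree decomposition in a very restricted way (they cross each bag a bounded number of times), which lets one control the number of vertices of degree $\geq 3$ in the union $\bigcup_{\Pi \in \mathcal{P}'} E(\Pi)$ after contracting degree-two non-terminals. Plugging the spanner size $f = \mathcal{O}(k^{1+1/q})$, together with the fact that $\mathcal{P}'$ has at most $k$ distinct endpoints, into the treewidth-aware counting should produce the claimed $\mathcal{O}(p^{1+2/q} k)$ bound.

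The main obstacle will be carrying out the refined branching analysis in a modular way: one has to interface the combinatorics of the spanner $S$ (which controls how many paths and which endpoints appear in $\mathcal{P}'$) with the separator/bag arguments from \cite{distancepreserving} (which bound how many branching vertices shortest paths can collectively contribute, as a function of $p$ and the number of terminal endpoints). Some care is also needed because $\mathcal{P}'$ only covers a sparsified subset of terminal pairs rather than all of $\binom{T}{2}$, so the bookkeeping differs from the exact $(1, \mathcal{O}(pk))$-DAM case that \cite{distancepreserving} analyzes directly; tracking constants through the combined analysis is where the exponents $p^{1+2/q}$ and $k^1$ should ultimately come out.
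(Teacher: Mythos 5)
Your first step (global spanner of $Q_T$, then a path cover) is fine on its own, but the second step --- replacing Lemma~\ref{lemma: branching} by a ``treewidth-specialized'' count of branching vertices for the \emph{global} cover $\mathcal{P}'$ --- is the crux, and it is not substantiated. No such global lemma is available in \cite{distancepreserving}: their $\mathcal{O}(p^{3}k)$ bound for exact preservers is \emph{not} obtained by bounding branching vertices of a union of shortest paths across the whole graph, but by a recursive separator decomposition followed by a composition ($2$-sum) lemma. For a union of $\mathcal{O}(k^{1+1/q})$ shortest paths, the only generic bound is the pairwise one of Lemma~\ref{lemma: elkin}, which gives $\mathcal{O}(k^{2+2/q})$ branching vertices independently of $p$; the observation that shortest paths cross each bag a bounded number of times does not, by itself, prevent $\Theta(|\mathcal{P}'|^2)$ pairwise branchings. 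So as written, the proof has a gap exactly where the parameter $p$ is supposed to enter.

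The paper composes the same two ingredients in the opposite order. It first runs the recursive separator decomposition of \textsc{ReduceGraphTW} from \cite{distancepreserving}: treewidth-$p$ graphs have balanced separators of size $p+1$, and the recursion bottoms out in $\mathcal{O}(k/p)$ leaf pieces, each containing $\mathcal{O}(p)$ terminals (original terminals plus promoted separator vertices $B$). Only \emph{then}, inside each leaf piece $H$, does it apply the spanner reduction: a $(2q-1)$-spanner of the terminal metric on $\mathcal{O}(p)$ points has $\mathcal{O}(p^{1+1/q})$ edges, so \textsc{MinorSparsifier}$(H, T\cup B, \mathcal{P}')$ with Lemma~\ref{lemma: branching} yields $\mathcal{O}\bigl((p^{1+1/q})^{2}\bigr)=\mathcal{O}(p^{2+2/q})$ non-terminals per piece. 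Multiplying by the $\mathcal{O}(k/p)$ pieces and gluing the local DAMs back together via the $\phi$-merge lemma (which takes the max of the distortions and the sum of the sizes) gives $(2q-1,\mathcal{O}(p^{1+2/q}k))$. Note how the exponent $p^{1+2/q}$ arises as $(p^{1+1/q})^{2}/p$; this is a signature of the local-then-merge argument and would not fall out of a global branching count. To repair your proof, decompose first and sparsify locally.
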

%We defer the proof of the above theorem to the full version. 
We defer the proof of the above theorem to Appendix \ref{app:treewidth}. 
The theorem implies, in particular, that any graph $G$ with treewidth at most $p$ admits an $(\mathcal{O}(\log p),\mathcal{O}(p k))$-DAM.

\section{Minor Construction for Graphs Excluding a Fixed Minor}
In this section we give improved guarantees for distance approximating minors for special families of graphs. Specifically, we show that graphs that exclude a fixed minor admit an $(\mathcal{O}(1),\widetilde{\mathcal{O}}(k^{2}))$-DAM. This family of graphs includes, among others, the planar graphs. 

The reduction to spanner in Section \ref{sect:UB-general} does not consider the structure of $Q_T$, which is inherited from the input graph.
We exploit this structure, together with the use of the randomized Steiner Point Removal Problem,
which is equivalent to finding an $(\alpha, 0)$-rDAM. 

%In the rSPR problem, given an undirected graph $G=(V,E,\ell)$ with $T \subset V$, we want to find a \textit{random} graph $H=(T,E',\ell')$ which is a minor of $G$ and preserves all terminal distance up to some factor $\alpha$ in \textit{expectation}, i.e. for all $u,v \in T$, $ d_G(u,v) \leq \mathbb{E} [d_H(u,v)] \leq \alpha \cdot d_G(u,v)$. 

We start by reviewing the following result of Englert et al.~\cite{englert10},
which shows that for graphs that exclude a fixed minor, there exists a randomized minor with constant distortion.
\begin{theorem}[\cite{englert10}, Theorem 14]
\label{thm2}
Let $\alpha = \calO(1)$. Given a graph that excludes a fixed minor $G = (V,E, \ell)$ with $T \subset V$,
there is a probability distribution $\pi$ over minors $H = (T,E', \ell')$ of $G$,
such that $\forall \; t,t' \in T, ~\mathbb{E}_{H\sim\pi} [d_H(t,t')] \leq \alpha \cdot d_G(t,t')$ and for every minor $H$ in the support of $\pi$, $d_H(t,t') \geq d_G(t,t')$.
\end{theorem}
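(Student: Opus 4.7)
The plan is to leverage the probabilistic partitioning machinery available for graphs that exclude a fixed minor, together with a bottom-up contraction scheme that yields a valid minor on the terminal set $T$. The central tool is the Klein--Plotkin--Rao decomposition: every $K_r$-minor-free graph admits, for any $\Delta > 0$, a distribution over partitions of $V(G)$ into pieces of (strong) diameter at most $\Delta$ such that for every pair $u,v$ at distance $d$, the probability that $u$ and $v$ are separated by the partition is at most $c_r \cdot d / \Delta$, where $c_r$ depends only on $r$.

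First I would build a hierarchical laminar partition at the geometric scales $\Delta_i = 2^i$ for $i = 0, 1, \ldots, \lceil \log \mathrm{diam}(G) \rceil$ by sampling KPR partitions independently and refining them into a laminar family $\mathcal{L}$. Second, I would construct the random minor $H$ by a bottom-up contraction along $\mathcal{L}$: in each cluster encountered, if it contains exactly one terminal, contract all its non-terminals into that terminal via a shortest-path tree rooted at the terminal (which keeps the block connected in $G$); if it contains multiple terminals, defer to the finer children, where they are already separated; if it contains no terminals, mark it as floating to be absorbed by its parent. Set each edge length in $H$ to the shortest-path distance in $G$ between its endpoint blocks, so that every walk in $H$ projects to a walk in $G$ of the same length. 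The output is a minor whose vertex set is exactly $T$.

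The analysis splits into two pieces. The deterministic lower bound $d_H(t,t') \geq d_G(t,t')$ follows immediately from the edge-length convention above. For the expected upper bound, I would fix a terminal pair $(t,t')$ at distance $d$ and, at each scale $\Delta_i \approx 2^i d$, use the KPR padding to bound by $O(c_r \cdot 2^{-i})$ the probability that the shortest $t$-$t'$ path gets split across clusters at that scale, and by $O(2^i d)$ the detour length contributed when this does happen. Summing the product over $i$ yields a convergent geometric series bounded by $O(c_r) \cdot d$, giving a constant-factor distortion $\alpha = O(c_r)$ that is independent of $n$ and $k$.

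The main obstacle will be handling terminal-free (floating) clusters while preserving the minor structure: every contracted block must be connected in $G$, no two terminals may share a block, and every non-terminal must eventually be routed to some terminal. Concretely, when a floating cluster is absorbed into its parent, its vertices must be attached along an explicit shortest path to a terminal that is eventually contracted at a higher level, and the expected cost of this extra routing must be bounded by the same KPR padding, lest the $O(c_r)$ distortion blow up. Once this attachment cost is absorbed into the per-scale bound, the geometric sum closes and produces the claimed constant-distortion $\alpha$-rDAM on $V(H) = T$.
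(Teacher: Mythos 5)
This statement is not proved in the paper at all: it is quoted verbatim as Theorem~14 of Englert et al.~\cite{englert10}, whose actual argument goes through \emph{connected 0-extensions} --- a randomized rounding of the 0-extension LP relaxation that produces a partition of $V$ into connected clusters, each containing exactly one terminal, with the expected cost charged in aggregate against the LP value using a KPR-style random partition of the minor-free graph. So your proposal should be judged as an independent proof attempt, and as such it has a genuine gap in the distortion analysis. At scale $\Delta_i$, the KPR padding gives separation probability $\Theta(c_r\, d/\Delta_i)$ and a detour of length $\Theta(\Delta_i)$; the product is $\Theta(c_r\, d)$ \emph{per scale}, independent of $i$, so summing over the $\Theta(\log(\mathrm{diam}(G)/d))$ relevant scales yields $O(c_r\, d\,\log(\mathrm{diam}))$, not a convergent geometric series. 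This is exactly the classical $O(\log n)$ (or $O(\log k)$) barrier for hierarchical random decompositions; it is the bound one gets for \emph{general} graphs, and minor-freeness buys you nothing in your accounting beyond the constant $c_r$ in front of the logarithm. Obtaining a truly constant $\alpha$ is the whole content of the theorem, and it requires a different charging scheme: Englert et al.\ (following Calinescu--Karloff--Rabani) bound $\sum_{(u,v)\in E}\ell(u,v)^{-1}\cdot\mathbb{E}[d_G(f(u),f(v))]$-type quantities edge-by-edge against the LP, exploiting the structure of $K_r$-minor-free graphs in a single carefully randomized assignment rather than a telescoping multiscale sum.

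A secondary, but still substantive, issue is the one you flag yourself: routing terminal-free clusters to a terminal while keeping every contracted block connected and containing exactly one terminal. A per-vertex charge of $\mathbb{E}[d_G(u,f(u))]=O(d_G(u,T))$ does not suffice, since $\sum_{u\in\Pi_{t,t'}} d_G(u,T)$ can vastly exceed $d_G(t,t')$. The connected-0-extension machinery is precisely what resolves both difficulties simultaneously, which is why the paper (and Englert et al.) invoke it as a black box rather than rebuilding it from KPR.
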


Given a graph $G$ that excludes a fixed minor, any minor $H$ of $G$ only on the terminals also excludes the fixed minor. Thus $H$ has $\calO(k)$ edges~\cite{Thomason1984}. This leads to the corollary below.

\begin{corollary}
\label{cor1} Let $\alpha = \calO(1)$. Given a graph that excludes a fixed minor $G = (V,E, \ell)$ with $T \subset V$ and $Q_T$ as defined in Section 4,
there exists a probability distribution $\pi$ over subgraphs $H=(T,E', \ell')$ of $Q_T$, each having at most $\mathcal{O}(k)$ edges, such that for all
$t,t' \in T,~\mathbb{E}_{H\sim\pi}[d_H(t,t')] \leq \alpha \cdot d_{Q_T}(t,t')$.
%\begin{equation} \label{eq}
%	\forall \; t,v \in T, \quad \mathbb{E}_{H\sim\pi}[d_H(t,v)] \leq \alpha \cdot d_{Q_T}(t,v).
%\end{equation}
\end{corollary}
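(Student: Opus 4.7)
The plan is to derive the corollary almost directly from Theorem \ref{thm2}, by (i) applying it to obtain a distribution over terminal-only minors $H'$, (ii) invoking the excluded-minor property to bound $|E(H')|$, and (iii) re-interpreting each such $H'$ as a subgraph of $Q_T$ with possibly reduced edge weights.

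First, I would invoke Theorem \ref{thm2} on $G$ to obtain a probability distribution $\pi'$ over minors $H' = (T, E', \ell')$ of $G$ (on the terminal vertex set), satisfying $\mathbb{E}_{H' \sim \pi'}[d_{H'}(t,t')] \leq \alpha \cdot d_G(t,t')$ and $d_{H'}(t,t') \geq d_G(t,t')$ for all $t,t' \in T$. Since $H'$ is a minor of $G$ and $G$ excludes some fixed minor, so does $H'$; by Thomason's bound~\cite{Thomason1984}, every such $H'$ has $|E'| = \mathcal{O}(k)$ edges.

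Next, I would define $\pi$ as the pushforward of $\pi'$ under the map $H' \mapsto H$, where $H = (T, E', d_G|_{E'})$ is the subgraph of $Q_T$ with the same edge set $E'$ but with each edge $(t,t')$ reweighted to $d_{Q_T}(t,t') = d_G(t,t')$. Each $H$ thus has at most $\mathcal{O}(k)$ edges. To bound expected distances, I would argue that $d_H(t,t') \leq d_{H'}(t,t')$ pointwise. The key observation is that for any single edge $(u,v) \in E'$, the trivial one-edge path gives $d_{H'}(u,v) \leq \ell'(u,v)$, while Theorem \ref{thm2} guarantees $d_{H'}(u,v) \geq d_G(u,v)$; chaining these yields $\ell'(u,v) \geq d_G(u,v)$. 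Consequently, any $H'$-path from $t$ to $t'$ is at least as long as the corresponding $H$-path (edge for edge), so $d_H(t,t') \leq d_{H'}(t,t')$. Taking expectations gives $\mathbb{E}_{H \sim \pi}[d_H(t,t')] \leq \mathbb{E}_{H' \sim \pi'}[d_{H'}(t,t')] \leq \alpha \cdot d_G(t,t') = \alpha \cdot d_{Q_T}(t,t')$.

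The only conceptual subtlety, which I would foreground, is the passage from the minor $H'$ (whose edge weights $\ell'$ are supplied by Englert et al.'s construction and are in general \emph{not} the $G$-distances between their endpoints) to a genuine subgraph of $Q_T$ (where the weight of $(t,t')$ is fixed to be $d_G(t,t')$). The argument above shows that this reweighting only shortens distances, which is exactly the direction we need for the upper bound; note that the corollary does not require the lower bound $d_H(t,t') \geq d_{Q_T}(t,t')$, so no additional care is required there. All remaining steps are routine.
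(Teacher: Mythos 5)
Your proposal is correct and follows essentially the same route as the paper: invoke Theorem \ref{thm2}, bound the edge count via the excluded-minor property and Thomason's result, and reweight each edge to the corresponding $d_G$-distance. The paper compresses the reweighting step into the phrase ``we may assume that $\ell'(t,t') = d_G(t,t')$''; your observation that $\ell'(u,v) \geq d_G(u,v)$ (from the domination property applied to the one-edge path) is precisely the justification that makes this assumption harmless.
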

\begin{proof}
Let $\pi$ be the distribution over minors of $G$ from Theorem \ref{thm2}, then every minor in its support is clearly a subgraph of $Q_T$ with $\calO(k)$ edges. Since during the construction of these minors we may assume that $\forall (t,t') \in E',~\ell'(t,t') = d_G(t,t')$, the corollary follows.
\end{proof}

\begin{lemma} \label{lemma: planar 1}
Given a graph that excludes a fixed minor $G = (V,E, \ell_G)$ with $T\subset V$, and $Q_T$ as defined in Section 4, there exists an $\calO(1)$-spanner $S$ of size $\mathcal{O} (k \log k)$ for $Q_T$.
\end{lemma}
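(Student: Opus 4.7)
The plan is to derive a deterministic spanner from the distributional guarantee of Corollary~\ref{cor1} by a standard sample-and-union boosting argument, whose outline is as follows.

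First, I would invoke Corollary~\ref{cor1} to obtain a probability distribution $\pi$ over subgraphs of $Q_T$, each with $\mathcal{O}(k)$ edges, satisfying $d_H(t,t')\geq d_{Q_T}(t,t')$ deterministically and $\mathbb{E}_{H\sim \pi}[d_H(t,t')]\leq \alpha\cdot d_{Q_T}(t,t')$ for some constant $\alpha$. Applying Markov's inequality to the nonnegative random variable $d_H(t,t')-d_{Q_T}(t,t')$ (or directly to $d_H(t,t')$ using the lower bound), I get that for every fixed terminal pair $(t,t')$,
\[
\Pr_{H\sim\pi}\bigl[d_H(t,t') > 2\alpha \cdot d_{Q_T}(t,t')\bigr] \leq \tfrac{1}{2}.
\]

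Next, I would draw $N = \lceil c\log k\rceil$ independent samples $H_1,\dots,H_N\sim\pi$, for a sufficiently large constant $c$ (e.g.\ $c=3$), and define the candidate spanner $S := \bigcup_{i=1}^N H_i$, viewed as a subgraph of $Q_T$. For a fixed pair $(t,t')$, the probability that \emph{every} sample has $d_{H_i}(t,t')>2\alpha\cdot d_{Q_T}(t,t')$ is at most $2^{-N}\leq k^{-c}$. A union bound over the $\binom{k}{2}<k^2$ terminal pairs shows that, with probability at least $1-k^{2-c}>0$, for every pair $(t,t')$ some sample $H_i$ satisfies $d_{H_i}(t,t')\leq 2\alpha\cdot d_{Q_T}(t,t')$. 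Fixing any realization achieving this event gives the desired $S$.

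Finally, I would verify the spanner properties. Since $S\supseteq H_i$ for each $i$, we have $d_S(t,t')\leq d_{H_i}(t,t')\leq 2\alpha\cdot d_{Q_T}(t,t')$ for the witnessing $i$; and since each $H_i$ is a subgraph of $Q_T$ with $\ell'=d_G$, clearly $d_S(t,t')\geq d_{Q_T}(t,t')$. Thus $S$ has stretch $2\alpha=\mathcal{O}(1)$. The size bound is immediate: $|E(S)|\leq \sum_i |E(H_i)| = N\cdot\mathcal{O}(k)=\mathcal{O}(k\log k)$.

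I do not expect any serious obstacle here: the ingredients (Markov bound on each pair, independence of samples, and the union bound over $\binom{k}{2}$ pairs) are standard, and the edge-count bound is additive over the $\mathcal{O}(\log k)$ samples. The only mild subtlety is ensuring the distance lower bound survives taking unions, which follows because each individual $H_i$ is a subgraph of $Q_T$ with the correct edge lengths, so shortest-path distances in $S$ can only be $\geq$ those in $Q_T$.
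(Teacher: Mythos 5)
Your proposal is correct and follows essentially the same route as the paper: apply Markov's inequality to each terminal pair under the distribution from Corollary~\ref{cor1}, take the union of $\Theta(\log k)$ independent samples, and finish with a union bound over the $O(k^2)$ pairs, with the $\mathcal{O}(k\log k)$ size bound coming from additivity of edge counts. No gaps.
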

\begin{proof}
Let $\pi$ be the probability distribution over subgraphs $H$ from Corollary \ref{cor1}. Set $S = \emptyset$.
First, we sample independently $q = 3 \log k$ subgraphs $H_1, \ldots, H_q$ from $\pi$.
We then add the edges from all these subgraphs to the graph $S$, i.e., $E_S = \bigcup_{i=1}^{q} E_{H_i}$.
Fix an edge $(t,t')$ from $Q_T$ and a subgraph $H_i$. By Corollary \ref{cor1} and the Markov inequality,
$\mathbb{P} [d_{H_i}(t,t') \geq 2 \alpha \cdot d_{Q_T}(t,t')] \leq  2^{-1}$, and hence\vspace*{-0.15cm}
%The later implies that, for any edge $(u,v)$ from $Q_T$
$$
	\mathbb{P}[d_{S}(t,t') \geq 2\alpha \cdot d_{Q_T}(t,t')]  = \prod_{i=1}^{q} \mathbb{P}[d_{H_i}(t,t') \geq 2 \alpha \cdot d_{Q_T}(t,t')]  \leq 2^{-q} = k^{-3}.\vspace*{-0.15cm}
$$
Applying of the union bound overall all edges from $Q_T$ yields\vspace*{-0.15cm}
$$
	\mathbb{P}[\text{there exists an edge } (t,t') \in Q_T \text{ s.t. } d_{S}(t,t') \geq 2\alpha \cdot d_{Q_T}(t,t')] \leq k^{2} \cdot k^{-3} = k^{-1}.
$$
Hence, for all edges $(t,t')$ from $Q_T$, with probability at least $1 - 1/k$, we preserve the shortest path distance between $t$ and $t'$
up to a factor of $2 \alpha = \calO(1)$ in $S$. Since $S$ is a subgraph of $Q_T$,
this implies that there exists a $\calO(1)$-spanner $S$ of size $\calO(k \log k)$ for $Q_T$.
\end{proof}
Similar to the last section, the set of shortest paths corresponding to edges of $S$ is an $(\calO(1), \calO(k \log k))$-TPc  $\mathcal{P}'$ of $G$.
Using  $\mathcal{P}'$ with Lemma \ref{lemma: branching} gives the following theorem.
\begin{theorem}
Any graph that excludes a fixed minor $G = (V,E, \ell)$ with $T \subset V$ admits an $(\mathcal{O} (1),\widetilde{\mathcal{O}} (k^{2}))$-\emph{DAM}.
\end{theorem}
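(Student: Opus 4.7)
The plan is to chain together the pieces already assembled in the section. First, using Lemma \ref{lemma: planar 1}, I would obtain an $\mathcal{O}(1)$-spanner $S$ of $Q_T$ of size $\mathcal{O}(k \log k)$. Since $S$ is a subgraph of $Q_T$ and edges of $Q_T$ correspond to (unique) terminal-to-terminal shortest paths in $G$, every edge of $S$ lifts canonically to a shortest path in $G$; take $\mathcal{P}'$ to be the collection of these lifted shortest paths. By construction, $|\mathcal{P}'| = \mathcal{O}(k \log k)$, the union $H = \bigcup_{\Pi \in \mathcal{P}'} E(\Pi)$ contains every terminal (since $S$ spans $T$), and the stretch guarantee of $S$ in $Q_T$ transfers directly to a distortion bound between terminals in $H$, so $\mathcal{P}'$ is an $(\mathcal{O}(1), \mathcal{O}(k \log k))$-TPc of $G$.

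Next, I would feed this $\mathcal{P}'$ into Algorithm~\ref{algo: minor} and invoke Lemma~\ref{lemma: branching}, which asserts that \textsc{MinorSparsifier}$(G, T, \mathcal{P}')$ outputs an $(\alpha, f(k)^2)$-DAM whenever $\mathcal{P}'$ is an $(\alpha, f(k))$-TPc. Plugging in $\alpha = \mathcal{O}(1)$ and $f(k) = \mathcal{O}(k \log k)$ yields an $(\mathcal{O}(1), \mathcal{O}(k^2 \log^2 k)) = (\mathcal{O}(1), \widetilde{\mathcal{O}}(k^2))$-DAM, which is exactly the claim. The tie-breaking assumption stated at the start of Section~\ref{sect:UB-general} ensures that the lifting of spanner edges to shortest paths in $G$ is unambiguous and that Lemma~\ref{lemma: branching} applies verbatim.

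There is essentially no obstacle: all the heavy lifting has been done in Lemma~\ref{lemma: planar 1} (which in turn relies on the Englert--Gupta--R\"acke--Talwar construction from Theorem~\ref{thm2} and the $\mathcal{O}(k)$-edge bound for minors from \cite{Thomason1984}) and in Lemma~\ref{lemma: branching}. The only thing to check carefully is that squaring the cover size $\mathcal{O}(k \log k)$ still gives $\widetilde{\mathcal{O}}(k^2)$, which it does since $\widetilde{\mathcal{O}}$ suppresses polylogarithmic factors. Thus the proof is a short two-line composition of the two lemmas.
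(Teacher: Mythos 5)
Your proposal is correct and follows exactly the paper's own argument: the paper derives this theorem precisely by taking the $\mathcal{O}(1)$-spanner of $Q_T$ of size $\mathcal{O}(k\log k)$ from Lemma~\ref{lemma: planar 1}, lifting its edges to the corresponding unique shortest paths to form an $(\mathcal{O}(1),\mathcal{O}(k\log k))$-TPc, and then applying Lemma~\ref{lemma: branching}. Nothing is missing.
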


\section{Minor Construction for Planar Graphs}\label{sect:UB-planar}
In this section, we show that for planar graphs, we can  improve the constant guarantee bound on the
distortion to $3$ and $1+\varepsilon$, respectively,
without affecting the size of the minor. Our work builds on existing techniques used in the context of approximate distance oracles,
thereby bypassing our previous spanner reduction. Both results use essentially the same ideas
and rely heavily on the fact that planar graphs admit separators with special properties.

We say that a graph $G=(V,E,\ell)$ admits a $\lambda$-separator if there exists a set $R \subseteq V$
whose removal partitions $G$ into connected components, each of size at most $\lambda n$, where $1/2 \leq \lambda <1$.
Lipton and Tarjan~\cite{lipton79} showed that every planar graph has a $2/3$-separator $R$ of size $\mathcal{O} (\sqrt{n})$.
Later on, Gupta et al.~\cite{guptaKR04} and Thorup~\cite{thorup04} independently observed that one can modify their construction
to obtain a $2/3$-separator $R$, with the additional property that $R$ consists of vertices belonging to shortest paths from $G$
(note that this $R$ is not guaranteed to be small). We briefly review the construction of such \textit{shortest path separators}. 

Let $G=(V,E,\ell)$ be a triangulated planar graph (the triangulation is guaranteed by adding infinity edge lengths among the missing edges). Further, let us fix an arbitrary shortest path tree $A$ rooted at some vertex $r$. Then, it can be inferred from the work of Lipton and Tarjan~\cite{lipton79} that there always exists a non-tree edge $e=\{u,v\}$ of $A$ such that the fundamental cycle $\mathcal{C}$ in $A \cup \{e\}$, formed by adding the non-tree edge $e$ to $A$, gives a $2/3$-separator for $G$. Because $A$ is a tree, the separator will consist of two paths from the $\text{lca}(u,v)$ to $u$ and $v$. We denote such paths by $P_1$ and $P_2$, respectively. Both paths are shortest paths as they belong to $A$. We will show how to use such separators to obtain terminal path covers. Before proceeding, we give the following preprocessing step.

\medskip

\noindent \textbf{Preprocessing Step. }Given a planar graph $G = (V,E,\ell)$ with $T \subset V$, the algorithm \textsc{MinorSparsifier}($G$, $T$, $\mathcal{P}'$) with $\mathcal{P}'$ being the $(1,\mathcal{O}(k^{2}))$-TPc of $G$, produces an $(1,\mathcal{O}(k^{4}))$-DAM $G'$ for $G$. To simplify our notation, we will use $G$ instead of $G'$ in the following, i.e., we assume that $G$ has at most $\mathcal{O}(k^{4})$ vertices.

%\smallskip

\subsection{Stretch-$3$ Guarantee}

%\noindent\textbf{Stretch-$3$ Guarantee.~}
When solving a graph problem, it is often that the problem can be more easily solved for simpler graph instances, e.g., trees.
%there exist simpler graph instances, e.g., trees, for which the problem can be more easily solved.
Driven by this, it is desirable to reduce the problem from arbitrary graphs to one or several tree instances, possibly allowing a small loss in the quality of the solution. Along the lines of such an approach, Gupta et al.~\cite{guptaKR04} gave the following definition in the context of shortest path distances.
\begin{definition}[Forest Cover] Given a graph $G=(V,E,\ell)$, a forest cover (with stretch $\alpha$) of $G$ is a family $\mathcal{F}$ of subforests $\{F_1,F_2,\ldots,F_k\}$ of $G$ such that for every $u,v \in V$, there is a forest $F_i \in \mathcal{F}$ such that $d_{G}(u,v) \leq d_{F_i}(u,v) \leq \alpha \cdot d_G(u,v)$.
\end{definition}

If we restrict our attention to planar graphs, Gupta et al.~\cite{guptaKR04} used shortest path separators (as described above) to give a divide-and-conquer algorithm for constructing forest covers with small guarantees on the stretch and size. Here, we slightly modify their construction for our purpose.
Before proceeding to the algorithm, we give the following useful definition.
\begin{definition}
Let $t$ be a terminal and let $P$ be a shortest path in $G$. Then $t_{\min}^{P}$ denotes the vertex of $P$ that minimizes $d_G(t,p)$, for all $p \in P$, breaking ties lexicographically. 
\end{definition}

\begin{algorithm}[H]
\caption{\textsc{ForestCover} (planar graph $G$, terminals $T$)}
\begin{algorithmic}[1]
\If{$|V(G)| \leq 1$}
	\Return $V(G)$
\EndIf
\State Compute a $2/3$-separator $\mathcal{C}$ consisting of shortest paths $P_1$ and $P_2$ for $G$.
\For{$i=1,2$} 
\State Contract $P_i$ to a single vertex $p_i$ and compute a shortest path tree $L_i$ from $p_i$.
\State Expand back the contracted edges in $L_i$ to get the tree $L_i'$.
\For{every terminal $t \in T$}
 \State Add $t_{\min}^{P_i}$ as a terminal in the tree $L_i'$.
\EndFor
\EndFor
\State Let $(G_1,T_1)$ and $(G_2,T_2)$ be the resulting connected graphs from $G \setminus \mathcal{C}$, 
\Statex where $T_1$ and $T_2$ are disjoint subsets of the terminals $T$ induced by $\mathcal{C}$.
\Statex \texttt{// Note that all distances involving terminals from $\mathcal{C}$ are taken care of.}
\State \Return $\bigcup_{i=1}^{2} L_i' \cup \bigcup_{i=1}^{2} \textsc{ForestCover}(G_i,T_i)$.
\end{algorithmic}
\label{algo: forestcover}
\end{algorithm}

\begin{algorithm}[H]
\caption{\textsc{PlanarTPc-1} (planar graph $G$, terminals $T$)}
\begin{algorithmic}[1]
\State Set $\mathcal{P}' = \emptyset$. Set $\mathcal{F} = \text{\textsc{ForestCover}}(G,T)$. %Apply Theorem \ref{thm: guptacover} on $G$ and let $\mathcal{F}$ be the resulting forest cover.
\For{every forest $F_i \in \mathcal{F}$} 
\State Let $R_i$ be the terminal set of $F_i$ and let $\mathcal{P}'_i$  be the (trivial) $(1,\mathcal{O}(k^{2}))$-TPc of $F_i$;
\State Compute $F_i'$ = \textsc{MinorSparsifier}($F_i$, $R_i$, $\mathcal{P}_i'$).
\State Add the shortest paths corresponding to the edges of $F_i'$ to $\mathcal{P}'$.
\EndFor
\State \Return $\mathcal{P}'$
\end{algorithmic}
\label{algo: minor1}
\end{algorithm}

Gupta et al.~\cite{guptaKR04} showed the following guarantees for Algorithm \ref{algo: forestcover}.

\begin{theorem}[\cite{guptaKR04}, Theorem 5.1] \label{thm: guptacover}
Given a planar graph $G=(V,E,\ell)$ with $T \subset V$, \textsc{ForsetCover}$(G,T)$ produces a stretch-$3$ forest cover with $\mathcal{O} (\log |V|)$ forests. 
\end{theorem}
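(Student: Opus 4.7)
The plan is to prove two claims: (a) \textsc{ForestCover} returns $\mathcal{O}(\log |V|)$ forests, and (b) every pair $u,v \in V$ admits a returned forest $F_i$ satisfying $d_G(u,v) \leq d_{F_i}(u,v) \leq 3\cdot d_G(u,v)$.

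For the size bound, I would argue as follows. Each recursive call produces exactly the two trees $L_1'$ and $L_2'$ and then recurses on subgraphs $G_1,G_2$ each of size at most $\tfrac{2}{3}|V|$; hence the recursion depth is $\mathcal{O}(\log |V|)$. The crucial point is that the subproblems at any fixed depth $d$ are pairwise vertex-disjoint (since each is obtained from $G \setminus \mathcal{C}$ at its parent). Aggregating across all recursive calls at depth $d$, the union of all ``$P_1$-side'' trees $L_1'$ is therefore still a forest, and likewise for the $L_2'$-side. This gives at most two forests per level and hence $\mathcal{O}(\log |V|)$ forests overall.

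For the stretch-$3$ guarantee, fix $u,v \in V$ and unfold the recursion until the current separator $\mathcal{C}=P_1\cup P_2$ either contains one of $u,v$ or splits them (some such call exists because $|V|$ strictly decreases). Let $w$ be a vertex where the unique shortest $u$--$v$ path in $G$ crosses $\mathcal{C}$; without loss of generality $w\in P_1$, so $d_G(u,v)=d_G(u,w)+d_G(w,v)$. Let $\tilde u$ (resp.\ $\tilde v$) be the vertex of $P_1$ at which the subtree of $L_1'$ containing $u$ (resp.\ $v$) attaches. Since $L_1'$ arises from an SSSP tree rooted at the contracted path $P_1$, I will argue that $d_{L_1'}(u,\tilde u) = d_G(u,\tilde u) = d_G(u,P_1) \leq d_G(u,w)$, and analogously for $v$. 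Because $P_1$ is itself a shortest path in $G$, the $L_1'$-path between $\tilde u$ and $\tilde v$ runs along $P_1$, whence $d_{L_1'}(\tilde u,\tilde v) \leq d_G(\tilde u, w) + d_G(w, \tilde v)$. The triangle inequality then gives $d_G(\tilde u,w) \leq d_G(u,\tilde u) + d_G(u,w) \leq 2d_G(u,w)$, and symmetrically $d_G(\tilde v,w) \leq 2d_G(w,v)$. Summing the three segments,
\begin{equation*}
d_{L_1'}(u,v) \;\leq\; d_G(u,w) + 2 d_G(u,w) + 2 d_G(w,v) + d_G(w,v) \;=\; 3 \cdot d_G(u,v),
\end{equation*}
while $d_{L_1'}(u,v) \geq d_G(u,v)$ is immediate because every edge of $L_1'$ corresponds to an edge or path of equal length in $G$.

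The main technical obstacle will be formalising the ``contract--then--expand'' construction of $L_i'$: one must verify that after building the SSSP tree $L_i$ in the graph with $P_i$ contracted to $p_i$ and then re-expanding $P_i$, the resulting $L_i'$ is genuinely a tree that contains $P_i$ as an induced subpath with the correct edge lengths, and that the identity $d_{L_i'}(x,\tilde x) = d_G(x, P_i)$ really holds for every $x \notin P_i$. The consistent tie-breaking scheme from Section~\ref{sect:UB-general} is what will make the attachment vertex $\tilde x$ well-defined and ensure that the lifted tree in $G$ faithfully captures the relevant shortest paths; once this bookkeeping is done, the three-paragraph argument above goes through verbatim.
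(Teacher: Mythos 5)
The paper offers no proof of this statement: it is imported verbatim from Gupta et al.\ \cite{guptaKR04}, so what you have written is a reconstruction of their argument rather than an alternative to anything in this paper. Your size bound is correct (two trees per recursive call, vertex-disjoint subproblems at each depth, hence two forests per level and $\mathcal{O}(\log|V|)$ levels), and the core stretch computation --- bounding $d_{L_1'}(u,\tilde u)$ by $d_G(u,w)$, routing $\tilde u$ to $\tilde v$ along $P_1$, and using the triangle inequality to get $d_G(\tilde u,w)\le 2d_G(u,w)$ --- is exactly the standard one and is sound, granted the bookkeeping about $L_i'$ that you correctly identify as needing verification.

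There is, however, one genuine gap in where you apply that computation. You unfold the recursion until the separator \emph{contains one of $u,v$ or splits them}, and only then look for a crossing vertex $w$ of the shortest $u$--$v$ path. But the shortest path $\Pi_{u,v}$ in $G$ may cross an \emph{earlier} separator $\mathcal{C}'$ while $u$ and $v$ both remain in the same component of $G\setminus\mathcal{C}'$ and neither lies on $\mathcal{C}'$ (e.g.\ the path exits and re-enters the region containing $u,v$). Your condition does not fire at that level, you recurse into a subgraph $G'$ that no longer contains $\Pi_{u,v}$, and at the level where your condition finally fires the vertex $w$ you invoke need not exist on the current separator; moreover the tree built there only certifies stretch $3$ relative to $d_{G'}(u,v)$, which can strictly exceed $d_G(u,v)$. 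The fix is to stop instead at the \emph{first} level at which $\Pi_{u,v}$ meets the separator: up to that level $\Pi_{u,v}$ is disjoint from all separators, hence lies entirely inside the current subgraph (so $d_{G'}(u,v)=d_G(u,v)$ and $u,v$ are both present), and your three-segment estimate then applies verbatim to the tree $L_i'$ built at that level. With the trigger condition corrected in this way the proof is complete.
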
 

We note that the original construction does not consider terminal vertices, but this does not worsen neither the stretch nor the size of the cover. The only difference here is that we need to add at most $k$ new terminals to each forest compared to the original number of terminals in the input graph. This modification affects our bounds on the size of a minor only by a constant factor. 

Below we show that using the above theorem one can obtain terminal path covers for planar graphs. 

\begin{lemma} Given a planar graph $G = (V,E, \ell)$ with $T \subset V$, \textsc{PlanarTPc-1}$(G,T)$ produces an $(3,\mathcal{O} (k \log k))$-\emph{TPc} $\mathcal{P}'$ for $G$. 
\end{lemma}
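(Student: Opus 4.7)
The plan is to reduce the planar distance-preservation problem to an $\mathcal{O}(\log k)$-family of tree instances via Theorem \ref{thm: guptacover}, apply the trivial exact $(1, \mathcal{O}(k^{2}))$-TPc inside each tree, and union the resulting shortest-path families. By the preprocessing step we have $|V(G)| = \mathcal{O}(k^{4})$, so the forest cover $\mathcal{F}$ returned by \textsc{ForestCover}$(G,T)$ contains $|\mathcal{F}| = \mathcal{O}(\log k)$ subforests, each of stretch at most $3$.

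For the size bound, I would first argue that every forest $F_i \in \mathcal{F}$ has $|R_i| = \mathcal{O}(k)$ terminals: the algorithm adds only one projected vertex $t^{P_i}_{\min}$ per $t \in T$ when building a given $L_i'$, and the recursion partitions the original terminals across subproblems so that the terminal budget of each forest stays linear in $k$. Running \textsc{MinorSparsifier}$(F_i, R_i, \mathcal{P}_i')$ on a forest contracts every degree-two non-terminal, leaving a forest $F_i'$ in which every non-terminal has degree at least $3$. A standard counting argument then gives $|E(F_i')| = \mathcal{O}(|R_i|) = \mathcal{O}(k)$, so $F_i'$ contributes at most $\mathcal{O}(k)$ shortest paths to $\mathcal{P}'$, yielding $|\mathcal{P}'| = \mathcal{O}(k \log k)$ in total.

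For the stretch bound, fix any $t, t' \in T$. Theorem \ref{thm: guptacover} supplies some $F_i$ with $d_{F_i}(t,t') \leq 3\, d_G(t,t')$, and the unique $F_i$-path from $t$ to $t'$ decomposes into subpaths whose endpoints form a sequence $t = r_0, r_1, \ldots, r_m = t'$ of terminals in $R_i$ corresponding one-to-one to a path of edges in the minor $F_i'$. Because $F_i \subseteq G$ we have $d_G(r_j, r_{j+1}) \leq d_{F_i}(r_j, r_{j+1})$, and because $\mathcal{P}'$ contains the $G$-shortest path $\Pi_{r_j, r_{j+1}}$ for each consecutive pair, we have $d_H(r_j, r_{j+1}) \leq d_G(r_j, r_{j+1})$. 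Summing along this chain gives $d_H(t,t') \leq d_{F_i}(t,t') \leq 3\, d_G(t,t')$, and the matching lower bound $d_H(t,t') \geq d_G(t,t')$ is immediate because $H$ is a subgraph of $G$.

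The main obstacle is confirming that the original terminals $t, t' \in T$ themselves appear in the terminal set $R_i$ of the certifying forest, so that the above chain decomposition is valid; \textsc{ForestCover} only explicitly adds projections $t^{P_i}_{\min}$, so one needs an invariant, proved by induction on the recursion depth, showing that every original terminal is either retained as a terminal at the current recursion level or is itself a terminal of the subinstance to which it is passed. Once that invariant is in place the bookkeeping for both the stretch and the size estimates is routine.
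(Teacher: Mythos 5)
Your proof is correct and follows the paper's route: preprocess to $\mathcal{O}(k^{4})$ vertices, invoke the stretch-$3$ forest cover with $\mathcal{O}(\log k)$ forests, run \textsc{MinorSparsifier} on each forest to get $\mathcal{O}(k)$ edges per forest, and union. The one place you genuinely diverge is the stretch transfer. The paper proves the stronger structural claim that every edge $(u,v)$ of $F_i'$ maps back to an $F_i$-path lying entirely inside a single shortest-path tree $L_j'$ or separator path $P_j$ (this is exactly what the added projection terminals $t_{\min}^{P_i}$ are for), hence equals the unique $G$-shortest path $\Pi_{u,v}$; the stretch of $\mathcal{P}'$ then coincides with that of the cover. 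You instead use only the domination $d_G(r_j,r_{j+1})\le d_{F_i}(r_j,r_{j+1})$ link by link and telescope along the $F_i$-path; this is weaker but suffices for the upper bound, and it buys you independence from the exact-correspondence claim (you only need that the $G$-shortest path between the endpoints of each edge of $F_i'$ is placed in $\mathcal{P}'$, which also keeps $\mathcal{P}'\subset\mathcal{P}$ as the TPc definition requires). The "obstacle" you flag at the end is not a real gap: by construction every original terminal surviving in a subinstance is kept as a terminal of the forests built at that recursion level (the projections are \emph{additional} terminals, which is why the paper bounds $|R_i|\le 2k$), so the certifying forest does contain $t,t'$ as terminals; the paper handles this equally implicitly. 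One cosmetic slip: the intermediate vertices $r_1,\dots,r_{m-1}$ of your chain need not be terminals of $R_i$ — they may be retained branching non-terminals of $F_i'$ — but the telescoping argument is unaffected.
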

\begin{proof}
We first review the following simple fact, whose proof can be found in \cite{distancepreserving}.
\begin{fact} \label{fact: treecover}
Given a forest $F=(V,E,\ell)$ with terminals $T \subset V$ and $\mathcal{P}'$ being the (trivial) $(1,\mathcal{O}(k^{2}))$-\emph{TPc} of $F$, the procedure \textsc{MinorSparsifier}$(F,T,\mathcal{P}')$ outputs an $(1,k)$-\emph{DAM}.  
\end{fact}

Let us proceed with the analysis. Observe that from the Preprocessing Step our input graph $G$ has at most $\mathcal{O}(k^{4})$ vertices. Thus, applying Theorem \ref{thm: guptacover} on $G$ gives a stretch-$3$ forest cover $\mathcal{F}$ of size $\mathcal{O}(\log k)$. In addition, recall that all shortest paths are unique in $G$.

Next, let $F_i$ by any forest from $\mathcal{F}$. By construction, we note that each tree belonging to $F_i$ has the nice property of being a concatenation of a given shortest path with another shortest path tree. We will exploit this in order to show that every edge of the minor $F_i'$ for $F_i$ corresponds to the (unique) shortest path between its endpoints in $G$.

%\GG{The pathological case is resolved! We need to only make some changes!}

To this end, let $e' = (u,v)$ be an edge of $F_i'$ that does not exist in $F_i$. Since $F_i'$ is a minor of $F_i$, we can map back $e'$ to the path $\Pi_{u,v}$ connecting $u$ and $v$ in $F_i$. Because of the additional terminals $u_{\min}^{P_i}$ added to $F_i$, we claim that $\Pi_{u,v}$ is entirely contained either in some shortest path tree $L_j$ or some shortest path separator $P_j$. Using the fact that subpaths of shortest paths are shortest paths, we conclude that the length of the path $\Pi_{u,v}$ (or equivalently, the length of edge $e'$) corresponds to the unique shortest path connecting $u$ and $v$ in $G$. The same argument is repeatedly applied to every such edge of $F_i'$.

By construction we know that $F_i$ has at most $2k$ terminals. Using Fact \ref{fact: treecover} we get that $F_i'$ contains at most $4k$ edges. Since there are $\mathcal{O} (\log k)$ forests, we conclude that the terminal path cover $\mathcal{P}'$ consists of $\mathcal{O} (k \log k)$ shortest paths. The stretch guarantee follows directly from that of cover $\mathcal{F}$, since $F_i'$ exactly preserves all distances between terminals in $F_i$.
\end{proof}

\begin{theorem}
Any planar graph $G=(V,E,\ell)$ with $T \subset V$ admits a $(3, \widetilde{\mathcal{O}}(k^{2}))$-\emph{DAM}.
\end{theorem}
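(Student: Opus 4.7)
The plan is to combine the previous lemma (which yields the terminal path cover) with Lemma \ref{lemma: branching} (which turns a TPc into a DAM).

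First I would run the preprocessing step described at the beginning of Section \ref{sect:UB-planar}, i.e.~apply \textsc{MinorSparsifier} with the trivial $(1,\mathcal{O}(k^2))$-TPc so that we may henceforth assume $|V(G)| = \mathcal{O}(k^4)$ while only paying a constant factor in the final size bound. Then I would call \textsc{PlanarTPc-1}$(G,T)$; by the lemma just proved, this produces an $(3,\mathcal{O}(k\log k))$-TPc $\mathcal{P}'$ of $G$.

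Second, I would feed $\mathcal{P}'$ into \textsc{MinorSparsifier}$(G,T,\mathcal{P}')$. By Lemma \ref{lemma: branching}, the output is an $(\alpha,f(k)^2)$-DAM whenever the input is an $(\alpha,f(k))$-TPc. Instantiating with $\alpha=3$ and $f(k)=\mathcal{O}(k\log k)$ gives a DAM with distortion $3$ and at most $\mathcal{O}((k\log k)^2)=\widetilde{\mathcal{O}}(k^2)$ non-terminals, which is precisely the claimed bound.

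There is essentially no obstacle here: the theorem is an immediate composition of the previous lemma with the general branching bound of Lemma \ref{lemma: branching}, once the preprocessing ensures $G$ already has polynomially bounded size (so that the $\log k$ factors introduced by Theorem \ref{thm: guptacover} on the reduced graph are indeed only poly-logarithmic in $k$, justifying the $\widetilde{\mathcal{O}}$ notation). No separate argument about planarity, shortest-path uniqueness, or separators is needed at this stage, since all those ingredients were already absorbed into the construction of the TPc.
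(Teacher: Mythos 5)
Your proposal is correct and matches the paper's (implicit) argument exactly: the theorem is stated as an immediate consequence of the preceding lemma, obtained by feeding the $(3,\mathcal{O}(k\log k))$-TPc from \textsc{PlanarTPc-1} into \textsc{MinorSparsifier} and invoking Lemma \ref{lemma: branching} to get $\mathcal{O}((k\log k)^2)=\widetilde{\mathcal{O}}(k^2)$ non-terminals. Your remark that the preprocessing step is what keeps the $\mathcal{O}(\log|V|)$ factor from Theorem \ref{thm: guptacover} polylogarithmic in $k$ is also the right observation.
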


%Indeed, there is another distance oracle that yields better distortion $(1+\varepsilon)$ (see~\cite{thorup04,sommer11}).
%Similar to the above, we prove the following theorem; its proof is deferred to the full version.

\subsection{Stretch-$(1+\varepsilon)$ Guarantee}
%\noindent\textbf{Stretch-$(1+\varepsilon)$ Guarantee.~}

Next we present our best trade-off between distortion and size of minors for planar graphs. Our idea is to construct terminal path covers using the construction of Thorup~\cite{thorup04} in the context of approximate distance oracles in planar graphs. Here, we modify a simplified version due to Kawarabayashi et al.~\cite{sommer11}. 

The construction relies on two important ideas. Similarly to the stretch-$3$ result, the first idea is to recursively use shortest path separators to decompose the graph. The second consists of approximating shortest paths that cross a shortest path separator. Below we present some necessary modification to make use of such a construction for our purposes.

Let $P$ be a shortest path in $G$. For a terminal $t \in T$, we let the pair $(p,t)$, where $p \in P$, denote the \textit{portal} of $t$ with respect to the path $P$. An $\varepsilon$-cover $C(t,P)$ of $t$ with respect to $P$ is a set of portals with the following property:
$$
	\forall p \in P, ~ \exists q \in C(t,P) \text{  s.t.  } d_G(t,q) + d_G(q,p) \leq (1+\varepsilon)d_G(t,p)
$$
Let $(t,t')$ by any terminal pair in $G$. Let $\Pi_{t,t'}$ be the (unique) shortest path that crosses the path $P$ at vertex $w$. Then using the $\varepsilon$-covers $C(t,P)$ and $C(t',P)$, there exist portals $(t,p)$ and $(p',t')$ such that the new distance between $t$ and $t'$ is
\begin{align} \label{eqn: stretch}
\begin{split}
	d_G(t,p) + d_G(p,p') + d_G(p',t') &~ \leq~ d_G(t,p) + d_G(p,w) + d_G(w,p') + d_G(p',t') \\
	&~ \leq~ (1+\varepsilon) d_G(t,t')
\end{split}
\end{align}
The new distance clearly dominates the old one. The next result due to Thorup~\cite{thorup04} shows that maintaining a small number of portals per terminal suffices to approximately preserve terminal shortest paths.

\begin{lemma} \label{lemm: portal} 
Let $\varepsilon > 0$. For a given terminal $t \in T$ and a shortest path $P$, there exists an $\varepsilon$-cover $C(t,P)$ of size $\mathcal{O}(1/\varepsilon)$.
\end{lemma}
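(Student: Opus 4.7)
The plan is to build $C(t,P)$ by a greedy one-pass sweep along $P$, inserting a new portal only at moments when the most recent portal fails to give a $(1+\varepsilon)$-approximation. The analysis then reduces to arguing that such failures are ``rare''.

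First I would exploit the structure of $P$. Since $P$ is a shortest path in $G$, subpaths of $P$ are shortest paths, so for any two vertices $p,p' \in P$ we have $d_G(p,p') = d_P(p,p')$, the subpath distance. Consequently the function $D(p) := d_G(t,p)$ is $1$-Lipschitz in the $d_P$-metric. Let $q \in P$ minimize $D$, and set $r := D(q)$; then $D(p) \geq r$ for every $p \in P$, and by triangle inequality $D(p) \geq d_P(q,p) - r$, so $D$ is also comparable to $d_P(q,\cdot)$ far away from $q$. I would initialise $C(t,P) = \{q\}$ and handle each of the two halves of $P$ separated by $q$ symmetrically.

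On one half, I would walk vertices in order of increasing $d_P(q,\cdot)$. Let the most recent portal be $p_c$, with $x_c := d_P(q,p_c)$ and $D_c := D(p_c)$; at the next candidate $p$ with $x := d_P(q,p)$, test the inequality $D_c + (x - x_c) \leq (1+\varepsilon)\, D(p)$. If it holds, $p_c$ still covers $p$ and we move on; otherwise insert $p$ as a new portal. Correctness is immediate from the selection rule, so the whole content is the portal count. Combining the failure condition $D_c + (x-x_c) > (1+\varepsilon) D(p)$ with the Lipschitz bound $D(p) \geq D_c - (x-x_c)$ gives, after one line of algebra,
\[
    x - x_c \;>\; \frac{\varepsilon\, D_c}{2+\varepsilon},
\]
so consecutive portals are separated by an amount proportional to the current distance $D_c$ from $t$. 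In particular, within the ``near'' band $D \in [r,(1+\varepsilon)r]$ the inequality $D_c \geq r$ forces at most $O(1/\varepsilon)$ portals, and in the ``far'' regime where $D(p)$ is comparable to $d_P(q,p)$ the same inequality together with $D_c \geq d_P(q,p_c) - r$ gives geometric spacing that again yields $O(1/\varepsilon)$ portals in this direction. Doubling for the two halves of $P$ gives the claimed $\calO(1/\varepsilon)$.

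The main obstacle is sharpening the count from the naive $O((\log n)/\varepsilon)$ that a plain ``$D_c$ doubles between portals'' analysis would give, down to $O(1/\varepsilon)$. I expect this to require a careful case split on how $D$ behaves along $P$: when $D$ is (approximately) monotone increasing away from $q$, the Lipschitz $+$ lower-bound argument above gives portals whose $x$-spacing grows like $\varepsilon D_c$, and the total count telescopes to $O(1/\varepsilon)$; when $D$ oscillates while staying $\geq r$, I would argue that the oscillation itself uses up ``slack'' in the Lipschitz bound, so the same telescoping still applies. This is essentially the argument of Thorup~\cite{thorup04} that Lemma \ref{lemm: portal} invokes, so as a fallback the lemma can simply be quoted from there.
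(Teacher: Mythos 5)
First, note that the paper itself does not prove Lemma~\ref{lemm: portal}: it is quoted from Thorup~\cite{thorup04}, so your fallback of citing it is exactly what the authors do. Your attempted self-contained proof, however, has a genuine gap precisely where you flag it, and the hand-wave does not close it. Your construction (greedy sweep outward from the closest vertex $q$, inserting a portal whenever the last one fails) is the right one, and the spacing bound $x - x_c > \varepsilon D_c/(2+\varepsilon)$ is correct. But that bound does not deliver $\mathcal{O}(1/\varepsilon)$. The sweep must continue until $q$ alone covers everything, which only happens once $x \geq (2+\varepsilon)r/\varepsilon$; with only $D_c \geq r$ the spacing bound gives $\mathcal{O}(1/\varepsilon^2)$ portals, and the ``geometric spacing'' refinement via $D_c \geq x_c - r$ gives growth ratio $1+\Theta(\varepsilon)$ over a multiplicative range $\Theta(1/\varepsilon)$, i.e.\ $\Theta(\varepsilon^{-1}\log(1/\varepsilon))$ portals --- not $\mathcal{O}(1/\varepsilon)$. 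The claim that oscillation ``uses up slack'' is the entire content of the lemma and is not argued.

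The missing idea is a monotone potential that converts the failure condition into a telescoping sum. On one half of $P$, write $x = d_P(q,p)$ and set $h(p) := D(p) - x$. Since $P$ is a shortest path, $x = d_G(q,p) \leq r + D(p)$, so $h(p) \geq -r$; also $h(q) = r$, and $h$ is non-increasing along the half because $D$ is $1$-Lipschitz in $d_P$. So $h$ decreases monotonically through a range of total length at most $2r$. Now the covering condition rewrites exactly as
\[
D_c + (x - x_c) \leq (1+\varepsilon)D(p)
\;\Longleftrightarrow\;
h(p_c) - h(p) \leq \varepsilon D(p),
\]
so every time your greedy sweep is forced to insert a new portal $p$, the potential has dropped by more than $\varepsilon D(p) \geq \varepsilon r$ since the previous portal. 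Hence at most $2/\varepsilon + 1$ portals are inserted per half, and $4/\varepsilon + \mathcal{O}(1)$ in total, with no case split on near/far bands or on the behaviour of $D$. This is the argument underlying Thorup's lemma; if you do not want to reproduce it, simply cite \cite{thorup04} as the paper does.
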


The above lemma leads to the following recursive procedure.

\begin{algorithm}[H]
\caption{\textsc{PlanarTPc-2} (planar graph $G$, terminals $T$)}
\begin{algorithmic}[1]
\If{$|V(G)| \leq 1$ or $T = \emptyset$}
	\Return $\emptyset$;
\EndIf
\State Set $\mathcal{B} = \emptyset$;
\State Compute a $2/3$-separator $\mathcal{C}$ consisting of shortest paths $P_1$ and $P_2$ and add them to $\mathcal{B}$.
\For{every terminal $t \in T$} 
\State Compute $\varepsilon$-covers $C(t,P_1)$ and $C(t,P_2)$.
\For{every portal $(t,p) \in C(t,P_1) \cup C(t,P_2)$}
\State Add the shortest path $\Pi_{t,p}$ to $\mathcal{B}$.
\EndFor
\EndFor
\State Let $(G_1,T_1)$ and $(G_2,T_2)$ be the resulting connected graphs from $G \setminus \mathcal{C}$, 
\Statex where $T_1$ and $T_2$ are disjoint subsets of the terminals $T$ induced by $\mathcal{C}$.
\Statex \texttt{// Note that all distances involving terminals from $\mathcal{C}$ are taken care of.}
\State \Return $\mathcal{B} \cup \bigcup_{i=1}^{2} \textsc{PlanarTPc-2}(G_i,T_i)$.
\end{algorithmic}
\label{algo: minor2}
\end{algorithm}

\begin{lemma}\label{lem:tpc-1-plus-ep}
Given a planar graph $G = (V,E, \ell)$ with $T \subset V$, \textsc{PlanarTPc-2}$(G,T)$ outputs an $(1+\varepsilon,\mathcal{O} (k \log k / \varepsilon))$-\emph{TPc} $\mathcal{P}'$ for $G$. 
\end{lemma}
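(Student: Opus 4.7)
The plan is to verify the two defining conditions of an $(\alpha,f(k))$-TPc separately: the stretch bound $\alpha=1+\varepsilon$ and the cardinality bound $f(k)=O(k\log k/\varepsilon)$. Both arguments analyze the recursion tree of \textsc{PlanarTPc-2}, using the Preprocessing Step to fix $|V(G)|=O(k^{4})$ at the root so that the $2/3$-separator property caps the recursion depth at $O(\log k)$.

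For the stretch, I would fix an arbitrary pair $(t,t')\in T\times T$ and follow the sequence of recursive calls that contain both $t$ and $t'$. At each such call on subgraph $G_\nu$ with separator $\mathcal{C}^\nu=P_1^\nu\cup P_2^\nu$, either $\mathcal{C}^\nu$ intersects the global shortest path $\Pi_{t,t'}$, or $\Pi_{t,t'}$ avoids $\mathcal{C}^\nu$ and so, as a connected path inside $G_\nu$, lies entirely in one connected component of $G_\nu\setminus\mathcal{C}^\nu$; in the latter case the recursion enters that component and the argument proceeds in the child. Because subgraph sizes strictly decrease, there is a first call $\nu$ at which $\mathcal{C}^\nu$ meets $\Pi_{t,t'}$ at some $w$, say $w\in P_1^\nu$. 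Applying Lemma~\ref{lemm: portal} I obtain portals $q\in C(t,P_1^\nu)$ and $q'\in C(t',P_1^\nu)$ satisfying
\[
d_G(t,q)+d_G(q,w)\le(1+\varepsilon)\,d_G(t,w),\qquad d_G(t',q')+d_G(q',w)\le(1+\varepsilon)\,d_G(t',w).
\]
Since $\Pi_{t,q}$, $\Pi_{q',t'}$ and the whole of $P_1^\nu$ are added to $\mathcal{B}$ at call $\nu$, the cover graph $H=\bigcup_{\Pi\in\mathcal{P}'}E(\Pi)$ contains a $t$-to-$t'$ walk of length at most
\[
d_G(t,q)+d_G(q,w)+d_G(w,q')+d_G(q',t')\le(1+\varepsilon)\bigl(d_G(t,w)+d_G(w,t')\bigr)=(1+\varepsilon)\,d_G(t,t'),
\]
using that $P_1^\nu$ is a shortest path and $w\in\Pi_{t,t'}$; the degenerate case $t\in\mathcal{C}^\nu$ or $t'\in\mathcal{C}^\nu$ is absorbed by taking $w$ to be that terminal itself.

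For the size, each active recursive call at $\nu$ adds exactly two separator paths plus, by Lemma~\ref{lemm: portal}, at most $O(1/\varepsilon)$ portal shortest paths per terminal in $T_\nu$. At any fixed depth the subproblems are vertex-disjoint and hence terminal-disjoint, so $\sum_\nu|T_\nu|\le k$ and the number of active calls at that depth is at most $k$. Summing across the $O(\log k)$ depths contributes $O(k\log k)$ separator paths and $O(k\log k/\varepsilon)$ portal paths, so $|\mathcal{P}'|=O(k\log k/\varepsilon)$. Coverage $T\subseteq V(H)$ is automatic since every terminal is an endpoint of a portal path added in any call where it is active.

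The main obstacle is justifying that the distances invoked in the $\varepsilon$-cover argument — $d_G(t,w)$, $d_G(t,q)$, and the $P_1^\nu$-subpath lengths — can be read consistently as distances in the original $G$ even when $\nu$ lies deep in the recursion on a subgraph. This follows from the observation that until the call $\nu$ where $\mathcal{C}^\nu$ first meets $\Pi_{t,t'}$ the entire path $\Pi_{t,t'}$ lives inside the current subgraph, so $d_G=d_{G_\nu}$ for all the quantities involved; this is the standard invariant underlying the distance-oracle constructions of Thorup~\cite{thorup04} and Kawarabayashi et al.~\cite{sommer11} on which Algorithm~\ref{algo: minor2} is modelled.
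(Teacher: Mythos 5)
Your proposal is correct and follows essentially the same route as the paper's proof: the $O(\log k)$ recursion depth from the preprocessing step and the $2/3$-separator, the per-level counting of at most $k$ active calls yielding $O(k\log k)$ separator paths and $O(k\log k/\varepsilon)$ portal paths, and the stretch bound obtained by locating a separator path that crosses $\Pi_{t,t'}$ and invoking the $\varepsilon$-cover inequality. Your identification of the \emph{first} crossing separator (rather than the paper's ``leafmost common node, then an ancestor's separator'' phrasing) and your explicit handling of the degenerate case and the $d_G=d_{G_\nu}$ invariant are minor presentational refinements of the same argument.
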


\begin{proof}
From the Preprocessing Step we know that $G$ has at most $\mathcal{O}(k^{4})$ vertices. Further,  recall that removing the vertices that belong to the shortest path separators from $G$ results into two graphs $G_1$ and $G_2$, whose size is at most $2/3 \cdot |G|$. Thus, there are at most $\mathcal{O}(\log k)$ levels of recursion for the above procedure.  

Let $\mathcal{P}'$ be the terminal path cover output by \text{PlanarTPc-2}$(G,T)$. We first bound the number of separator shortest paths added in Step 3. Note that at any level of the recursion there at most $k$ terminals and, thus the number of recursive calls per level is at most $k$. Since we added two paths per recursive call, we get that there are at most $\mathcal{O} (k \log k)$ paths overall.

We now continue with the counting or portals. Let $t \in T$ be any terminal and consider any recursive call applied on the current graph $(G',T')$. If $t \not\in T'$, then we simply ignore $t$. Otherwise, $t$ either belongs to one of the separator shortest paths in $G'$ or one of the partitions induced by the separators. In the first case, we know that $t$ is retained because we added $P_1$ and $P_2$ to $\mathcal{P'}$ and these are already counted. In the second case, using Lemma \ref{lemm: portal}, we add $\mathcal{O}(1/\varepsilon)$ shortest paths connecting portals from $C(t,P_1)$ and $C(t,P_2)$. Therefore, in any recursive call, we maintain at most $\mathcal{O}(1/\varepsilon)$ shortest paths per terminal. Since every terminal can participate in at most $\mathcal{O}(\log k)$ recursive calls, we get that the total number of portal-shortest paths is at most $\mathcal{O} (k \log k / \varepsilon)$. Combining both bounds, it follows that the size of $\mathcal{P}'$ is at most $\mathcal{O}(k \log k / \varepsilon)$. 

It remains to show the stretch guarantee of $\mathcal{P}'$. Let $R$ be the recursion tree of the algorithm, where every node corresponds to a recursive call. For any pair $t,t' \in T$, let $a \in V(R)$ associated with $(G_a, T_a)$ be the leafmost node such that $t,t' \in T_a$. Then, it follows that among all ancestors of $a$ in the tree $R$, there must exist a separator path $P_i$, $i=1,2$ that crosses $\Pi_{t,t'}$ and attains the minimum length. The stretch guarantee follows directly from (\ref{eqn: stretch}).
\end{proof}

\begin{theorem}\label{thm:planar-oneplus}
%Any planar graph $G=(V,E,\ell)$ with $T \subset V$ admits an $(1+\varepsilon, \widetilde{\mathcal{O}}(\frac{k^2}{\varepsilon^2}))$-\emph{DAM}.
Any planar graph $G=(V,E,\ell)$ with $T \subset V$ admits an $(1+\varepsilon, \widetilde{\mathcal{O}}((k/\varepsilon)^2)$-\emph{DAM}.
\end{theorem}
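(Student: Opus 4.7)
The plan is to combine Lemma~\ref{lem:tpc-1-plus-ep} with the general minor-sparsification machinery in Lemma~\ref{lemma: branching}. First, I would invoke the Preprocessing Step from the beginning of Section~\ref{sect:UB-planar} to replace $G$ by an exact $(1,\mathcal{O}(k^4))$-DAM. This preserves terminal distances on the nose and, crucially, bounds $|V(G)|$ by $\mathcal{O}(k^4)$, which is what lets the recursion depth inside \textsc{PlanarTPc-2} be $\mathcal{O}(\log k)$ and Lemma~\ref{lem:tpc-1-plus-ep} apply with the stated size bound.

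Second, I would run \textsc{PlanarTPc-2}$(G,T)$ on the preprocessed graph. By Lemma~\ref{lem:tpc-1-plus-ep} this produces an $(1+\varepsilon, \mathcal{O}(k\log k/\varepsilon))$-TPc $\mathcal{P}'$ of $G$. In the notation of Lemma~\ref{lemma: branching}, we have stretch $\alpha = 1+\varepsilon$ and path-cover size $f(k) = \mathcal{O}(k\log k/\varepsilon)$.

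Third, I would feed $\mathcal{P}'$ into \textsc{MinorSparsifier}$(G,T,\mathcal{P}')$ from Section~\ref{sect:UB-general}. By Lemma~\ref{lemma: branching}, the result is an $(\alpha, f(k)^2)$-DAM of $G$, i.e., an $(1+\varepsilon, \mathcal{O}((k\log k/\varepsilon)^2))$-DAM. Absorbing the $\log^2 k$ factor into $\widetilde{\mathcal{O}}(\cdot)$, this is exactly a $(1+\varepsilon, \widetilde{\mathcal{O}}((k/\varepsilon)^2))$-DAM, as desired.

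The theorem is essentially a one-line composition of two preceding lemmas, so no real obstacle arises. The only technical point to verify is that the Preprocessing Step is safe to chain with \textsc{PlanarTPc-2}: since it outputs a $1$-DAM, all terminal distances are unchanged, so the $(1+\varepsilon)$ stretch produced by \textsc{PlanarTPc-2} does not compound with any earlier stretch and the final bound is w.r.t.~distances in the \emph{original} graph.
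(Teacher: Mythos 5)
Your proposal is correct and matches the paper's (implicit) proof exactly: the theorem follows by feeding the $(1+\varepsilon,\mathcal{O}(k\log k/\varepsilon))$-TPc from Lemma~\ref{lem:tpc-1-plus-ep} into \textsc{MinorSparsifier} and applying Lemma~\ref{lemma: branching}, with the Preprocessing Step justifying the $\mathcal{O}(\log k)$ recursion depth. Your observation that the preprocessing is a $1$-DAM and hence does not compound the stretch is the right technical check.
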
 

\section{Discussion and Open Problems}

%In this paper, we have shown a number of upper and lower bounds on the number of non-terminals needed in the minor to achieve a low distortion.
We note that there remain gaps between some of the best upper and lower bounds, e.g.,
for general graphs and distortion $3-\epsilon$, the lower bound is $\Omega(k^{6/5})$,
while for distortion $3$, our upper bound is $\calO(k^3)$.
Improving the bounds is an interesting open problem.

%For general graphs and some fixed distortion, there are still huge gaps between the best upper and lower bound on the size of the minor (see the table below). 

Our techniques for showing upper bounds rely heavily on the spanner reduction.
For planar graphs, Krauthgamer et al.~\cite{distancepreserving} showed that to achieve distortion $1+o(1)$, $\Omega(k^2)$ non-terminals are needed;
we bypass the spanner reduction to construct an $(1+\varepsilon,\widetilde{\mathcal{O}}(k/\varepsilon)^{2})$-DAM,
which is tight up to a poly-logarithmic factor.
It is an interesting open question on whether similar guarantees can be achieved for general graphs.

\section*{Acknowledgements}
The authors thank Veronika Loitzenbauer and Harald R{\"{a}}cke for the helpful discussions.

The research leading to these results has received funding from the European Research Council under the European Union's Seventh Framework Programme (FP/2007-2013) / ERC Grant Agreement no.~340506. The research leading to these results has received funding from the European Research Council under the European Union's Seventh Framework Programme (FP/2007-2013) under grant agreement no.~317532.

\printbibliography[heading=bibintoc] % Make bibliography show up in table of contents

 \infull{\clearpage
 \section*{Appendix}
 \appendix
\section{Missing Proofs in Section \ref{sect:lower-bound}} \label{app:basic-of-minor}

Let $G$ be an output graph from the black-box.
In any minor $H$ of $G$, we say a super-node is of Type-A if $S(u)$ contains only non-terminals in $G$;
any other super-node $u$, for which $S(u)$ contains \emph{exactly} one terminal, is of Type-B. Here are two simple facts:
\begin{enumerate}
\item[(a)] If $u$ is of Type-A, since $G[S(u)]$ is connected, the non-terminals in $S(u)$ must belong to the same group.
\item[(b)] If $u$ is of Type-B, let $t$ be the terminal in $S(u)$. If $S(u)$ contains a vertex from some group $R$, then $t\in R$.
\end{enumerate}

%Note the following simple observation: two terminals are in the same group if and only if
%there exists a group
%
%Suppose two terminals are in the same group (which is unique by definition of Steiner-system),
%
%By the definition of Steiner-system, given any pair of terminals, there is at most one group that contains both of them.
%Suppose two terminals are in the same group,
%
%Note the following simple observation: for any $G$ constructed using the black-box reduction,
%two terminals are in the same group if and only if there is a simple path connecting the two terminals that contains no other terminal.

\begin{pfof}{Lemma \ref{lem:unique-group-for-edge}}
Existence of $R$ is easy to prove by a simple induction on the minor operation sequence that generates $H$ from $G$.
To show the uniqueness, we proceed to a case analysis.
In the first case, either $u_1$ or $u_2$ is of Type-A. Then the uniqueness is trivial by fact (a).

In the second case, both $u_1,u_2$ are of Type-B. For $i=1,2$, let $t_i$ be the terminal in $S(u_i)$.
Suppose there are two groups $R_a,R_b$ that intersect both $S(u_1)$ and $S(u_2)$.
Then by fact (b), $t_1,t_2$ are in both $R_a$ and $R_b$, a contradiction.
\end{pfof}

\begin{pfof}{Lemma \ref{lem:interchange-at-terminal}}
Since $S(u_2)$ contains vertices from both $R_1$ and $R_2$, $u_2$ must be of Type-B, i.e., $S(u_2)$ contains exactly one terminal $t$.
By fact (b), $t$ is in both $R_1$ and $R_2$.
\end{pfof}

\begin{pfof}{Lemma \ref{lm: randomGupta}}
Due to the the simple structure of $S$, we can easily observe that there are exactly $k$ different minors of $S$. Specifically, for every $t \in T$, let $H_t$ be the star graph (now only on terminals) obtained by contracting the edge $(w,t)$ in $G$. Further, let $\pi$ be some probability distribution on $\mathcal{H} = \{H_t : t \in T\}$. The expected distortion for embedding $S$ into $\pi$ is
$$
	\max_{t',t'' \in T} \frac{\mathbb{E}_{\pi}\left[{d_{H_t}(t',t'')}\right]}{d_{S}(t',t'')}
$$
Let us have a closer look at the above relation. First, note that for every $t',t'' \in T$, $d_S(t',t'') = 2$ and every edge of $H_t$ must set weights of size $2$ to all of its edges because of the domination property of terminal distances. This implies that terminal pairs that are connected in some $H_t$ do not suffer any distortion, while those that are not connected suffer a distortion of $2$. Furthermore, for any pair $t',t'' \in T$, the probability that $t'$ and $t''$ are connected is $p_{t'} + p_{t''}$. Combining the above facts we get that for every $t',t'' \in T$
$$
	 \frac{\mathbb{E}_{\pi}\left[{d_{H_t}(t',t'')}\right]}{d_{S}(t',t'')} =  \pi_{t'} + \pi_{t''} + \sum_{t \in T \setminus \{t,t'\}} 2\pi_{t} = D(t',t'')
$$

Now, we use a standard result due to Charikar et al.~\cite{charikar98} that computes the probability distribution that minimizes the expected distortion using the following linear program
\begin{center}
\begin{tabular}{rcll}
$\min$ & $\lambda$  & & \\
\rule{0pt}{3ex}$D(t',t'')$ & $\leq$ & $\lambda$ & $\forall t',t'' \in T$   \\
\rule{0pt}{3ex}$\displaystyle \sum_{H_{t} \in \mathcal{H}} \pi_{t}$ & $=$ & $1$ & \\
\rule{0pt}{2ex}$\pi_t$ & $\geq$ & $0$ & $ \forall H_{t} \in \mathcal{H}$ 
\end{tabular}
\end{center}
By an easy inspection, one can observe that the above LP is symmetric, i.e any permutation of the variables $\pi_t$ leaves the feasible region unchanged. Consequently, invoking a known result about symmetric LPs (see~\cite{bodi}), we get that there exists an optimal solution with all $\pi_t$'s being equal to each other. The latter implies that for every $t \in T$, we have $\pi_t = 1/k$, which in turn gives that $\lambda = 2(1-1/k)$, what we wanted to show. 
\end{pfof}
%\begin{pfof}{Claim \ref{claim:contracted-whp}}
%Suppose that for the sake of contradiction that the claim is not true. Then every non-terminal of $G$ is contracted with probability strictly less than $1 - s^{2}/2c_1$. This implies that every non-terminal of $G$ is \textit{not} contracted with probability at least $s^{2}/2c_1$, and hence
%$$
%	\mathbb{E}_{\pi}[  \text{number of non-terminals} ] > \frac{s^2}{2c_1} \cdot \frac{2\binom{k}{2}}{s^{2}} = \frac{\binom{k}{2}}{c_1}.
%$$ 
%The inequality along with the probabilistic method imply that there exists a minor $H$ in the support of $\pi$ with at least $\binom{k}{2}/c_1$ non-terminals, thus violating the properties of the members of $\mathcal{L}$, which leads to a contradiction. 
%\end{pfof}
\subsection{A note on contraction-based Vertex Flow Sparsifiers}
A closely related concept to Distance Approximating Minors is Vertex Flow Sparsification. Roughly speaking, given a large network $G$ with some specified terminal set $T$, we want to construct a smaller graph $H$ that contains the terminal set $T$ and preserves all multi-commodity flows from $G$ up to some factor $q \geq 1$. In contrast to distance sparsifiers, here we do not pose any constraint about the structure of the sparsifier $H$.

However, in the setting where the sparsifiers lies only on the terminals, i.e. $V(H) = T$, it is customary to construct sparsifiers that are convex combination of minors of $G$ (see~\cite{englert10}). This leads to the concept of contraction-based vertex flow sparsifiers, i.e. sparsifiers that are convex combination of graphs that were obtained by performing edge contractions in $G$.

We note that a similar lower bound to that of Theorem \ref{thm:rand-lower-bound-star} can be obtained for contraction-based vertex flow sparsifiers. The only modification needed is Lemma \ref{lm: randomGupta}. We state its analogue below. 
\begin{lemma}
Let $S=(T \cup \{v\},E)$ be an unweighted star with $k \geq 3$ terminals. Then, for any probability distribution over minors of $S$,
there exists an edge in $S$ with load at least $2(1-1/k)$.
\end{lemma}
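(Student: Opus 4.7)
The strategy is to mirror the proof of Lemma \ref{lm: randomGupta} almost line-for-line, replacing ``expected distortion of a terminal pair'' with ``expected load of an edge of $S$.'' First I would enumerate the feasible minors of $S$ on vertex set $T$: any such minor must arise by contracting exactly one edge $(v,t)$, since contracting zero edges leaves $v$ in the minor and contracting two edges would merge two terminals. Thus the family of legal minors consists of exactly $k$ graphs $H_1,\dots,H_k$, where $H_t$ is itself a star whose center is $t$ and whose leaves are the remaining $k-1$ terminals.

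Next, for any probability distribution $\pi=(p_1,\dots,p_k)$ over $\{H_t\}_{t\in T}$, I would analyse the expected load on each edge $e_t := (v,t)$ of $S$. The key structural observation --- the flow analogue of the doubled minor-edge length used in Lemma \ref{lm: randomGupta} --- is that inside the minor $H_s$, \emph{every} minor edge $(s,t')$ has to be realised as the length-$2$ path $s\to v\to t'$ in $S$, and so the contracted edge $e_s$ is traversed by all $k-1$ minor edges of $H_s$, whereas each non-contracted edge $e_t$ with $t\neq s$ is traversed by exactly one minor edge. Under the natural definition of load in the contraction-based flow-sparsifier setting, this yields an expected load on $e_t$ of the form $1+p_t(k-2)$.

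To finish, I would sum the expected loads over the $k$ edges $e_1,\dots,e_k$ of $S$:
\[
\sum_{t\in T}\bigl(1 + p_t(k-2)\bigr) \;=\; k + (k-2)\sum_t p_t \;=\; 2(k-1),
\]
which is \emph{independent of $\pi$}, so the maximum load is always at least the average $2(k-1)/k = 2(1-1/k)$. Equivalently, one can close the proof with the symmetric-LP argument used in Lemma \ref{lm: randomGupta}: the min-max LP in $\{p_t\}$ is invariant under every relabelling of terminals, so there is an optimal solution with $p_t = 1/k$ for every $t$, attaining load exactly $2(1-1/k)$ on each edge. The main subtlety I expect, as in any flow analogue of a distance bound, is pinning down the precise notion of ``load'' so that the factor-$2$ blow-up per non-contracted minor appears cleanly; once that formalisation is fixed, everything else is a direct transfer of the structural pieces --- the enumeration of the $k$ minors and the averaging/symmetry argument --- from the proof of Lemma \ref{lm: randomGupta}.
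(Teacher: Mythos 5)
The paper itself does not prove this lemma --- it defers to an external reference --- so there is no in-paper argument to compare against line by line; what follows is an assessment of your proof on its own merits. Your argument is correct and is precisely the flow analogue of the paper's proof of Lemma \ref{lm: randomGupta}. The enumeration of the $k$ feasible minors $H_t$ is right (only contractions are allowed, $v$ must be merged into exactly one terminal, and no two terminals may be merged), and the load computation is right once one fixes the standard notion of load for contraction-based sparsifiers, namely the congestion incurred when the convex combination $\sum_{t} p_t H_t$ is embedded back into $S$: each unit-capacity edge $(s,t')$ of $H_s$ routes along the unique path $s \to v \to t'$, so edge $(v,t)$ carries load $k-1$ under $H_t$ and load $1$ under every $H_s$ with $s \neq t$, giving expected load $1 + p_t(k-2)$. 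Your closing step is in fact cleaner than the paper's closing of Lemma \ref{lm: randomGupta}: summing the expected loads to $2(k-1)$ and averaging over the $k$ edges of $S$ immediately yields a maximum of at least $2(1-1/k)$, with no need for the symmetric-LP argument (which you correctly note gives the same answer, $p_t = 1/k$ for all $t$). The one thing you should make explicit in a final write-up is the definition of load as embedding congestion, since the lemma statement leaves it implicit; with that pinned down, the proof is complete.
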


The proof of the above Lemma appears in~\cite{gramoz}. However, we omit further details since this is beyond the scope of this work. 

\section{Lower Bounds for SPR Problem on Trees} \label{app:explicit-tree-value}

Chan et al.~\cite{chan} considered unweighted complete binary tree with height $h$,
and showed that as $h\nearrow \infty$, the minimum distortion of SPR problem tends to $8$.
However, it is not clear from their proof how the minimum distortion depends on $h$,
which is needed for Theorem \ref{thm:lower-bound-distortion-25}.
In this section, we use their ideas on unweighted complete ternary trees to derive such a dependence.

\newcommand{\drl}{\text{\textsf{DRL}}}
\newcommand{\cs}{\mathcal{S}}

Let $T_h$ denote a unweighted complete ternary tree of height $h$, where the leaves are the terminals.
Let $\cs_h$ denote the collection of all minors of $T_h$.
For each of its node $u$, let $T(u)$ denote the sub-tree rooted at $u$, and let $t(u)$ denote the terminal which $u$ contracts into.
Denote the root by $r$, and its three children by $x,y,z$.
Without loss of generality, we assume that $r$ is contracted into a terminal $t_r$ in $T(x)$, i.e., $t(r) = t_r$.
Then, let\footnote{Formally speaking,
there can be infinitely many minors (with weights) of $T_h$ with distortion at most $\alpha$,
so we should use $\inf$ instead of $\min$ in the definition.
Yet, for each fixed minor without weight, the standard restriction~\cite[Definition 1.3]{KammaKN2015} is the optimal weight assignment.
Since there are only finitely many minors of $T_h$ (without weights), we can replace $\inf$ by $\min$.}
$$\drl(h,\alpha) := \min_{H\in \cs_h,\text{ distortion }\leq \alpha}~~\max_{\text{terminal }t\in T(y)\cup T(z)}~~d_H(t_r,t).$$
If there is not such a minor $H$, then $\drl(h,\alpha) = +\infty$ by default.
Note that when $\alpha$ increases, $\drl(h,\alpha)$ decreases.

\begin{figure}[h]
\begin{center}
\includegraphics[scale=0.77]{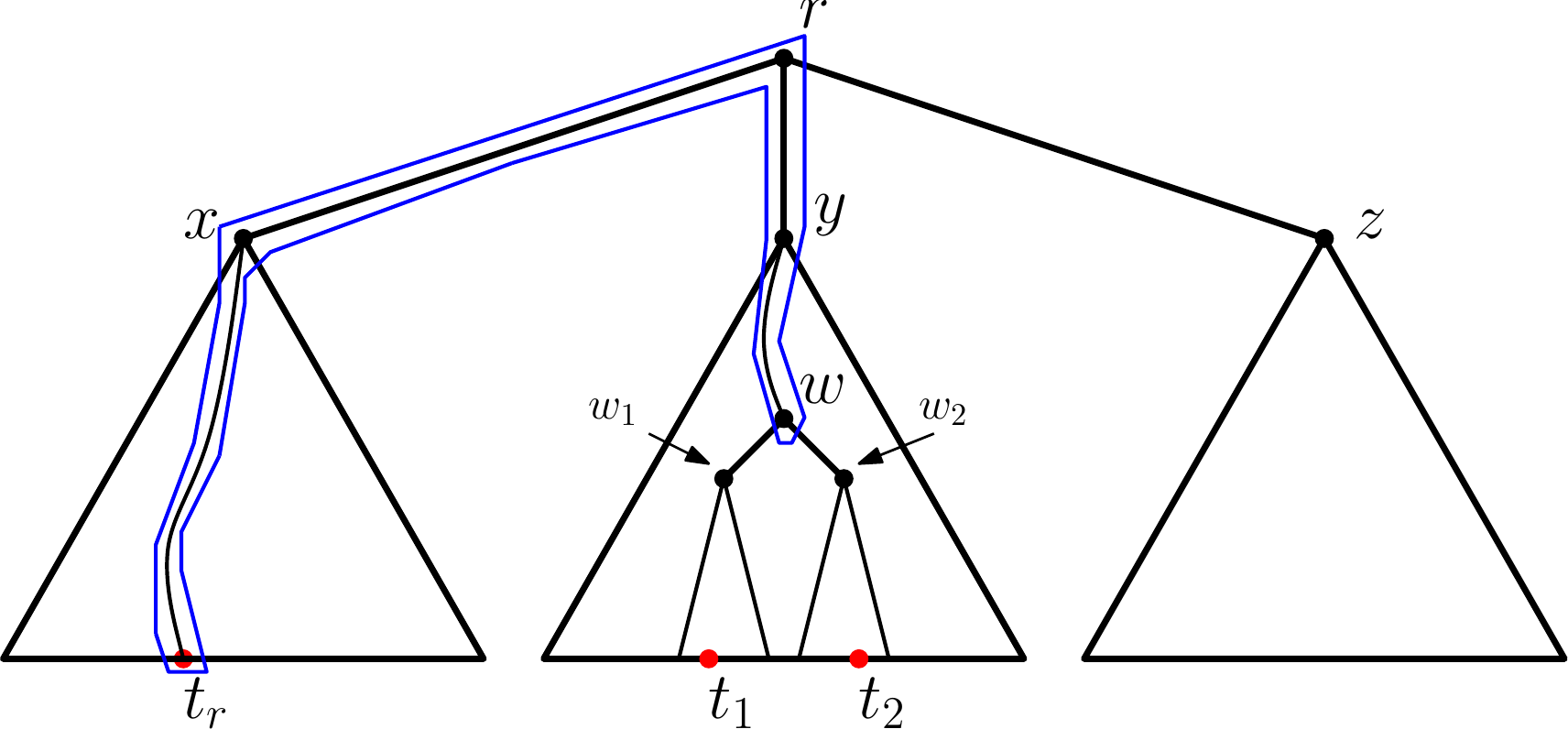}
\caption{The definitions of nodes in $T_h$. The blue polygon includes all nodes in $S(t_r) = S(r) = S(w)$.}
\label{fig:tree}
\end{center}
\end{figure}

Let $H\in\cs_h$ be a minor of $T_h$ with distortion $\leq \alpha$.
Let $w$ denote a deepest node in $T(y)\cup T(z)\cup \{r\}$ such that $t(w) = t_r$.
Let $\ell$ be the distance between $r$ and $w$ in $T_h$.
Let $w_1,w_2$ be two children of $w$ which are not in $T(x)$.
See Figure \ref{fig:tree} above for a picture of all the above definitions.

Then, by the definition of $\drl$, there exists two terminals $t_1\in T(w_1)$ and $t_2\in T(w_2)$ such that for $i=1,2$,
$d_H(t_i,t(w_i)) \geq \drl(h-\ell-1,\alpha)$.
Also, for $i=1,2$, $d_H(t(w_i),t_r) \geq d_{T_h}(t(w_i),t_r) = 2h$.
Hence,
\begin{align*}
	d_H(t_1,t_2) &~ =~  d_H(t_1,t(w_1)) + d_H(t(w_1),t_r) + d_H(t_r,t(w_2)) + d_H(t(w_2),t_2)  \\
	&~ \geq ~ 2\left[\drl(h-\ell-1,\alpha) + 2h\right].
\end{align*}
%$$d_H(t_1,t_2) = d_H(t_1,t(w_1)) + d_H(t(w_1),t_r) + d_H(t_r,t(w_2)) + d_H(t(w_2),t_2) \geq 2\left[\drl(h-\ell-1,\alpha) + 2h\right].$$

Recall that $d_{T_h}(t_1,t_2) = 2(h-\ell)$.
Hence, the distortion w.r.t.~$t_1,t_2$ is at least
$$\frac{\drl(h-\ell-1,\alpha) + 2h}{h-\ell}.$$
This quantity cannot be larger than $\alpha$.

We are ready to give a recurrence relation that bounds $\drl(h,\alpha)$ from below:
\begin{equation}\label{eq:drl-recur}
\drl(h,\alpha) \geq \min_{\ell\in [0,h-1]:~\frac{\drl(h-\ell-1,\alpha) + 2h}{h-\ell} \leq \alpha} \drl(h-\ell-1,\alpha) + 2h,
\end{equation}
while the initial conditions are: $\forall \alpha\geq 1,~\drl(0,\alpha) = 0$, and
$$\drl(1,\alpha) =
\begin{cases}
+\infty, & \text{if }\alpha < 2;\\
2, & \text{if }\alpha \geq 2.
\end{cases}$$

Let $\alpha_h$ denote the minimum distortion of $T_h$.
By letting $\ell$ runs over all possible distances between $r$ and $w$, we obtain the following lower bound on $\alpha_h$:
\begin{equation}\label{eq:alpha-recur}
\alpha_h \geq \min_\alpha~~\max\left\{\alpha , \left(\min_{\ell\in [0,h-1]} \frac{\drl(h-\ell-1,\alpha) + 2h}{h-\ell}\right)\right\}.
\end{equation}

We compute the lower bounds in \eqref{eq:drl-recur} and \eqref{eq:alpha-recur} using math software.
In the table below, we give the lower bounds on $\alpha_h$ for $h\in [3,10]$ and $h=1000$.

\smallskip

%By a simple induction using the recurrence \eqref{eq:drl-recur}, it is easy to prove that
%$$Z(\alpha) := \min_{\ell\in [0,h-1]} \frac{\drl(h-\ell-1,\alpha) + 2h}{h-\ell}$$
%is always a rational number (if not infinity),
%and all discontinuous points of $Z(\alpha)$ are rational numbers. Also, when $\alpha$ increases, $Z(\alpha)$ decreases.
%So the lower bound on $\alpha_h$ can be computed by bisection method up to some sufficiently small precision,
%and then round the approximate solution to a rational number, which is the \emph{exact} value of the lower bound.
%We give a few lower bounds on $\alpha_h$ in the table below.
\begin{center}
\begin{tabular}{|c||c|c|c|c|c|c|c|}
\hline
$\mathbf{h}$ & 2 & 3,4 & 5 & 6,7 & 8 & 9,10 & 1000\\
\hline
$\mathbf{\alpha_h}$ & 3 & 4 & $22/5 = 4.40$ & $14/3 \approx 4.66$ & 5 & $26/5 = 5.20$ & $257/35 \approx 7.34$\\
\hline
\end{tabular}
\end{center}
\section{Missing Proofs in Section \ref{sect:UB-general}} \label{app:branching}

\begin{pfof}{Lemma \ref{lemma: branching}}
First, it is clear that the union over paths of $\mathcal{P}' \subset \mathcal{P}$ is a minor of $G$ (this can be alternatively viewed as deleting non-terminals and edges that do not participate in any of the shortest paths in $\mathcal{P}'$). Further, the algorithm performs only edge contractions. Thus, the produced graph $H$ is a minor of $G$. 

Since contracting edges incident to non-terminals of degree two does not affect any distance in $H$, the distortion guarantee follows directly from that of the cover $\mathcal{P}'$. Thus, it only remains to show the bound on the size of $H$.

To this end, consider any two paths $\Pi, \Pi'$ from $\mathcal{P}'$. From Lemma \ref{lemma: elkin}, we know that $\Pi$ and $\Pi'$ branch in at most two vertices. Let $u_1$ and $u_2$ denote such vertices. Due to the tie-breaking scheme in $G$, we know that the shortest path $\Pi_{u_1,u_2}$ is unique, and thus it must be shared by both $\Pi$ and $\Pi'$. The latter implies that every vertex in the subpath must have degree degree exactly $2$. Therefore, the only non-terminals in $\Pi \cup \Pi'$ are vertices $u_1$ and $u_2$, since non-terminals of degree two are removed from the edge contractions performed in the algorithm. 

There are $\mathcal{O} (f(k)^2)$ pairs of shortest paths from $\mathcal{P}'$, each having at most $2$ non-terminals. Hence, the number of non-terminals in $H$ is $\mathcal{O} (f(k)^2)$.
\end{pfof}

\subsection{Proof Sketch of Theorem \ref{thm:treewidth}}\label{app:treewidth}

Next, we present better guarantees for bounded treewidth graphs.
These improvements make crucial use of the fact that such graphs admit small separators:
given a graph $G$ of bounded treewidth $p$ and any nonnegative vertex weight function $w(\cdot)$, there exists a set $S \subset V(G)$ of at most $p+1$ vertices
whose removal separates the graph into two connected components, $G_1$ and $G_2$, each with $w(V(G_i)) \leq 2/3 w(V(G))$ (see \cite{bodlaender95}).

Krauthgamer et al.~\cite{distancepreserving} use the above fact to construct an $(1, \mathcal{O}(p^{3}k))$-DAM for graphs of treewidth at most $p$.
We show that with two modifications, their algorithm can be extended for the constructions of DAMs.
The first modification is Step 2 of the algorithm \textsc{ReduceGraphTW} in \cite{distancepreserving}.
For any integer $q\geq 1$, we replace their call to \textsc{ReduceGraphNaive}$(H,T \cup B)$\footnote{We remark that they use $R$ to denote the set of terminals.}
by our procedure \textsc{MinorSparsifier}$(H,T \cup B, \mathcal{P}')$, where $ \mathcal{P}'$ is a $(2q-1, \mathcal{O}(p^{1+1/q}))$-TPc of $G$.

The second modification is a generalization of Lemma 4.2 in \cite{distancepreserving}. The main idea is to use the small separator set $S$ to decompose the graph into smaller almost-disjoint graphs $G_1$ and $G_2$, compute their DAMs recursively, and then combine them using the separator $S$ into a DAM of $G$. This implies that the separator $S$ must belong to each $G_i$, i.e. all non-terminal vertices of $S$ must be counted as additional terminals in each $G_i$. Below we give a formal definition of this decomposition/composition process.
%The small separator set $S$ allows us to decompose $G$ into two smaller graphs, which we can work on recursively, and at the end we \emph{knit} them with the separator vertices. It is not hard to see that if the smaller graphs have distortions $\alpha_1,\alpha_2$, then after the knitting, the distortion is at most $\max\{\alpha_1,\alpha_2\}$.

%The main idea lies in decomposing the graph into smaller networks, say $G_1$ and $G_2$, computing independently minors for such graphs, and then combining the minors back into a minor of $G$. 
Let $G_1 = (V_1, E_1, \ell_1)$ and $G_2 = (V_2,E_2, \ell_2)$ be graphs on disjoint sets of non-terminals, having terminal sets $T_1 = \{s_1,s_2,\ldots,s_{a_1}\}$ and $T_2 = \{t_1,t_2,\ldots,t_{a_2}\}$, respectively. Further, let $\phi(s_i)=t_i$, for all $i=1,\ldots,c$ be an one-to-one correspondence between some subset of $T_1$ and $T_2$ (this correspondence is among the separator vertices). The \textit{$\phi$-merge} (or $2$-sum) of $G_1$ and $G_2$ is the graph $G = (V,E,\ell)$ with terminal set $T = T_1 \cup \{t_{c+1},\ldots,t_{a_2}\}$ formed by identifying the terminals $s_i$ and $t_i$, for all $i=1,\ldots,c$, where $\ell(e) = \min\{\ell_1(e),\ell_2(e)\}$ (assuming infinite length when $\ell_i(e)$ is undefined). We denote this operation by $ G := G_1 \oplus_{\phi} G_2$.

Below we state the main lemma whose proof goes along the lines of \cite[Lemma 4.2]{distancepreserving}.

\begin{lemma} Let $G = G_1 \oplus_{\phi} G_2$. For $j=\{1,2\}$, let $H_j$ be an $(\alpha_j,f(a_j))$-\emph{DAM} for $G_j$. Then the graph $H = H_1 \oplus_{\phi} H_2$ is an $(\max\{\alpha_1,\alpha_2\}, f(a_1) + f(a_2))$-\emph{DAM} of G.  
\end{lemma}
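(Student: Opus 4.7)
The plan is to verify three things in turn: (i) $H$ is a minor of $G$, (ii) $H$ has at most $f(a_1)+f(a_2)$ non-terminals, and (iii) for all $t,t'\in T$, $d_G(t,t')\le d_H(t,t')\le \max\{\alpha_1,\alpha_2\}\cdot d_G(t,t')$. The first two are mostly bookkeeping. Each $H_j$ arises from $G_j$ via a sequence of edge deletions and edge contractions that neither deletes a terminal nor contracts two terminals together. Because the vertices identified by $\phi$ are terminals in \emph{both} $G_1$ and $G_2$, each such operation can be lifted to the corresponding operation on $G=G_1\oplus_\phi G_2$ without affecting any terminal of $G$, so $H=H_1\oplus_\phi H_2$ is indeed a minor of $G$. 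Since the non-terminal sets of $H_1$ and $H_2$ are disjoint by hypothesis, $H$ has exactly $|V(H_1)\setminus T_1|+|V(H_2)\setminus T_2|\le f(a_1)+f(a_2)$ non-terminals.

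For the upper bound on distances, I would use a \emph{path decomposition at separator terminals}. Fix $t,t'\in T$ and let $P$ be a shortest $t$-to-$t'$ path in $G$. Every edge of $P$ is an edge of $G_1$ or of $G_2$, so $P$ can be split into maximal subpaths $P=P_1 P_2\cdots P_m$ with each $P_j$ entirely in some $G_{i_j}$. The split points are vertices incident to edges from both $G_1$ and $G_2$, and hence must be identified vertices of $\phi$, which are terminals of both $G_1$ and $G_2$ (and of $G$). Thus each $P_j$ runs between two terminals of $G_{i_j}$, so the DAM property for $H_{i_j}$ gives a path $Q_j$ in $H_{i_j}$ with the same endpoints and length at most $\alpha_{i_j}|P_j|$. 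Concatenating the $Q_j$'s in $H$ (which is legal because consecutive $Q_j$'s meet at a common separator terminal) yields a $t$-to-$t'$ walk of total length at most $\max\{\alpha_1,\alpha_2\}\sum_j |P_j|=\max\{\alpha_1,\alpha_2\}\cdot d_G(t,t')$.

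The lower bound $d_G(t,t')\le d_H(t,t')$ is proved by the mirror-image argument. Take any $t$-to-$t'$ path $Q$ in $H$, and decompose $Q=Q_1\cdots Q_m$ at the shared terminals where it crosses between $H_1$ and $H_2$; each $Q_j$ lies in $H_{i_j}$ and runs between two terminals of $G_{i_j}$. The lower-bound direction of the DAM property for $H_{i_j}$ gives $|Q_j|\ge d_{G_{i_j}}(\text{endpoints of }Q_j)$. Since every edge of $G_{i_j}$ is also an edge of $G$, we have $d_G(u,v)\le d_{G_{i_j}}(u,v)$ for any $u,v\in V(G_{i_j})$. Applying the triangle inequality in $G$ to the sequence of endpoints of the $Q_j$'s,
\[
 d_G(t,t')\ \le\ \sum_j d_G(\text{endpoints of }Q_j)\ \le\ \sum_j d_{G_{i_j}}(\text{endpoints of }Q_j)\ \le\ \sum_j |Q_j|\ =\ |Q|.
\]
Taking the infimum over $Q$ gives $d_G(t,t')\le d_H(t,t')$.

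The main obstacle I expect is purely notational: one has to be scrupulous about the $\phi$-merge operation, namely that shared (separator) terminals are a single vertex in $G$, $H_1$, $H_2$, and $H$, and that the natural map between walks in $G$ and alternating walks in $G_1, G_2$ (and similarly for $H$) respects this identification. Once that is set up, both directions of the distance bound reduce to the same decomposition-and-recombination scheme, and the lemma follows immediately.
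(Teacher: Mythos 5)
Your proof is correct and takes exactly the approach the paper intends: the paper does not write out a proof but defers to Lemma~4.2 of Krauthgamer et al.~\cite{distancepreserving}, which is precisely this decompose-at-separator-terminals argument run in both directions. The one detail worth tightening is the rule $\ell(e)=\min\{\ell_1(e),\ell_2(e)\}$ for edges present in both halves: when you split a path of $G$ (resp.\ of $H$) into subpaths, assign each such edge to the side on which it attains the minimum, so that the length of each subpath measured in $G$ (resp.\ $H$) coincides with its length in the corresponding $G_{i_j}$ (resp.\ $H_{i_j}$); this is the ``purely notational'' obstacle you anticipated and it is resolved exactly as you suggest.
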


In ~\cite{distancepreserving} it is shown that the size of the minor returned by the algorithm \textsc{ReduceGraphTW} is bounded by the number of leaves the in the recursion tree of the algorithm. Further, they prove that there are at most $\calO(k/p)$ such leaves. Plugging our bounds from the modification of Step 2 along with the above lemma yields Theorem \ref{thm:treewidth}.

}

\end{document}